\newtheorem{theorem}{Theorem}
\newtheorem{observation}[theorem]{Observation}
\newtheorem{lemma}[theorem]{Lemma}
\newtheorem{corollary}[theorem]{Corollary}
\newcommand{\vtwo}[1]{{\color{black} #1}}
\DeclareMathOperator{\Tr}{Tr}
\DeclareMathSymbol{\shortminus}{\mathbin}{AMSa}{"39}
\definecolor{tensorcolor}{rgb}{0.65,0.77,0.95}
\definecolor{btensorcolor}{rgb}{0.65,0.50,0.69}
\definecolor{whitetensorcolor}{rgb}{0.93,0.93,0.93}
\definecolor{gtensorcolor}{rgb}{0.6,0.8,0.5}
\newcommand\singledx{1.8}
\newcommand{\HollowDot}[1]{
	\begin{scope}[shift={(#1)}]
        \filldraw (0,0) circle (6pt);
        \filldraw[color=white] (0,0) circle (3pt);
	\end{scope}
}
\newcommand{\HollowRedDot}[1]{
	\begin{scope}[shift={(#1)}]
        \filldraw[color=red] (0,0) circle (6pt);
        \filldraw[color=white] (0,0) circle (3pt);
	\end{scope}
}
\newcommand{\Arrow}[3]{
	\begin{scope}[shift={(#1)}]
        \ifnum#3=0
            \draw[very thick] (-#2,-#2) -- (0,0);
            \draw[very thick] (#2,-#2) -- (0,0);
        \fi
        \ifnum#3=1
            \draw[very thick] (-#2,#2) -- (0,0);
            \draw[very thick] (-#2,-#2) -- (0,0);
        \fi
        \ifnum#3=2
            \draw[very thick] (#2,#2) -- (0,0);
            \draw[very thick] (-#2,#2) -- (0,0);
        \fi
        \ifnum#3=3
            \draw[very thick] (#2,-#2) -- (0,0);
            \draw[very thick] (#2,#2) -- (0,0);
        \fi
	\end{scope}
}
\newcommand{\RedArrow}[3]{
	\begin{scope}[shift={(#1)}]
        \ifnum#3=0
            \draw[very thick, draw=red] (-#2,-#2) -- (0,0);
            \draw[very thick, draw=red] (#2,-#2) -- (0,0);
        \fi
        \ifnum#3=1
            \draw[very thick, draw=red] (-#2,#2) -- (0,0);
            \draw[very thick, draw=red] (-#2,-#2) -- (0,0);
        \fi
        \ifnum#3=2
            \draw[very thick, draw=red] (#2,#2) -- (0,0);
            \draw[very thick, draw=red] (-#2,#2) -- (0,0);
        \fi
        \ifnum#3=3
            \draw[very thick, draw=red] (#2,-#2) -- (0,0);
            \draw[very thick, draw=red] (#2,#2) -- (0,0);
        \fi
	\end{scope}
}
\newcommand{\GTensor}[5]{
	\begin{scope}[shift={(#1)}]
    \ifnum#5=0
		\draw[very thick, draw=red] (-#2,0) -- (#2,0);
            \draw[very thick] (0,#2) -- (0,0);
    \fi
    \ifnum#5=-1
		\draw[very thick] (0,0) -- (#2,0);
            \draw[very thick] (0,#2) -- (0,0);
    \fi
    \ifnum#5=1
		\draw[very thick] (-#2,0) -- (0,0);
            \draw[very thick] (0,#2) -- (0,0);
    \fi

    \ifnum#5=2
		\draw[very thick,draw=red] (-#2,0) -- (#2,0);
    \fi
    \ifnum#5=3
		\draw[very thick] (0,-#2) -- (0,#2);
    \fi
    \ifnum#5=4
		\draw[very thick] (-#2,0) -- (#2,0);
    \fi
    \ifnum#5=5
		\draw[very thick, draw=red] (-#2,0) -- (#2,0);
		\draw[very thick] (0,#2) -- (0,-#2);
    \fi
    \ifnum#5=6
		\draw[very thick] (-#2,0) -- (#2,0);
            \draw[very thick] (0,#2) -- (0,0);
    \fi
    \ifnum#5=7
		\draw[very thick, draw=red] (-#2,0) -- (#2,0);
            \draw[very thick] (0,-#2) -- (0,0);
    \fi
    \ifnum#5=8
		\draw[very thick] (-#2,0) -- (#2,0);
    \fi
        \draw[ thick, fill=tensorcolor, rounded corners=2pt] (-#3,-#3) rectangle (#3,#3);
		\draw (0,0) node {\scriptsize #4};
	\end{scope}
}
\newcommand{\GFTensor}[5]{
	\begin{scope}[shift={(#1)}]
    \ifnum#5=0
		\draw[very thick, draw=red] (-#2,0) -- (\singledx+#2,0);
            \draw[very thick, draw=red] (0,#2) -- (0,0);
            \draw[very thick, draw=red] (\singledx,#2) -- (\singledx,0);
    \fi
    \ifnum#5=1
		\draw[very thick, draw=red] (-#2,0) -- (\singledx+#2,0);
            \draw[very thick, draw=red] (0,-#2) -- (0,0);
            \draw[very thick, draw=red] (\singledx,-#2) -- (\singledx,0);
    \fi
    \ifnum#5=2
		\draw[very thick, draw=red] (0,-#2) -- (0,#2);
            \draw[very thick, draw=red] (\singledx,-#2) -- (\singledx,#2);
    \fi
    \ifnum#5=3
		\draw[very thick] (0,-#2) -- (0,#2);
    \fi
    \ifnum#5=4
		\draw[very thick] (-#2,0) -- (#2,0);
    \fi
        \draw[ thick, fill=tensorcolor, rounded corners=2pt] (-#3,-#3) rectangle (#3+\singledx,#3);
		\draw (0+0.5*\singledx,0) node {\scriptsize #4};
	\end{scope}
}
\newcommand{\UFTensor}[5]{
	\begin{scope}[shift={(#1)}]
    \ifnum#5=0
            \draw[very thick] (0,#2) -- (0,-#2);
            \draw[very thick] (\singledx,#2) -- (\singledx,-#2);
    \fi
    \ifnum#5=1
		\draw[very thick, draw=red] (0,-#2) -- (0,0);
            \draw[very thick, draw=red] (\singledx,-#2) -- (\singledx,0);
            \draw[very thick] (0.5*\singledx,#2) -- (0.5*\singledx,0);
    \fi

    \ifnum#5=2
		\draw[very thick, draw=red] (0,#2) -- (0,0);
            \draw[very thick, draw=red] (\singledx,#2) -- (\singledx,0);
            \draw[very thick] (0.5*\singledx,-#2) -- (0.5*\singledx,0);
    \fi
    \ifnum#5=3
		\draw[very thick, draw=red] (0,#2) -- (0,-#2);
            \draw[very thick, draw=red] (\singledx,#2) -- (\singledx,-#2);
    \fi
    \ifnum#5=4
		\draw[very thick] (-#2,0) -- (#2,0);
    \fi
        \draw[ thick, fill=whitetensorcolor, rounded corners=2pt] (-#3,-#3) rectangle (#3+\singledx,#3);
		\draw (0+0.5*\singledx,0) node {\scriptsize #4};
	\end{scope}
}
\newcommand{\GPTensor}[5]{
	\begin{scope}[shift={(#1)}]
    \ifnum#5=0
		\draw[very thick, draw=red] (-#2,0) -- (#2,0);
            \draw[very thick, draw=red] (0,-#2) -- (0,#2);
    \fi
    \ifnum#5=1
		\draw[very thick, draw=red] (-#2,0.2) -- (#2,0.2);
            \draw[very thick, draw=red] (-#2,-0.2) -- (#2,-0.2);
            \draw[very thick, draw=red] (0.2,-#2) -- (0.2,#2);
            \draw[very thick, draw=red] (-0.2,-#2) -- (-0.2,#2);
    \fi

    \ifnum#5=2
		\draw[very thick, draw=red] (-#2,0) -- (#2,0);
            \draw[very thick, draw=red] (0,-#2) -- (0,#2);
    \fi

    \ifnum#5=3
		\draw[very thick] (0,-#2) -- (0,#2);
    \fi
        \draw[ thick, fill=tensorcolor, rounded corners=2pt] (-#3,-#3) rectangle (#3,#3);
	\draw (0,0) node {\scriptsize #4};
    \ifnum#5=0
		\draw[very thick] (#3/2,#3/2) -- (#2,#2);
    \fi
	\end{scope}
}
\newcommand{\UPTensor}[5]{
	\begin{scope}[shift={(#1)}]
    \ifnum#5=0
		\draw[very thick, draw=red] (-#2,0) -- (#2,0);
            \draw[very thick, draw=red] (0,-#2) -- (0,#2);
    \fi
    \ifnum#5=1
		\draw[very thick, draw=red] (-#2,0.2) -- (#2,0.2);
            \draw[very thick, draw=red] (-#2,-0.2) -- (#2,-0.2);
            \draw[very thick, draw=red] (0.2,-#2) -- (0.2,#2);
            \draw[very thick, draw=red] (-0.2,-#2) -- (-0.2,#2);
    \fi

    \ifnum#5=2
		\draw[very thick] (-#2,0) -- (#2,0);
            \draw[very thick] (0,-#2) -- (0,#2);
    \fi

    \ifnum#5=3
		\draw[very thick, draw=red] (-#2,0) -- (#2,0);
            \draw[very thick, draw=red] (0,-#2) -- (0,#2);
    \fi
        \draw[ thick, fill=whitetensorcolor, rounded corners=2pt] (-#3,-#3) rectangle (#3,#3);
	\draw (0,0) node {\scriptsize #4};
    \ifnum#5=0
		\draw[very thick] (#3/2,#3/2) -- (#2,#2);
    \fi
    \ifnum#5=2
		\draw[very thick] (#3/2,#3/2) -- (0.75*#2,0.75*#2);
    \fi
    \ifnum#5=3
		\draw[very thick] (#3/2,#3/2) -- (0.75*#2,0.75*#2);
    \fi
	\end{scope}
}
\newcommand{\EPTensor}[5]{
	\begin{scope}[shift={(#1)}]
    \ifnum#5=0
		\draw[very thick, draw=red] (-#2,0.2) -- (#2,0.2);
            \draw[very thick, draw=red] (-#2,-0.2) -- (#2,-0.2);
            \draw[very thick, draw=red] (0.2,-#2) -- (0.2,#2);
            \draw[very thick, draw=red] (-0.2,-#2) -- (-0.2,#2);
    \fi
    \ifnum#5=1
		\draw[very thick, draw=red] (-#2,0.2) -- (#2,0.2);
            \draw[very thick, draw=red] (-#2,-0.2) -- (#2,-0.2);
            \draw[very thick, draw=red] (0.2,-#2) -- (0.2,#2);
            \draw[very thick, draw=red] (-0.2,-#2) -- (-0.2,#2);
    \fi

    \ifnum#5=2
		\draw[very thick,draw=red] (-#2,0) -- (#2,0);
    \fi

    \ifnum#5=3
		\draw[very thick] (0,-#2) -- (0,#2);
    \fi
        \draw[ thick, fill=tensorcolor, rounded corners=2pt] (-#3,-#3) rectangle (#3,#3);
	\draw (0,0) node {\scriptsize #4};
	\end{scope}
}
\newcommand{\Unitary}[5]{
	\begin{scope}[shift={(#1)}]
    \ifnum#5=0
		\draw[very thick, draw=red] (-#2,0) -- (#2,0);
            \draw[very thick] (-#2,2*#3) -- (#2,2*#3);
    \fi
    \ifnum#5=-1
		\draw[very thick] (0,0) -- (#2,0);
            \draw[very thick] (0,#2) -- (0,0);
    \fi
    \ifnum#5=1
		\draw[very thick, draw=red] (-#2,0) -- (#2,0);
            \draw[very thick] (0,2*#3) -- (#2,2*#3);
    \fi

    \ifnum#5=2
		\draw[very thick, draw=red] (-#2,0) -- (#2,0);
            \draw[very thick, draw=red] (-#2,2*#3) -- (#2,2*#3);
    \fi

    \ifnum#5=3
		\draw[very thick] (0,-#2) -- (0,#2);
    \fi
        \draw[ thick, fill=tensorcolor, rounded corners=2pt] (-#3,-#3) rectangle (#3,3*#3);
		\draw (0,#3) node {\scriptsize #4};
	\end{scope}
}
\newcommand{\FUnitary}[5]{
	\begin{scope}[shift={(#1)}]
    \ifnum#5=0
		\draw[very thick, draw=red] (-#2,0) -- (#2,0);
            \draw[very thick, draw=red] (-#2,2*#3) -- (#2,2*#3);
            \draw[very thick, draw=red] (-#2,4*#3) -- (#2,4*#3);
    \fi
    \ifnum#5=1
		\draw[very thick, draw=red] (-#2,0) -- (#2,0);
            \draw[very thick, draw=red] (0,2*#3) -- (#2,2*#3);
            \draw[very thick, draw=red] (0,4*#3) -- (#2,4*#3);
    \fi
    \ifnum#5=2
		\draw[very thick,draw=red] (-#2,0) -- (#2,0);
    \fi

    \ifnum#5=3
		\draw[very thick] (0,-#2) -- (0,#2);
    \fi
        \draw[ thick, fill=tensorcolor, rounded corners=2pt] (-#3,-#3) rectangle (#3,5*#3);
		\draw (0,2*#3) node {\scriptsize #4};
	\end{scope}
}
\newcommand{\PTensor}[5]{
	\begin{scope}[shift={(#1)}]
    \ifnum#5=0
		\draw[very thick, draw = red] (0,-#2) -- (0,0);
            \draw[very thick, draw = red] (0,0) -- (0,#2);
    \fi
    \ifnum#5=1
		\draw[very thick, draw = red] (-#2,0) -- (0,0);
            \draw[very thick, draw = red] (0,0) -- (#2,0);
    \fi
    \ifnum#5=2
		\draw[very thick, draw = red] (0,-#2) -- (0,0);
            \draw[very thick, draw = red] (0,0) -- (0,#2);
    \fi
    \ifnum#5=3
		\draw[very thick, draw = red] (-#2,0) -- (0,0);
            \draw[very thick, draw = red] (0,0) -- (#2,0);
    \fi
        \draw[ thick, fill=tensorcolor, rounded corners=2pt] (-#3,-#3) rectangle (#3,#3);
		\draw (0,0) node {\scriptsize #4};
	\end{scope}
}
\newcommand{\BTensor}[5]{
	\begin{scope}[shift={(#1)}]
    \ifnum#5=0
		\draw[very thick, draw=red] (-#2,0) -- (#2,0);
    \fi
    \ifnum#5=1
		\draw[very thick,draw=red] (0,#2) -- (0,-#2);
    \fi
    \ifnum#5=2
		\draw[very thick] (-#2,0) -- (#2,0);
    \fi
    \ifnum#5=3
		\draw[very thick] (0,#2) -- (0,-#2);
    \fi

        \draw[ thick, fill=tensorcolor, rounded corners=2pt] (-#3,-#3) rectangle (#3,#3);
		\draw (0,0) node {\scriptsize #4};
	\end{scope}
}
\newcommand{\DTensor}[5]{
	\begin{scope}[shift={(#1)}]
    \ifnum#5=0
		\draw[very thick, draw=red] (-#2,0) -- (#2,0);
    \fi
    \ifnum#5=1
		\draw[very thick,draw=red] (0,#2) -- (0,-#2);
    \fi
    \ifnum#5=2
		\draw[very thick] (-#2,0) -- (#2,0);
    \fi
    \ifnum#5=3
		\draw[very thick] (0,#2) -- (0,-#2);
    \fi

        \draw[ thick, fill=whitetensorcolor, rounded corners=2pt] (-#3,-#3) rectangle (#3,#3);
		\draw (0,0) node {\scriptsize #4};
	\end{scope}
}
\newcommand{\UTensor}[5]{
	\begin{scope}[shift={(#1)}]
    \ifnum#5=0
		\draw[very thick, draw=red] (-#2,0) -- (#2,0);
    \fi
    \ifnum#5=1
		\draw[very thick,draw=red] (0,#2) -- (0,-#2);
    \fi
    \ifnum#5=2
		\draw[very thick] (-#2,0) -- (#2,0);
    \fi
    \ifnum#5=3
		\draw[very thick] (0,#2) -- (0,-#2);
    \fi
        \draw[ thick, fill=gtensorcolor, rounded corners=2pt] (-#3,-#3) rectangle (#3,#3);
		\draw (0,0) node {\scriptsize #4};
	\end{scope}
}
\newcommand{\RTensor}[5]{
	\begin{scope}[shift={(#1)}]
    \ifnum#5=0
		\draw[very thick, draw=red] (-#2,0) -- (#2,0);
    \fi
    \ifnum#5=1
		\draw[very thick,draw=red] (0,#2) -- (0,-#2);
    \fi
    \ifnum#5=2
		\draw[very thick] (-#2,0) -- (#2,0);
    \fi
    \ifnum#5=3
		\draw[very thick] (0,#2) -- (0,-#2);
    \fi
        \draw[ thick, fill=violet, rounded corners=2pt] (-#3,-#3) rectangle (#3,#3);
		\draw (0,0) node {\scriptsize #4};
	\end{scope}
}
\newcommand{\GDTensor}[5]{
	\begin{scope}[shift={(#1)}]
    \ifnum#5=0
		\draw[very thick] (-#2,0) -- (#2,0);
		\draw[very thick] (0,#2) -- (0,-#2);
    \fi
    \ifnum#5=-1
		\draw[very thick] (0,0) -- (#2,0);
		\draw[very thick] (0,#2) -- (0,-#2);
    \fi
    \ifnum#5=1
		\draw[very thick] (-#2,0) -- (0,0);
		\draw[very thick] (0,#2) -- (0,-#2);
    \fi
        \draw[ thick, fill=tensorcolor, rounded corners=2pt] (-#3,-#3) rectangle (#3,#3);
    \def\dx{#3/3};
	\draw [thick]  (-#3+\dx, \dx) -- (- \dx,#3-\dx);
	\draw [thick] (-#3+1.5*\dx,-#3+1.5*\dx) -- (#3-1.5*\dx,#3-1.5*\dx);
	\draw [thick]  ( \dx, -#3 + \dx) -- (#3 - \dx,-\dx);
	\draw (0,0) node {\scriptsize #4};
	\end{scope}
}
\newcommand{\ATensor}[3]{
    \GTensor{#1}{1}{.6}{#2}{#3};
}
\newcommand{\BendUpRight}[2]{
	\begin{scope}[shift={(#1)}]
        \draw [very thick] (0,0) arc (270:360:0.4);
        \draw [very thick] (0.4,0.4) to (0.4,#2-0.4);
	\end{scope}
}
\newcommand{\BendUpLeftRed}[2]{
	\begin{scope}[shift={(#1)}]
        \draw [very thick, draw=red] (0,0) arc (270:180:0.4);
        \draw [very thick, draw=red] (-0.4,0.4) to (-0.4,#2-0.4);
	\end{scope}
}
\newcommand{\BendUpRightRed}[2]{
	\begin{scope}[shift={(#1)}]
        \draw [very thick, draw=red] (0,0) arc (270:360:0.4);
        \draw [very thick, draw=red] (0.4,0.4) to (0.4,#2-0.4);
	\end{scope}
}
\newcommand{\CNOT}[2]{
	\begin{scope}[shift={(#1)}]
        \filldraw (0,0) circle (3pt);
        \draw[thick] (0,0) -- (0,-#2-0.4);
        \draw[thick] (0,-#2) circle [radius=0.4]; 
	\end{scope}
}
\newcommand{\Gate}[2]{
	\begin{scope}[shift={(#1)}]
        \draw[ thick, fill=white, rounded corners=2pt] (-0.6,-0.6) rectangle (0.6,0.6);
		\draw (0,0) node {\scriptsize #2};
	\end{scope}
}
\newcommand{\Measurement}[1]{
	\begin{scope}[shift={(#1)}]
        \draw[ thick, fill=white, rounded corners=2pt] (-0.6,-0.6) rectangle (0.6,0.6);
	\draw[thick] (0.35,0.2) arc (30:150:0.4);
        \draw[thick] (0,-0.4) -- (0.3,0.5);
	\end{scope}
}
\newcommand\subsetsim{\mathrel{%
  \ooalign{\raise0.2ex\hbox{$\subset$}\cr\hidewidth\raise-0.8ex\hbox{\scalebox{0.9}{$\sim$}}\hidewidth\cr}}}
\begin{document}


\title{Characterizing MPS and PEPS Preparable via Measurement and Feedback}

\author{Yifan Zhang}
\email{yz4281@princeton.edu}
 \affiliation{Department of Electrical and Computer Engineering, Princeton University}

 \author{Sarang Gopalakrishnan}
\email{sarang.gopalakrishnan@princeton.edu}
 \affiliation{Department of Electrical and Computer Engineering, Princeton University}
 
\author{Georgios Styliaris}%
 \email{georgios.styliaris@mpq.mpg.de}
\affiliation{Max-Planck-Institut für Quantenoptik, Hans-Kopfermann-Straße 1, 85748 Garching, Germany}%
 \affiliation{Munich Center for Quantum Science and Technology (MCQST), Schellingstr. 4, D-80799 München, Germany}
\date{\today}

\begin{abstract}
Preparing long-range entangled states poses significant challenges for near-term quantum devices. It is known that measurement and feedback (MF) can aid this task by allowing the preparation of certain paradigmatic long-range entangled states with only constant circuit depth. Here we systematically explore the structure of states that can be prepared using constant-depth local circuits and a single MF round. Using the framework of tensor networks, the preparability under MF translates to tensor symmetries. We detail the structure of matrix-product states (MPS) and projected entangled-pair states (PEPS) that can be prepared using MF, revealing the coexistence of Clifford-like properties and magic.
\vtwo{In one dimension, we show that states with abelian symmetry protected topological order are a restricted class of MF-preparable states. In two dimensions, we parameterize a subset of states with abelian topological order that are MF-preparable.} 
Finally, we discuss the analogous implementation of operators via MF, providing a structural theorem that connects to the well-known Clifford teleportation.
\end{abstract}

\maketitle







\section{Introduction}

The recent advancement in building large-scale quantum computers and simulators opens up new possibilities in studying strongly-correlated many-body quantum states in the engineered quantum platforms.
As the number of available qubits continues to scale, an important challenge is to prepare such many-body quantum states from a native initial configuration, such as the product state.
This becomes increasingly challenging for larger system sizes because physically interesting, strongly-correlated states often require high-depth circuits.
Realizing the latter can be prohibitive in the absence of fault tolerance, since the accumulation of errors scales unfavorably with the circuit depth~\cite{preskill2018quantum}.
Moreover, locality of the accessible interactions constrains even further the states that can be prepared with shallow circuits because long-range correlated states, even with a small amount of entanglement, require depth at least linear in the system size~\cite{bravyi2006lieb}.
Some examples include states in the symmetry-broken phase exhibiting long-range correlations~\cite{aharonov1998quantum} (e.g., the GHZ state~\cite{nielsen2002quantum}), and error-correcting code states with topological order~\cite{kitaev2003fault,bravyi2006lieb,aharonov2018quantum}.

%

One shortcut to avoid the high circuit depth is via utilizing measurement and feedback (MF). By performing projective measurements and applying error correction (unitary) operators depending on the measurement outcome, one can deterministically prepare states in a constant circuit depth that would otherwise requires a deep circuit using local unitary interactions.
\vtwo{One way to understand the power of MF is via locality. In a constant-depth local unitary circuit, correlations cannot propagate beyond the light cone of the quantum circuit \cite{bravyi2006lieb}. On the other hand, in MF the (classical) measurement outcome is shared non-locally to determine the error correction operators, so one can spread correlations outside the circuit light cone.}
%
%
 In other words, the MF paradigm can be considered as a many-body, multi-party version of local operation and classical communication (LOCC)~\cite{horodecki2009quantum}, which is enhanced by allowing shallow circuits. 
%

Several recent works have discussed preparing many-body states using MF, both in theory \cite{piroli2021quantum,tantivasadakarn2021long,bravyi2022adaptive,lu2022measurement,tantivasadakarn2023shortest,tantivasadakarn2023hierarchy,smith2023deterministic,buhrman2023state,gunn2023phases,zhu2023nishimori,li2023symmetry,malz2024preparation} and in experiments \cite{foss2302experimental,iqbal2024non,chen2023realizing,baumer2023efficient}. However, in the particularly relevant case of constant-depth MF, most of the studies so far focus on specific examples, and a systematic understanding of states preparable using MF is still lacking.
%
%
Here we address the following question: What is the structure of states preparable from product states using constant-depth circuits and a single round of MF? We refer to such states as MF states and we will primarily discuss them in 1D and 2D.
Specifically, we focus on the following two aspects to address their generality.

First, an obvious class of MF states is the stabilizer states with small, local stabilizer generators. For example, the well-understood toric code belongs to this category~\cite{raussendorf2005long,aguado2008creation}. Nevertheless, we also know states preparable using MF that are not stabilizer states such as the AKLT state \cite{smith2023deterministic} and the double-semion order \cite{lu2022measurement,tantivasadakarn2021long}. These states exhibit Clifford-like properties akin to the stabilizer states, so one would wonder if there are any connections. We show that there is indeed always a Clifford-like structure underlying MF states, and moreover provide a way to understand where the magic (non-stabilizerness) originates from and why it does not interfere with the Clifford-like properties. To give a taste of the Clifford+magic structure, under certain conditions specified later, if a matrix-product state (MPS) is preparable with MF, then its tensors admit the following decomposition:
\begin{align}
    \begin{tikzpicture}[scale=0.5,baseline={([yshift=-0.65ex] current bounding box.center) }]
        \GTensor{0,0}{1.2}{0.6}{\small $A$}{0};
\end{tikzpicture}
    \; \propto \;
    \begin{tikzpicture}[scale=0.5,baseline={([yshift=-2.65ex] current bounding box.center) }]
        \draw[very thick, draw=red](-3,0)--(3,0);
        \FUnitary{0,0}{1.2}{.6}{\small $U_C$}{0};
        \draw[ thick, rounded corners=2pt] (-1.2-1.2,1.2-0.6) rectangle (-1.2,3.6-0.6);
        \draw (-1.2-0.6,2.4-0.6) node {\small $\ket{\psi}$};
        \draw[very thick, draw=red](1.2,2.4) -- (1.2,3.6);
        \draw[very thick, draw=red](1.2,1.2) -- (1.2+\singledx,1.2) -- (1.2+\singledx,3.6);
        \UFTensor{0.3+0.5*\singledx,4.0}{1.2}{.6}{\small $V$}{1};
\end{tikzpicture}
\;.
\end{align}
Here \(U_C\) is a Clifford unitary, \(\ket{\psi}\) is some (possibly magic) initial state, and \(V\) is an inverse isometry to account for local rotations and potential non-injectivity. Similar structure exists in two dimensions as well.

Second, we provide analytic solutions to the states preparable using MF under some physically-motivated symmetries. In 1D, we analytically solve MF states under the symmetry resembling the symmetry protected topological (SPT) order~\cite{chen2011classification,schuch2011classifying}, while in 2D, we analytically solve MF states under the symmetry resembling the topological order and discuss their relevant properties. Importantly, the analytic solution parameterizes a continuous family of states, again highlighting the fact that MF states can possess magic.

The major tool we employ is the language of tensor networks~\cite{cirac2021matrix}. In this language, the MF property translates to the symmetries of the MPS tensor in 1D and the symmetries of the projected entangled pair state (PEPS) tensor in 2D. We characterize these tensors (MF MPS and MF PEPS for short) by providing a structural theorem for each. Then, we write down the analytic solutions for MF MPS and MF PEPS when the symmetry resembles the SPT order in 1D and the topological order in 2D. In addition, the language of tensor network enables us the extend the discussion to not only preparing states but also applying unitary operators using MF.


This manuscript is organized as follows. In Section~\ref{section_mps} we discuss MF MPS. We begin by introducing the setting and protocol. We then prove the structural theorem to characterize MF MPS and to understand the source of magic. Finally, we provide an analytic solution of the MF MPS under the SPT-type symmetry.
In Section~\ref{section_peps} we extend the discussion to MF PEPS where we prove a similar structural theorem and provide the analytic solution under the symmetry resembling topological order. To probe the topological order, we further develop statements regarding the spectrum of the transfer matrix where its degeneracy constitutes a signature of the topological order.
Lastly, we discuss how to implement a class of unitaries using MF in Section~\ref{section_mpu} and provide a characterization.

\section{MPS}\label{section_mps}
\subsection{Overview of the Protocol}

We begin by analyzing a protocol for deterministically preparing MPS using MF. Similar protocols have been proposed in Refs.~\cite{smith2023deterministic,gunn2023phases}, and more recently in \cite{smith2024constant,sahay2024classifying,sahay2024finite,stephen2024preparing}.
The protocol begins by preparing unentangled tripartite states, each in $\mathbb C^D \otimes \mathbb C^d \otimes \mathbb C^D$, corresponding to the MPS tensors:
\begin{align}
    \dotso \quad
    \begin{tikzpicture}[scale=0.5,baseline={([yshift=-0.65ex] current bounding box.center) }]
        \GTensor{0,0}{1.2}{.6}{\small $A_{k}$}{0};
        \BendUpRightRed{1.2,0}{1.4};
        \BendUpLeftRed{-1.2,0}{1.4};
        \GTensor{0+4,0}{1.2}{.6}{\small $A_{k\!+\!1}$}{0};
        \BendUpRightRed{1.2+4,0}{1.4};
        \BendUpLeftRed{-1.2+4,0}{1.4};
        \GTensor{0-4,0}{1.2}{.6}{\small $A_{k\!-\!1}$}{0};
        \BendUpRightRed{1.2-4,0}{1.4};
        \BendUpLeftRed{-1.2-4,0}{1.4};
\end{tikzpicture}
    \quad \dotso
\end{align}
Here $d$ denotes the physical and $D$ the virtual dimension. Note that the virtual space is thus here treated as a physical degree of freedom.
To ``glue'' together the individual tensors to the desired MPS, we perform a bipartite projective measurement over the adjacent virtual qubits:
%
%
\vtwo{
\begin{equation}\label{mps_glue}
\begin{split}
&\textcolor{red}{...}\;
\begin{tikzpicture}[scale=0.5,baseline={([yshift=-0.65ex] current bounding box.center) }]
        \GTensor{0,0}{1.2}{.6}{\small $A_{k\!-\!1}$}{0};
        \BendUpRightRed{1.2,0}{1.4};
        \BendUpLeftRed{-1.2,0}{1.4};
        \GTensor{0+4,0}{1.2}{.6}{\small $A_{k}$}{0};
        \BendUpRightRed{1.2+4,0}{1.4};
        \BendUpLeftRed{-1.2+4,0}{1.4};
        \GTensor{0+8,0}{1.2}{.6}{\small $A_{k\!+\!1}$}{0};
        \BendUpRightRed{1.2+8,0}{1.4};
        \BendUpLeftRed{-1.2+8,0}{1.4};
        \Measurement{2,1.4};
        \Measurement{6,1.4};
\end{tikzpicture}
\;\textcolor{red}{...}\\
\;=\;
&\textcolor{red}{...}
\begin{tikzpicture}[scale=0.5,baseline={([yshift=-0.65ex] current bounding box.center) }]
        \GTensor{0,0}{1.2}{.6}{\small $A_{k\!-\!1}$}{0};
        \DTensor{\singledx,0}{1}{.6}{\small $P$}{0};
      \GTensor{2*\singledx,0}{1.2}{.6}{\small $A_{k}$}{0};
      \DTensor{3*\singledx,0}{1}{.6}{\small $P'$}{0};
      \GTensor{4*\singledx,0}{1.2}{.6}{\small $A_{k\!+\!1}$}{0};
\end{tikzpicture}
\textcolor{red}{...}
\end{split}
\end{equation}
}
However, in general, this results to unwanted operators \(P\) (representing the measurement outcome) intertwined within the target MPS. In certain cases, if the MPS tensors and the measurement satisfies certain compatibility conditions, the target MPS can be deterministically obtained by the action of local unitaries over the MPS physical space.

A simple example with $d = D = 2$ is given by choosing the MPS tensors to be 3-qubit stabilizer states. Then performing Bell measurements results in a MPS intertwined with Pauli operators, in a pattern that depends on the measurement outcomes. Nevertheless, due to the underlying Clifford property, the MPS is then equivalent to the target one up to local Pauli corrections.

However, recently several examples were constructed that the underlying tensors seemingly do not satisfy a (suitably generalized) Clifford property, but the MF protocol still succeeds. This notably includes the AKLT state. In the following we systematically study the structure of the MF protocol. We show that, under mild assumptions, the underlying quantum gate-teleportation~\cite{brassard1996teleportation,gottesman1999demonstrating} mechanism always relies on a Clifford property, which is possibly hidden by a part that possesses magic. 

\subsection{Setting}

We consider protocols where the gluing measurement is projective and consists of rank-1 orthogonal projectors.
In other words, the corresponding basis is a collection of states
\begin{align}
    \ket{P_i} = \; 
    \begin{tikzpicture}[scale=0.5,baseline={([yshift=-0.65ex] current bounding box.center) }]
        \BendUpLeftRed{(-1,0)}{1.2};
        \BendUpRightRed{(1,0)}{1.2};
        \draw[very thick,red] (-1,0) -- (1,0);
        \DTensor{0,0}{1.5}{.6}{\small $P_i$}{-1};
\end{tikzpicture}
\; = \; \sum_{a,b} (P_i)_{ab} \ket{a}\ket{b}
\end{align}
where \(\mathcal B = \{P_i\}\) is a collection of matrices. We refer to $\mathcal B$ as the MF basis. Different elements satisfy orthogonality 
\begin{align}
    \Tr(P_i^\dagger P_j) = \;
    \begin{tikzpicture}[scale=0.5,baseline={([yshift=-0.65ex] current bounding box.center)}]
        \draw[rounded corners,very thick, draw=red] (-1, -1) rectangle (1, 1);
        \DTensor{0,-1}{0}{.6}{\small $P^*_i$}{0};
        \DTensor{0,1}{0}{.6}{\small $P_j$}{0};
    \end{tikzpicture}
\; = \; \delta_{ij}
\end{align}
and completeness
\begin{align}
    \sum_{i=1}^{D^2} \;\;
    \begin{tikzpicture}[scale=0.5,baseline={([yshift=-0.65ex] current bounding box.center) }]
        \BendUpLeftRed{(-1,0)}{1.2};
        \BendUpRightRed{(1,0)}{1.2};
        \draw[very thick,red] (-1,0) -- (1,0);
        \DTensor{0,0}{1.5}{.6}{\small $P_i$}{-1};
        \begin{scope}[yscale=-1,xscale=1]
        \BendUpLeftRed{(-1,1.5)}{1.2};
        \BendUpRightRed{(1,1.5)}{1.2};
        \draw[very thick,red] (-1,1.5) -- (1,1.5);
        \DTensor{0,1.5}{1.5}{.6}{\small $P^*_i$}{-1};
        \end{scope}
    \end{tikzpicture}
    \;\; = \; I_{D^2} \;.
\end{align}
Both of these can be conveniently captured in the statement that the map $\ket{i} \mapsto \ket{P_i}$ is unitary. The Bell basis is the special case of qubits where (up to a constant) \(\mathcal B = \{I,X,Y,Z\}\). When generalized to qudits, the possible basis includes setting $\mathcal B$  to the Weyl-Heisenberg group which is generated by
\begin{align}
    X=\sum_{a=0}^{D-1} \ket{a+1}\!\bra{a}, \quad 
    Z=\sum_{a=0}^{D-1} e^{i 2\pi a / D} \ket{a}\!\bra{a} \;.
\end{align}
We will also invoke the notion of qudit Clifford unitaries and qudit stabilizers later on so we define it here. Similar to the qubit case, a qudit Clifford unitary is defined as a unitary operator that maps all tensor product of Weyl-Heisenberg operators to tensor product of Weyl-Heisenberg operators. A qudit stabilizer state is the simultaneous eigenstate of a (sufficiently large) subgroup of the tensor product of Weyl-Heisenberg group.
See Ref.~\cite{werner2001all} for a discussion on MF bases and Appendix~\ref{mf_basis_ex} for more examples of MF bases.


We impose an additional constraint: $\mathcal B$ forms a unitary projective representation of a group. In other words, \(\{P_i\}\) are unitaries and are closed under multiplication up to a phase. The group property imposes a non-trivial constraint because there exist constructions where \(\{P_i\}\) do not form a group~\cite{werner2001all}. See Appendix \ref{mf_basis_ex} for further discussion. On the other hand, unitarity constitutes a trivial constraint because any representation of a group can be turned into a unitary representation via a similarity transform \(G\) \cite{dresselhaus2007group}. For example, the similarity transform in the virtual bond shows up as the following:
\begin{align}
\begin{tikzpicture}[scale=0.5,baseline={([yshift=-0.65ex] current bounding box.center) }]
        \ATensor{0,0}{\small $A_1$}{0};
        \DTensor{\singledx,0}{1}{.6}{\small $G$}{0};
        \DTensor{2*\singledx,0}{1}{.6}{\small $P$}{0};
        \DTensor{3*\singledx,0}{1}{.6}{\small $G^{\shortminus 1}$}{0};
      \ATensor{4*\singledx,0}{\small $A_2$}{0};
\end{tikzpicture}
\end{align}
Meanwhile, the similarity transform corresponds to a gauge transformation of the MPS tensors so they do not modify the resulting state. An important observation which we later use extensively is that the unitary representation $\{P_i\}$ is necessarily irreducible. This follows from the completeness of the measurement.

Returning to Eq.~(\ref{mps_glue}), to remove the operator \(P\), we demand the MPS tensors to have the following symmetry for all $P \in \mathcal B$, which we call MF symmetry:
\begin{align}\label{mps_sym}
    \begin{tikzpicture}[scale=0.5,baseline={([yshift=-3.65ex] current bounding box.center) }]
        \DTensor{-\singledx,0}{1.}{.6}{\small $P$}{0};
        \UTensor{0,\singledx}{1.}{.6}{\small $U_P$}{3};
        \ATensor{0,0}{\small $A$}{0};
\end{tikzpicture}
\; = \;
    \begin{tikzpicture}[scale=0.5,baseline={([yshift=-0.65ex] current bounding box.center) }]
        \DTensor{\singledx,0}{1.}{.6}{\small $P'$}{0};
        \ATensor{0,0}{\small $A$}{0};
\end{tikzpicture}
\end{align}
This requirement allows to apply a local unitary \(U_P\) on the physical space (which depends on $P$) to move \(P\) to the right and become \(P' \in \mathcal B\). \(P'\) does not have to be equal to \(P\), nor does this map from left to right need to be bijective. This symmetry allows us to move the operator in the middle of the chain to the edge where it can be corrected:
\begin{align}
        & \;\;  \begin{tikzpicture}[scale=0.5,baseline={([yshift=-0.65ex] current bounding box.center) }]
        \BendUpLeftRed{(-1,0)}{1.};
        \UTensor{2*\singledx,\singledx}{1.}{.6}{\small $U_P$}{3};
        \ATensor{0,0}{\small $A_1$}{0};
        \GTensor{3*\singledx,0}{\singledx-0.4}{.6}{\small $A_3$}{0};
        \ATensor{2*\singledx,0}{\small $A_2$}{0};
        \UTensor{3*\singledx,\singledx}{1.}{.6}{\small $U_{P'}$}{3};
        \BendUpRightRed{(3*\singledx+\singledx-0.4,0)}{1.};
        \UTensor{4*\singledx,\singledx}{1.2}{.6}{\small $P^{''\dag}$}{1};
        \DTensor{\singledx,0}{1.}{.6}{\small $P$}{0};
        \end{tikzpicture} \\
        = & \;\; \begin{tikzpicture}[scale=0.5,baseline={([yshift=-4.65ex] current bounding box.center) }]
        \BendUpLeftRed{(-1,0)}{1.};
        \ATensor{0,0}{\small $A_1$}{0};
        \GTensor{3*\singledx,0}{\singledx-0.4}{.6}{\small $A_3$}{0};
        \ATensor{1*\singledx,0}{\small $A_2$}{0};
        \UTensor{3*\singledx,\singledx}{1.}{.6}{\small $U_{P'}$}{3};
        \BendUpRightRed{(3*\singledx+\singledx-0.4,0)}{1.};
        \UTensor{4*\singledx,\singledx}{1.2}{.6}{\small $P^{''\dag}$}{1};
        \DTensor{2*\singledx,0}{1.}{.6}{\small $P'$}{0};
        \end{tikzpicture}\\
        = & \;\; \begin{tikzpicture}[scale=0.5,baseline={([yshift=-4.65ex] current bounding box.center) }]
        \BendUpLeftRed{(-1,0)}{1.};
        \ATensor{0,0}{\small $A_1$}{0};
        \DTensor{3*\singledx,0}{1.4}{.6}{\small $P''$}{0};
        \GTensor{2*\singledx,0}{1.}{.6}{\small $A_3$}{0};
        \ATensor{1*\singledx,0}{\small $A_2$}{0};
        \BendUpRightRed{(3*\singledx+\singledx-0.4,0)}{1.};
        \UTensor{4*\singledx,\singledx}{1.2}{.6}{\small $P^{''\dag}$}{1};
        \end{tikzpicture}\\
         = & \;\; \begin{tikzpicture}[scale=0.5,baseline={([yshift=-0.65ex] current bounding box.center) }]
        \BendUpLeftRed{(-1,0)}{1.};
        \ATensor{0,0}{\small $A_1$}{0};
        \GTensor{2*\singledx,0}{1.}{.6}{\small $A_3$}{0};
        \ATensor{1*\singledx,0}{\small $A_2$}{0};
        \BendUpRightRed{(2*\singledx+1,0)}{1.};
        \end{tikzpicture}
\end{align}
%
%
Note that different MPS tensors need not be the same, nor do they need to have the same symmetry. As long as they have the symmetry that allows the transport of the operators in the virtual space, they can be glued together using MF. We refer to such states as MF MPS.

Additionally, this protocol works deterministically if the MPS has boundary conditions as above, i.e., where the virtual space on the edges is treated as physical. In the case of periodic boundary condition, one can only move all the operators onto one virtual leg, and they will all annihilate with a finite but still \(O(1)\) probability. This will be different from, say, measurement and post-selection which succeeds only with exponentially small probability.

We emphasize that preparing the product of the tripartite states and performing measurement in the MF basis only require a constant circuit depth, independent of the number of sites, signifying the advantage of MF over unitary circuits on current experimental platforms. Also, the concept of ``gluing" product resource states to perform certain tasks also appears in other quantum computing literature such as the teleportation quantum computation~\cite{brassard1996teleportation,gottesman1999demonstrating} and fusion-based quantum computation~\cite{bartolucci2023fusion}.

\subsection{Examples of MPS with MF Symmetry}\label{mps_example}
We present two examples of MPS tensors satisfying the MF symmetry. Both examples have a virtual dimension $D=2$ and physical $d=2$. In the first example, we impose the following symmetry constraint:
\begin{align}
    \begin{tikzpicture}[scale=0.5,baseline={([yshift=-3.65ex] current bounding box.center) }]
        \DTensor{-\singledx,0}{1.}{.6}{\small $X$}{0};
        \UTensor{0,\singledx}{1.}{.6}{\small $X$}{3};
        \ATensor{0,0}{\small $A$}{0};
\end{tikzpicture}
\; = \;
    \begin{tikzpicture}[scale=0.5,baseline={([yshift=-0.65ex] current bounding box.center) }]
        \DTensor{\singledx,0}{1.}{.6}{\small $X$}{0};
        \ATensor{0,0}{\small $A$}{0};
\end{tikzpicture} \\
\begin{tikzpicture}[scale=0.5,baseline={([yshift=-3.65ex] current bounding box.center) }]
        \DTensor{-\singledx,0}{1.}{.6}{\small $Z$}{0};
        \UTensor{0,\singledx}{1.}{.6}{\small $I$}{3};
        \ATensor{0,0}{\small $A$}{0};
\end{tikzpicture}
\; = \;
    \begin{tikzpicture}[scale=0.5,baseline={([yshift=-0.65ex] current bounding box.center) }]
        \DTensor{\singledx,0}{1.}{.6}{\small $Z$}{0};
        \ATensor{0,0}{\small $A$}{0};
\end{tikzpicture}
\end{align}
%
These are linear equation with respect to \(A\) so one can solve it directly. The result is, up to normalization,
\begin{align}
    \begin{tikzpicture}[scale=0.5,baseline={([yshift=-0.65ex] current bounding box.center) }]
        \ATensor{0,0}{\small $A$}{0};
\end{tikzpicture}
\; = \;
    \begin{tikzpicture}[scale=0.5,baseline={([yshift=-1.3ex] current bounding box.center) }]
        \draw[very thick,red] (-1,0) -- (1,0);
        \draw[very thick,black] (0,0) -- (0,1);
		\fill[color=black] (0,0) circle (.15);
\end{tikzpicture}
+ \alpha \;\;
    \begin{tikzpicture}[scale=0.5,baseline={([yshift=-1.3ex] current bounding box.center) }]
        \draw[very thick,red] (-1,0) -- (1,0);
        \draw[very thick,black] (0,.35) -- (0,1);
        \HollowDot{0,0.45};
    	\draw (0.75,0.65) node {\small $\ket{+}$};
\end{tikzpicture}
\;,
\end{align}
%
%
where the first term is the copy tensor.
Importantly, this is a continuous family satisfying the MF symmetry and includes non-stabilizer states.

We provide another example to demonstrate that the map from \(P\) to \(P'\) in Eq.~\eqref{mps_sym} need not be either identity or bijective. We demand the following symmetry constrains:
\begin{align}\label{mps_ex_nonbijective}
    \begin{tikzpicture}[scale=0.5,baseline={([yshift=-3.65ex] current bounding box.center) }]
        \DTensor{-\singledx,0}{1.}{.6}{\small $X$}{0};
        \UTensor{0,\singledx}{1.}{.6}{\small $X$}{3};
        \ATensor{0,0}{\small $A$}{0};
\end{tikzpicture}
\; = \;
    \begin{tikzpicture}[scale=0.5,baseline={([yshift=-0.65ex] current bounding box.center) }]
        \DTensor{\singledx,0}{1.}{.6}{\small $Z$}{0};
        \ATensor{0,0}{\small $A$}{0};
\end{tikzpicture} \\
\begin{tikzpicture}[scale=0.5,baseline={([yshift=-3.65ex] current bounding box.center) }]
        \DTensor{-\singledx,0}{1.}{.6}{\small $Z$}{0};
        \UTensor{0,\singledx}{1.}{.6}{\small $Z$}{3};
        \ATensor{0,0}{\small $A$}{0};
\end{tikzpicture}
\; = \;
    \begin{tikzpicture}[scale=0.5,baseline={([yshift=-0.65ex] current bounding box.center) }]
        \DTensor{\singledx,0}{1.}{.6}{\small $I$}{0};
        \ATensor{0,0}{\small $A$}{0};
\end{tikzpicture}
\end{align}
%
This set of constrains admits the following family of solutions:
\begin{align}
    \begin{tikzpicture}[scale=0.5,baseline={([yshift=-0.65ex] current bounding box.center) }]
        \ATensor{0,0}{\small $A$}{0};
\end{tikzpicture}
\; = \;
    \begin{tikzpicture}[scale=0.5,baseline={([yshift=-0.6ex] current bounding box.center) }]
        \draw[very thick,red] (-1,0) -- (2,0);
        \draw[very thick,black] (0,0) -- (0,1);
        \DTensor{1.5,0}{1.}{.6}{\small $H$}{0};
		\fill[color=black] (0,0) circle (.15);
\end{tikzpicture}
+ \alpha \;\;
    \begin{tikzpicture}[scale=0.5,baseline={([yshift=-1.3ex] current bounding box.center) }]
        \BendUpRight{0,0}{1.5}
        \draw[very thick,red] (-.75,0) -- (0,0);
        \draw[very thick,red] (1,0) -- (1.6,0);
        \HollowRedDot{1,0};
		\draw (1.2,0.65) node {\small $\ket{0}$};
\end{tikzpicture}
\; .
\end{align}

Crucially, in both examples, the symmetry constraints admit a continuous family of solution parametrized by \(\alpha\). Given that stabilzer states are a discrete set of states, most of the MF MPS cannot be stabilizer states. Therefore, we would like to understand why MF MPS can be continuously parametrized (signaling the non-stabilizerness) even though they possess symmetries similar to the stabilizer states.
%
%
%
%
%

\subsection{Structure of MPS with MF symmetry}

We give a structural theorem of all MPS preparable using MF. In particular, this reveals an underlying Clifford-like structure as well as the source of non-stabilizerness. We start by showing that MF MPS are automatically in the MPS canonical form.  The arrows throughout indicate how a tensor should be interpreted as an operator.
\begin{lemma} \label{lemma_cf}
    If $A$ satisfies the MF symmetry [Eq.~\eqref{mps_sym}], then
    \begin{align}
    \begin{tikzpicture}[scale=0.5,baseline={([yshift=-0.65ex] current bounding box.center) }]
        \draw[very thick,red] (0,0) -- (1.2,0) -- (1.2,\singledx) -- (0,\singledx);
        \GTensor{0,0}{1.2}{0.6}{\small $A$}{0};
        \GTensor{0,\singledx}{1.2}{0.6}{\small $A^\dag$}{7};
        \RedArrow{-0.8,0}{0.2}{1};
        \RedArrow{1.,0}{0.2}{1};
        \RedArrow{-1,\singledx}{0.2}{3};
        \Arrow{0,1.}{0.2}{0};
\end{tikzpicture}
\; \propto \;
    \begin{tikzpicture}[scale=0.5,baseline={([yshift=-1.65] current bounding box.center) }]
        \draw[very thick,red] (-1.2,0) -- (0.,0) -- (0.,\singledx) -- (-1.2,\singledx);
        \RedArrow{-0.8,0}{0.2}{1};
        \RedArrow{-1,\singledx}{0.2}{3};
\end{tikzpicture}
\; ,
\end{align}
\end{lemma}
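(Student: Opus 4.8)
The plan is to reinterpret the tensor $A$ as a linear map and then invoke Schur's lemma against the irreducible representation $\{P_i\}$. Following the arrow conventions, I read $A$ as an operator $\mathcal A \colon \mathbb C^D \to \mathbb C^d \otimes \mathbb C^D$ that sends the left virtual leg into the joint physical$\,\otimes\,$right-virtual space, i.e. $\mathcal A = \sum_{i,a,b} A^i_{ab}\,\ket{i}\ket{b}\bra{a}$. The quantity on the left-hand side of the Lemma—contracting $A$ against $A^\dagger$ over the physical and right-virtual legs while leaving the two left-virtual legs open—is then exactly $M := \mathcal A^\dagger \mathcal A$, an operator on the left-virtual space $\mathbb C^D$. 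The claim is precisely that $M \propto I_D$, that is, $\mathcal A$ is proportional to an isometry, which is the canonical-form condition.

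First I would rewrite the MF symmetry of Eq.~\eqref{mps_sym} in this operator language. Tracking the placement of each leg, it reads $(U_P \otimes I_D)\,\mathcal A\, P = (I_d \otimes P')\,\mathcal A$, and since $U_P$ is unitary this is equivalent to $\mathcal A\, P = (U_P^\dagger \otimes P')\,\mathcal A$. The key computation is then to conjugate $M$ by $P$:
\begin{equation}
P^\dagger M P = (\mathcal A P)^\dagger (\mathcal A P) = \mathcal A^\dagger \bigl(U_P U_P^\dagger \otimes (P')^\dagger P'\bigr)\mathcal A = \mathcal A^\dagger \mathcal A = M ,
\end{equation}
where the last step uses that both $U_P$ and $P' \in \mathcal B$ are unitary, so the two tensor factors collapse to identities. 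Since $P$ is unitary, this rearranges to $MP = PM$; hence $M$ commutes with every $P \in \mathcal B$.

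Finally, because the representation $\{P_i\}$ is irreducible (as noted in the setup, this follows from completeness of the MF basis), an operator commuting with all of its elements must be scalar by Schur's lemma: the eigenspaces of the Hermitian operator $M$ are invariant under every $P_i$, so irreducibility forces a single eigenvalue and thus $M = \lambda\, I_D$. Positivity of $\mathcal A^\dagger \mathcal A$, together with $A \neq 0$, gives $\lambda > 0$, establishing $M \propto I_D$. I expect the only real subtlety to be bookkeeping: correctly translating the diagrammatic symmetry into the operator identity $(U_P \otimes I_D)\mathcal A P = (I_d \otimes P')\mathcal A$ with the adjoints in the right places, and noting that the two cancellations rely solely on unitarity of $U_P$ and of $P'$, not on the map $P \mapsto P'$ being bijective (which it need not be). All the conceptual content is carried by the irreducibility of $\{P_i\}$ and Schur's lemma; the rest is a direct manipulation.
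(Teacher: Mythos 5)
Your proposal is correct and takes essentially the same route as the paper's proof: both use the unitarity of $U_P$ and $P'$ to show that the contraction $\sum_i A^i A^{i\dagger}$ is invariant under conjugation by every $P \in \mathcal B$, and then invoke Schur's lemma via the irreducibility of the MF basis. The only differences are cosmetic---you work in explicit operator notation rather than diagrams, and your $M = \mathcal A^\dagger \mathcal A$ is the transpose of the paper's $\sum_i A^i A^{i\dagger}$, which is immaterial for the conclusion $M \propto I_D$.
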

i.e., $A$ is proportional to an isometry $V:\mathbb C^D \to \mathbb C^{dD}$
    \begin{align}
    \begin{tikzpicture}[scale=0.5,baseline={([yshift=-0.65ex] current bounding box.center) }]
        \ATensor{0,0}{\small $A$}{0};
        \draw[very thick,black] (0,1) -- (1,1);
\end{tikzpicture}
\; \propto \;\;
    \begin{tikzpicture}[scale=0.5,baseline={([yshift=-5.65] current bounding box.center) }]
        \Unitary{0,0}{1}{.6}{\small $V$}{1};
\end{tikzpicture}
\; .
\end{align}
\begin{proof}
The MF symmetry Eq.~\eqref{mps_sym} directly implies, by the unitarity of $P$ and $U_P$,
    \begin{align}
        \begin{tikzpicture}[scale=0.5,baseline={([yshift=-0.65ex] current bounding box.center) }]
        \draw[very thick,red] (0,0) -- (1.2,0) -- (1.2,\singledx) -- (0,\singledx);
        \GTensor{0,0}{1.2}{0.6}{\small $A$}{0};
        \begin{scope}[yscale=-1,xscale=1]
        \GTensor{0,-\singledx}{1.2}{0.6}{\small $A^\dag$}{0};
        \DTensor{-\singledx,0}{1.}{.6}{\small $P$}{0};
        \DTensor{-\singledx,-\singledx}{1.}{.6}{\small $P^\dag$}{0};        
        \end{scope}
        \RedArrow{-0.8,0}{0.2}{1};
        \RedArrow{1.,0}{0.2}{1};
        \RedArrow{-1,\singledx}{0.2}{3};
        \Arrow{0,1.}{0.2}{0};
\end{tikzpicture}
    \; = \;
        \begin{tikzpicture}[scale=0.5,baseline={([yshift=-0.65ex] current bounding box.center) }]
        \draw[very thick,red] (0,0) -- (1.2,0) -- (1.2,\singledx) -- (0,\singledx);
        \GTensor{0,0}{1.2}{0.6}{\small $A$}{0};
        \begin{scope}[yscale=-1,xscale=1]
        \GTensor{0,-\singledx}{1.2}{0.6}{\small $A^\dag$}{0};
        \end{scope}
        \RedArrow{-0.8,0}{0.2}{1};
        \RedArrow{1.,0}{0.2}{1};
        \RedArrow{-1,\singledx}{0.2}{3};
        \Arrow{0,1.}{0.2}{0};
        \end{tikzpicture}
    \end{align}
where $P$ is any element of the MF group.
%
%
The result follows using Schur's lemma (since $\mathcal B$ is irreducible) by interpreting the above equation as the adjoint action
\begin{align}
    P \left( \sum_i A^i A^{i \dagger} \right) P^\dagger = \left( \sum_i A^i A^{i \dagger} \right) \,.
\end{align}
\end{proof}

We now prove our main theorems which characterize the structure of the MF MPS tensors. There are two parts of of the theorem. The first part allows us to perform a local isometry and replace \(U_{P_i}\) with \((P_i^\dag \otimes P_i^T)\). The second part states that after the local isometry the MPS possesses a Clifford-like structure and characterizes the magic.

\subsubsection{Local Isometry}
We begin by performing a polar decomposition to the MPS. This reveals the local isometry that could allow us to replace \(U_{P_i}\) with \((P_i^\dag \otimes P_i^T)\).

\begin{theorem}\label{structural_theorem_1}
    Let \(A\) be a MPS tensor satisfying $\forall P_i \in \mathcal B$ the MF symmetry
    \begin{align}
    \begin{tikzpicture}[scale=0.5,baseline={([yshift=-0.65ex] current bounding box.center) }]
        \ATensor{0,0}{\small $A$}{0};
\end{tikzpicture}
    \; = \;\begin{tikzpicture}[scale=0.5,baseline={([yshift=-3.65ex] current bounding box.center) }]
        \DTensor{-\singledx,0}{1.}{.6}{\small $P_i$}{0};
        \UTensor{0,\singledx}{1.}{.6}{\small $U_{P_i}$}{3};
        \ATensor{0,0}{\small $A$}{0};
        \DTensor{\singledx,0}{1.}{.6}{\small $P_i'^{\dag}$}{0};
        \RedArrow{-0.8,0}{0.2}{1};
        \RedArrow{1,0}{0.2}{1};
\end{tikzpicture}
\;.
\end{align}
Performing a polar decomposition $A = V Q$ we can write
    \begin{align}\label{mps_polar_decomp}
    \begin{tikzpicture}[scale=0.5,baseline={([yshift=-0.65ex] current bounding box.center) }]
        \GTensor{0,0}{1.2}{0.6}{\small $A$}{0};
        \RedArrow{-0.8,0}{0.2}{1};
        \RedArrow{0.8,0}{0.2}{3};
        \Arrow{0,1.0}{0.2}{0};
\end{tikzpicture}
    \; = \;\begin{tikzpicture}[scale=0.5,baseline={([yshift=-3.65ex] current bounding box.center) }]
        \GFTensor{0,0}{1.2}{.6}{\small $Q$}{0};
        \UFTensor{0,\singledx}{1.2}{.6}{\small $V$}{1};
        \RedArrow{-0.8,0}{0.2}{1};
        \RedArrow{0.8+\singledx,0}{0.2}{3};
        \RedArrow{0,1.}{0.2}{0};
        \RedArrow{\singledx,1.}{0.2}{0};
        \Arrow{0.5*\singledx,1.0+\singledx}{0.2}{0};
\end{tikzpicture}
\end{align}
where \(Q: \mathbb{C}^{D^2} \rightarrow \mathbb{C}^{D^2}\) is a positive semi-definite Hermitian matrix, satisfying:
\begin{enumerate}[(i)]
    \item \(Q\) has the same null-space as \(V\), i.e.,
    \begin{align}
    \begin{tikzpicture}[scale=0.5,baseline={([yshift=-3.65ex] current bounding box.center) }]
        \GTensor{0,0}{1.2}{0.6}{\small $A$}{0};
        \UFTensor{-0.5*\singledx,\singledx}{1.2}{.6}{\small $V^\dag$}{2};
        \RedArrow{-0.8,0}{0.2}{1};
        \RedArrow{0.8,0}{0.2}{3};
        \RedArrow{-0.5*\singledx,\singledx+1.0}{0.2}{0};
        \RedArrow{0.5*\singledx,\singledx+1.0}{0.2}{0};
\end{tikzpicture}
    =\begin{tikzpicture}[scale=0.5,baseline={([yshift=-0.8ex] current bounding box.center) }]
        \GFTensor{0,0}{1.2}{.6}{\small $Q$}{0};
        \RedArrow{-0.8,0}{0.2}{1};
        \RedArrow{0.8+\singledx,0}{0.2}{3};
        \RedArrow{0,1.}{0.2}{0};
        \RedArrow{\singledx,1.}{0.2}{0};
\end{tikzpicture}
\end{align}
and $V: \mathbb{C}^{D^2} \to \mathbb C^d$ is an isometry when restricted to the range of $Q$.
    \item $[ Q , P_i \otimes  P'^*_i] = 0$, i.e., $Q$ has the symmetry
\begin{align} \label{eq:symmetry_Q}
    \begin{tikzpicture}[scale=0.5,baseline={([yshift=-0.65ex] current bounding box.center) }]
        \GFTensor{0,0}{1.2}{.6}{\small $Q$}{0};
        \RedArrow{-0.8,0}{0.2}{1};
        \RedArrow{0.8+\singledx,0}{0.2}{3};
        \RedArrow{0,1.}{0.2}{0};
        \RedArrow{\singledx,1.}{0.2}{0};
\end{tikzpicture}
    \; = \;
    \begin{tikzpicture}[scale=0.5,baseline={([yshift=-3.25ex] current bounding box.center) }]
        \DTensor{-\singledx,0}{1.}{.6}{\small $P_i$}{0};
        \DTensor{0,\singledx}{1.}{.6}{\small $P_i^\dag$}{1};
        \DTensor{\singledx,\singledx}{1.}{.6}{\small $P_i'^T$}{1};
        \GFTensor{0,0}{1.}{.6}{\small $Q$}{0};
        \DTensor{2*\singledx,0}{1.}{.6}{\small $P_i'^{*}$}{0};
        \RedArrow{-0.8,0}{0.2}{1};
        \RedArrow{0.8+\singledx,0}{0.2}{3};
        \RedArrow{0,1.}{0.2}{0};
        \RedArrow{\singledx,1.}{0.2}{0};
\end{tikzpicture}
\;.
\end{align}
\end{enumerate}
\end{theorem}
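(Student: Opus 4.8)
The plan is to read off both parts from the polar decomposition $A = VQ$, where $A$ is interpreted as the map $\mathbb{C}^{D^2}\to\mathbb{C}^d$ obtained by bending the right virtual leg so that the two virtual legs together form the domain and the physical leg is the codomain, and $Q=\sqrt{A^\dagger A}\ge 0$. Part (i) is then just the standard content of the polar decomposition: $Q$ is Hermitian and positive semidefinite, $\ker V=\ker Q=\ker A$, and $V$ is isometric on $\mathrm{range}(Q)=(\ker Q)^\perp$. The displayed identity is simply $V^\dagger V Q=Q$, which holds because $V^\dagger V=\Pi_{\mathrm{range}(Q)}$ and $\mathrm{supp}(Q)\subseteq\mathrm{range}(Q)$. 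The real content is therefore the symmetry (ii), which I would establish as follows.

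First I would vectorize the MF symmetry. Fix $P\in\mathcal B$ with its associated $P'$ and $U_P$, and write $A^m$ for the $D\times D$ matrix with entries $(A^m)_{\alpha\beta}$ (physical index $m$). The hypothesis reads $A^m=\sum_n (U_P)_{mn}\,P\,A^n\,P'^\dagger$. Applying the identity $\mathrm{vec}(BXC)=(B\otimes C^T)\,\mathrm{vec}(X)$ together with $(P'^\dagger)^T=P'^*$, this becomes
\begin{equation}
\ket{A^m}=\sum_n (U_P)_{mn}\,(P\otimes P'^*)\,\ket{A^n},
\end{equation}
where $\ket{A^m}:=\sum_{\alpha\beta}(A^m)_{\alpha\beta}\ket{\alpha}\ket{\beta}$. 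The positive operator on which the diagram in (ii) acts is exactly $G:=\sum_m\ket{A^m}\bra{A^m}$, with $Q=\sqrt{G}$; it is precisely this bending convention that makes $P'^*$ (rather than $P'$ or $P'^T$) appear on the right leg.

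Next I would conjugate $G$ and use unitarity of $U_P$ to eliminate it. Substituting the vectorized symmetry into $G$ yields
\begin{equation}
G=(P\otimes P'^*)\Big(\sum_{n,n'}\Big[\sum_m (U_P)_{mn}\,\overline{(U_P)_{mn'}}\Big]\ket{A^n}\bra{A^{n'}}\Big)(P\otimes P'^*)^\dagger .
\end{equation}
Since $U_P$ is unitary, $\sum_m (U_P)_{mn}\,\overline{(U_P)_{mn'}}=(U_P^\dagger U_P)_{n'n}=\delta_{nn'}$, so the bracketed factor collapses and the middle sum is again $G$. This gives $(P\otimes P'^*)\,G\,(P\otimes P'^*)^\dagger=G$, i.e.\ $[G,P\otimes P'^*]=0$ for every $P\in\mathcal B$ with its associated $P'$. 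Finally, a unitary commuting with the Hermitian PSD operator $G$ commutes with its unique PSD square root $Q=\sqrt{G}$ (by functional calculus, or by uniformly approximating $\sqrt{\,\cdot\,}$ by polynomials on the spectrum of $G$), which is exactly claim (ii) after relabeling $P=P_i$, $P'=P_i'$.

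The only genuinely delicate step is the conjugate/transpose bookkeeping induced by bending the right virtual leg: one must verify that the vectorization produces $P'^*$ on the right factor so as to match the diagram in (ii), rather than $P'$ or $P'^T$. Everything else is routine, the one remaining subtlety being that when $A$ is non-injective $V$ is only a partial isometry, so all isometry and null-space statements in (i) are to be read on $\mathrm{range}(Q)$.
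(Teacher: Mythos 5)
Your proof is correct and takes essentially the same route as the paper's: both rest on the polar decomposition $A=VQ$, the observation that unitarity of $U_{P}$ makes it cancel out of the transfer matrix $A^\dagger A = Q^2$ (your $G$), and the fact that commutation with a positive semidefinite operator lifts to its unique positive square root (your functional-calculus step is the paper's ``same eigen-structure'' argument). The only cosmetic difference is that you work in vectorized index notation where the paper argues diagrammatically; the conjugate/transpose bookkeeping you flag is exactly what the paper's arrow convention encodes, and your identity $\mathrm{vec}(BXC)=(B\otimes C^{T})\mathrm{vec}(X)$ handles it correctly.
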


\begin{proof}
    The proof is constructive. We apply the polar decomposition to \(A\), interpreted as an operator according to the arrows:
\begin{align}
    \begin{tikzpicture}[scale=0.5,baseline={([yshift=-0.65ex] current bounding box.center) }]
        \GTensor{0,0}{1.2}{0.6}{\small $A$}{0};
        \RedArrow{-0.8,0}{0.2}{1};
        \RedArrow{0.8,0}{0.2}{3};
        \Arrow{0,1.0}{0.2}{0};
\end{tikzpicture}
    \; = \;\begin{tikzpicture}[scale=0.5,baseline={([yshift=-3.65ex] current bounding box.center) }]
        \GFTensor{0,0}{1.2}{.6}{\small $Q$}{0};
        \UFTensor{0,\singledx}{1.2}{.6}{\small $V$}{1};
        \RedArrow{-0.8,0}{0.2}{1};
        \RedArrow{0.8+\singledx,0}{0.2}{3};
        \RedArrow{0,1.}{0.2}{0};
        \RedArrow{\singledx,1.}{0.2}{0};
        \Arrow{0.5*\singledx,1.0+\singledx}{0.2}{0};
\end{tikzpicture}
\end{align}
Here \(Q\) is a positive semidefinite Hermitian matrix, and \(V\) is an isometry when restricted to the image of $Q$.
Now we evaluate the transfer matrix under the polar decomposition
\begin{align}
    \begin{tikzpicture}[scale=0.5,baseline={([yshift=-0.65ex] current bounding box.center) }]
        \GTensor{0,0}{1.2}{.6}{\small $A$}{0};
        \RedArrow{-0.8,0}{0.2}{1};
        \RedArrow{0.8,0}{0.2}{3};
        \GTensor{0,\singledx}{1.2}{.6}{\small $A^\dag$}{7};
        \RedArrow{-1,\singledx}{0.2}{3};
        \RedArrow{1,\singledx}{0.2}{1};
\end{tikzpicture}
=\begin{tikzpicture}[scale=0.5,baseline={([yshift=-0.65ex] current bounding box.center) }]
        \GFTensor{0,0}{1.2}{.6}{\small $Q$}{0};
        \RedArrow{-0.8,0}{0.2}{1};
        \RedArrow{0.8+\singledx,0}{0.2}{3};
        \RedArrow{-1,3*\singledx}{0.2}{3};
        \RedArrow{1+\singledx,3*\singledx}{0.2}{1};
        \UFTensor{0,\singledx}{1.}{.6}{\small $V$}{1};
        \UFTensor{0,2*\singledx}{1.}{.6}{\small $V^\dag$}{2};
        \GFTensor{0,3*\singledx}{1.2}{.6}{\small $Q^\dag$}{1};
\end{tikzpicture}
=\begin{tikzpicture}[scale=0.5,baseline={([yshift=-0.65ex] current bounding box.center) }]
        \GFTensor{0,0}{1.2}{.6}{\small $Q$}{0};
        \RedArrow{-0.8,0}{0.2}{1};
        \RedArrow{0.8+\singledx,0}{0.2}{3};
        \RedArrow{-1,\singledx}{0.2}{3};
        \RedArrow{1+\singledx,\singledx}{0.2}{1};
        \GFTensor{0,\singledx}{1.2}{.6}{\small $Q^\dag$}{1};
\end{tikzpicture}
=\begin{tikzpicture}[scale=0.5,baseline={([yshift=-0.65ex] current bounding box.center) }]
        \GFTensor{0,0}{1.2}{.6}{\small $Q^2$}{2};
        \RedArrow{0,-0.8}{0.2}{0};
        \RedArrow{\singledx,-0.8}{0.2}{0};
        \RedArrow{0,1}{0.2}{0};
        \RedArrow{\singledx,1}{0.2}{0};
\end{tikzpicture}
\,.
\end{align}
In the third equality we use the fact that \(Q\) and \(V\) share the same null space to cancel out \(VV^\dag\); in the fourth equality we use the Hermitian property \(Q=Q^\dag\). We now apply the symmetry constraint to the transfer matrix:
\begin{align}
\begin{tikzpicture}[scale=0.5,baseline={([yshift=-0.65ex] current bounding box.center) }]
        \GTensor{0,0}{1.2}{.6}{\small $A$}{0};
        \RedArrow{-0.8,0}{0.2}{1};
        \RedArrow{0.8,0}{0.2}{3};
        \GTensor{0,\singledx}{1.2}{.6}{\small $A^\dag$}{7};
        \RedArrow{-1,\singledx}{0.2}{3};
        \RedArrow{1,\singledx}{0.2}{1};
\end{tikzpicture}
    =\begin{tikzpicture}[scale=0.5,baseline={([yshift=-0.65ex] current bounding box.center) }]
        \GTensor{0,0}{1.2}{.6}{\small $A$}{0};
        \RedArrow{-0.8,0}{0.2}{1};
        \RedArrow{0.8,0}{0.2}{3};
        \RedArrow{-1,\singledx}{0.2}{3};
        \RedArrow{1,\singledx}{0.2}{1};
        \GTensor{0,\singledx}{1.2}{.6}{\small $A^\dag$}{7};
        \DTensor{-\singledx,0}{1.}{.6}{\small $P_i$}{0};
        \DTensor{\singledx,0}{1.}{.6}{\small $P_i'^{*}$}{0};
        \DTensor{-\singledx,\singledx}{1.}{.6}{\small $P_i^\dag$}{0};
        \DTensor{\singledx,\singledx}{1.}{.6}{\small $P_i'^{T}$}{0};
\end{tikzpicture}
=\begin{tikzpicture}[scale=0.5,baseline={([yshift=-0.65ex] current bounding box.center) }]
        \GFTensor{0,\singledx}{1.2}{.6}{\small $Q^2$}{2};
        \RedArrow{0,\singledx-0.8}{0.2}{0};
        \RedArrow{\singledx,\singledx-0.8}{0.2}{0};
        \RedArrow{0,\singledx+1}{0.2}{0};
        \RedArrow{\singledx,\singledx+1}{0.2}{0};
        \DTensor{0,0}{1.}{.6}{\small $P_i$}{1};
        \DTensor{\singledx,0}{1.}{.6}{\small $P_i'^{*}$}{1};
        \DTensor{0,2*\singledx}{1.}{.6}{\small $P_i^\dag$}{1};
        \DTensor{\singledx,2*\singledx}{1.}{.6}{\small $P_i'^{T}$}{1};
\end{tikzpicture}
\end{align}
In other words, \(Q^2\) commutes with \(P_i \otimes P_i'^*\). This implies that there exist a basis such that \(Q^2\) and \(P_i \otimes P_i'^*\) are simultaneously diagonalizable, and by definition this basis is one possible eigenbasis of \(Q^2\). Importantly, because \(Q\) is positive semidefinite, \(Q\) and \(Q^2\) share the same eigen-structures, that means for each eigenvalue, the associated (possibly degenerate) eigen-subspaces are identical. As a result, the basis we have chosen to simultaneously diagonalize \(Q^2\) and \(P_i \otimes P_i'^*\) can also diagonalize \(Q\), so \(Q\) and \(P_i \otimes P_i'^*\) commute as well.
\end{proof}

The above theorem constructs a new MPS tensor \(Q\) by applying an isometry \(V^\dag\) to the original MPS tensor such that \(Q\) has MF symmetry but with \(P_i^\dag \otimes P_i'^T\) as the error-correcting operators. A natural question then is how the original error-correcting operators \(U_{P_i}\) are related to \(P_i^\dag \otimes P_i'^T\). We show that they agree in the range of \(Q\).
\begin{corollary}\label{Up_PP_relation}
    Let \(R=V^\dag V\) be the projector onto the domain of \(V\) (equivalently the range of \(Q\)). Then
    \begin{align}
        V^\dag U_{P_i} V = (P_i^\dag \otimes P_i^T) R \;.
    \end{align}
    Pictorially,
    \begin{align}
    \begin{tikzpicture}[scale=0.5,baseline={([yshift=-0.25ex] current bounding box.center) }]
        \UFTensor{0,0}{1.}{.6}{\small $V$}{1};
        \UTensor{0.5*\singledx,\singledx}{1.}{.6}{\small $U_{P_i}$}{3};
        \UFTensor{0,2*\singledx}{1.}{.6}{\small $V^\dag$}{2};
        \end{tikzpicture}
        \;= \;
        \begin{tikzpicture}[scale=0.5,baseline={([yshift=-3.25ex] current bounding box.center) }]
        \DTensor{0,0}{1.}{.6}{\small $P_i^\dag$}{1};
        \DTensor{\singledx,0}{1.}{.6}{\small $P_i'^T$}{1};
        \UFTensor{0,-\singledx}{1.}{.6}{\small $R$}{3};
        \end{tikzpicture}
        \;.
    \end{align}
\end{corollary}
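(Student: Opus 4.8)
The plan is to recast the MF symmetry as a single operator identity, substitute the polar decomposition $A=VQ$, and then peel off the positive factor $Q$ using the partial-isometry structure of $V$.

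First I would read the MF symmetry of Theorem~\ref{structural_theorem_1} as an identity between operators $\mathbb{C}^{D^2}\to\mathbb{C}^{d}$, where $A$ is interpreted (following the arrows) as the map sending the two virtual legs to the physical index. Writing $\Pi_i := P_i\otimes P_i'^{*}$ for the unitary acting on the left and right virtual legs, the symmetry reads $A = U_{P_i}\,A\,\Pi_i$, equivalently $U_{P_i}A = A\,\Pi_i^{\dagger}$. This is the same $\Pi_i$ that appears in the transfer-matrix computation of the theorem, so I may import its conclusion $[Q,\Pi_i]=0$ directly, together with $\Pi_i^{\dagger}=P_i^{\dagger}\otimes P_i'^{T}$.

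Next I would substitute $A=VQ$. Using $[Q,\Pi_i]=0$ on the right gives $U_{P_i}VQ = V\,\Pi_i^{\dagger}\,Q$. Left-multiplying by $V^{\dagger}$ and recalling $V^{\dagger}V=R$ yields $V^{\dagger}U_{P_i}V\,Q = R\,\Pi_i^{\dagger}Q$. Since $\Pi_i$ commutes with $Q$ it also commutes with the spectral projector $R$ onto $\mathrm{range}(Q)$, and because $\mathrm{range}(Q)=\mathrm{range}(R)$ we have $RQ=Q$; hence the right-hand side collapses to $\Pi_i^{\dagger}Q=(P_i^{\dagger}\otimes P_i'^{T})Q$. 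Thus $V^{\dagger}U_{P_i}V$ and $\Pi_i^{\dagger}$ agree on $\mathrm{range}(Q)=\mathrm{range}(R)$.

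I expect the main obstacle to be that $Q$ may be singular (the MPS need not be injective), so one cannot simply cancel $Q$ from the relation $V^{\dagger}U_{P_i}V\,Q=\Pi_i^{\dagger}Q$; this is exactly where the projector $R$ in the statement originates. To finish, I would use that $V$ is a partial isometry with $V^{\dagger}V=R$, so $VR=V$ and therefore $V^{\dagger}U_{P_i}V = V^{\dagger}U_{P_i}VR$ already annihilates $\ker R$. Decomposing an arbitrary vector as $Rx+(I-R)x$ and invoking the agreement on $\mathrm{range}(R)$ for the first summand then gives $V^{\dagger}U_{P_i}V = \Pi_i^{\dagger}R = (P_i^{\dagger}\otimes P_i'^{T})R$, which is the claim. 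The only delicate bookkeeping is verifying $[\Pi_i,R]=0$ and that $\mathrm{range}(Q)$ exhausts $\mathrm{range}(R)$, both immediate from $Q\ge 0$ and the construction of the polar decomposition.
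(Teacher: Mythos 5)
Your proof is correct and follows essentially the same route as the paper's: substitute the polar decomposition $A=VQ$ into the MF symmetry, invoke the commutation $[Q,\,P_i\otimes P_i'^{*}]=0$ from Theorem~\ref{structural_theorem_1}, and use $V^{\dagger}V=R$ together with $VR=V$ to isolate $V^{\dagger}U_{P_i}V$. The only cosmetic difference is in how the singular factor $Q$ is cancelled --- the paper multiplies by the pseudo-inverse and uses $QQ^{-1}=R$, whereas you argue via agreement on $\mathrm{range}(Q)$ plus the decomposition $x=Rx+(I-R)x$ --- which is the same fact in different clothing.
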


\begin{proof}
We write the MPS tensor in the form of Eq.~(\ref{mps_polar_decomp}) and then apply the symmetry
\begin{align}
    \begin{tikzpicture}[scale=0.5,baseline={([yshift=-0.65ex] current bounding box.center) }]
        \GFTensor{0,0}{1.2}{.6}{\small $Q$}{0};
        \UFTensor{0,\singledx}{1.}{.6}{\small $V$}{1};
        \RedArrow{-0.8,0}{0.2}{1};
        \RedArrow{0.8+\singledx,0}{0.2}{3};
        \RedArrow{0,1.}{0.2}{0};
        \RedArrow{\singledx,1.}{0.2}{0};
\end{tikzpicture}
=\begin{tikzpicture}[scale=0.5,baseline={([yshift=-3.65ex] current bounding box.center) }]
        \GFTensor{0,0}{1.2}{.6}{\small $Q$}{0};
        \UFTensor{0,\singledx}{1.}{.6}{\small $V$}{1};
        \RedArrow{-0.8,0}{0.2}{1};
        \RedArrow{0.8+\singledx,0}{0.2}{3};
        \RedArrow{0,1.}{0.2}{0};
        \RedArrow{\singledx,1.}{0.2}{0};
        \DTensor{-\singledx,0}{1.}{.6}{\small $P_i$}{0};
        \UTensor{0.5*\singledx,2*\singledx}{1.}{.6}{\small $U_{P_i}$}{3};
        \DTensor{2*\singledx,0}{1.}{.6}{\small $P_i'^{*}$}{0};
\end{tikzpicture}
\;.
\end{align}
We will read the above equation from bottom to top so that
\begin{align}
    V Q = U_{P_i} V Q (P_i \otimes P_i'^{*})
\end{align}
We now apply the pseudo-inverse of \(Q\) from the right
\begin{align}
    V Q Q^{-1} &= U_{P_i} V Q (P_i \otimes P_i'^{*}) Q^{-1} \\
    V Q Q^{-1} &= U_{P_i} V Q Q^{-1} (P_i \otimes P_i'^{*}) \\
    V R &= U_{P_i} V R (P_i \otimes P_i'^{*})
\end{align}
Where in the second line, we use the fact that \(Q\) and \((P_i \otimes P_i'^{*})\) commute; in the third line we realize that \(Q Q^{-1}\) is the orthogonal projector to the domain of \(Q\) which by definition is \(R\). Lastly, we multiply \(V^\dag\) from the left and \((P_i^\dag \otimes P_i'^{T})\) from the right.
\begin{align}
    V^\dag V R (P_i^\dag \otimes P_i'^{T}) &= V^\dag U_{P_i} V R \\
    (P_i^\dag \otimes P_i'^{T}) R &= V^\dag U_{P_i} V
\end{align}
Where in the second line, \(V^\dag V=R\) since \(V\) and \(Q\) share the same null space, and \(V R=V V^\dag V=V\). Finally, \(R\) and \((P_i^\dag \otimes P_i'^{T})\) commute so we can swap their order.
\end{proof}

In the case where the tensor \(A\) is injective, the projector \(R\) is trivial, and thus one can immediately conclude that \(U_{P_i}\) and \((P^\dag \otimes P^T)\) are isometrically equivalent to each other. On the other hand, the non-injective case is richer, since the \(V^\dag U_{P_i} V \) and \((P^\dag \otimes P^T)\) only need to agree on the range of \(R\). As an example, we apply Corollary~\ref{Up_PP_relation} to the AKLT state where $D=2$ and $d=3$, hence the tensor is non-injective. The correction operators \(U_{P_i}\) are:
\begin{align}
    U_Y &= \ket{+1}\bra{-1} - \ket{0}\bra{0} + \ket{-1}\bra{+1}\\
    U_X &= \ket{+1}\bra{-1} + \ket{0}\bra{0} + \ket{-1}\bra{+1}\\
    U_Z &= \ket{+1}\bra{+1} - \ket{0}\bra{0} + \ket{-1}\bra{-1}\\
    U_I &= \ket{+1}\bra{+1} + \ket{0}\bra{0} + \ket{-1}\bra{-1}
\end{align}
The AKLT state can also be represented as triplet projectors acting on the maximally entangled state~\cite{affleck2004rigorous}, so \(V\) has the following form:
\begin{align}
    V = \ket{+1}\bra{11} + \ket{0}\bra{T} + \ket{-1}\bra{00}
\end{align}
where we denote the triplet state \(\ket{T}=\frac{1}{\sqrt{2}}(\ket{01}+\ket{10})\) and the singlet state \(\ket{S}=\frac{1}{\sqrt{2}}(\ket{01}-\ket{10})\). We explicitly write down
\begin{align}
    V^\dag U_{Z} V &= \ket{00}\bra{00} + \ket{11}\bra{11} - \ket{T}\bra{T}\\
    Z \otimes Z & = \ket{00}\bra{00} + \ket{11}\bra{11}  - \ket{T}\bra{T} - \ket{S}\bra{S}\\
    V^\dag U_{X} V &= \ket{00}\bra{11} + \ket{11}\bra{00} + \ket{T}\bra{T}\\
    X \otimes X &= \ket{00}\bra{11} + \ket{11}\bra{00} + \ket{T}\bra{T} - \ket{S}\bra{S}
\end{align}
from which it becomes clear that \(V^\dag U_{Z} V\) and \(Z \otimes Z\), \(V^\dag U_{X} V\) and \(X \otimes X\) indeed agree in the domain of \(V\), i.e., \(\mathrm{Span} \{\ket{00}, \ket{T}, \ket{11}\}\), but disagree on \(\ket{S}\). Lastly, we note that similar results are reported in Ref.~\cite{smith2024constant}.

\subsubsection{Clifford Structure}

We now show that when the MF group is the Weyl-Heisenberg group, then \(Q\) possesses a Clifford-like structure. We begin with the following lemma.
\begin{lemma}\label{lemma_clifford_purification}
    Suppose an isometry \(V: \mathcal{C}^D \rightarrow \mathcal{C}^{nD}\) has the following symmetry which maps the \(D\)-dimensional Weyl-Heisenberg operators \(P_i\) to a multiqudit Weyl-Heisenberg operator \((P_{i1} \otimes P_{i2} \otimes ... \otimes P_{in})\):
    \begin{align}
        V = P_{i}^\dag V (P_{i1} \otimes P_{i2} \otimes ... \otimes P_{in}), \quad \forall P_i \,.
  \end{align}
  Then there exist a Clifford unitary \(U: \mathcal{C}^{nD} \rightarrow \mathcal{C}^{nD}\) that maps \(I \otimes I \otimes ... \otimes P_i\) to \((P_{i1} \otimes P_{i2} \otimes ... \otimes P_{in})\).
\end{lemma}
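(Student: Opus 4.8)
The plan is to translate the hypothesis into the phase-space (symplectic) description of the Weyl--Heisenberg and Clifford groups, and then invoke the transitivity of the symplectic group on hyperbolic pairs. I first read the symmetry as an intertwining relation $\tilde P_i V = V P_i$ on the $n$-qudit space $(\mathbb C^D)^{\otimes n}$, where $\tilde P_i := P_{i1}\otimes\cdots\otimes P_{in}$, and observe that, because $V$ is an isometry and hence injective, the assignment $P_i \mapsto \tilde P_i$ must preserve the commutation phases of the Weyl--Heisenberg group. Concretely, writing $P_i P_j = \mu_{ij} P_j P_i$ and using the intertwiner twice gives $\tilde P_i \tilde P_j V = \tilde P_i V P_j = V P_i P_j = \mu_{ij} V P_j P_i = \mu_{ij} \tilde P_j \tilde P_i V$. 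Since $\tilde P_i,\tilde P_j$ are genuine multiqudit Weyl--Heisenberg operators by hypothesis, they already obey $\tilde P_i \tilde P_j = \lambda_{ij}\tilde P_j \tilde P_i$ for a global phase $\lambda_{ij}$, and injectivity of $V$ forces $\lambda_{ij}=\mu_{ij}$. In particular, denoting by $\tilde X,\tilde Z$ the images of the single-qudit generators $X,Z$, the relation $ZX=\omega XZ$ with $\omega=e^{i2\pi/D}$ is inherited as $\tilde Z\tilde X=\omega \tilde X\tilde Z$.

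Next I would pass to phase space. Each multiqudit Weyl--Heisenberg operator is labeled, up to a phase, by a vector in $\mathbb Z_D^{2n}$, and the commutation phases are encoded by the standard symplectic form $\langle\cdot,\cdot\rangle$ on $\mathbb Z_D^{2n}$. The previous paragraph says exactly that the labels $(\vec x,\vec z)$ of $(\tilde X,\tilde Z)$ satisfy $\langle \vec x,\vec z\rangle=1$, i.e. they form a hyperbolic pair, precisely as do the labels of the last-qudit generators $I^{\otimes(n-1)}\otimes X$ and $I^{\otimes(n-1)}\otimes Z$. The core step is then the transitivity of the symplectic group $\mathrm{Sp}(2n,\mathbb Z_D)$ on hyperbolic pairs (symplectic Gram--Schmidt, i.e. Witt's extension theorem): there exists $S\in\mathrm{Sp}(2n,\mathbb Z_D)$ carrying the standard last-qudit pair onto $(\vec x,\vec z)$. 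Lifting $S$ through the standard surjection of the qudit Clifford group onto $\mathrm{Sp}(2n,\mathbb Z_D)$ yields a Clifford $U$ with $U(I^{\otimes(n-1)}\otimes X)U^\dagger \propto \tilde X$ and $U(I^{\otimes(n-1)}\otimes Z)U^\dagger \propto \tilde Z$, hence $U(I^{\otimes(n-1)}\otimes P_i)U^\dagger \propto \tilde P_i$ for every $P_i$; any residual phase is removed by post-composing with a Pauli, which is again Clifford.

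As an independent sanity check of existence, I would note the Stone--von Neumann route: both $P_i\mapsto I^{\otimes(n-1)}\otimes P_i$ and $P_i\mapsto \tilde P_i$ are representations of the single-qudit Weyl--Heisenberg group on a $D^n$-dimensional space, so by uniqueness of the irrep each decomposes into $D^{n-1}$ multiplicity copies of the same $D$-dimensional irrep and the two are therefore unitarily equivalent. This guarantees that some intertwining unitary exists; the nontrivial content of the lemma, supplied by the symplectic argument above, is that this unitary can be chosen to be Clifford.

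The main obstacle I anticipate is the bookkeeping for general, non-prime $D$. Over a field ($D$ prime) both transitivity on hyperbolic pairs and the Clifford--symplectic correspondence are textbook, but over the ring $\mathbb Z_D$ one must verify that a hyperbolic pair whose pairing is a unit ($=1$ here) splits off an orthogonal hyperbolic plane and extends to a full symplectic basis, and that all of $\mathrm{Sp}(2n,\mathbb Z_D)$ is realized by qudit Cliffords. The unit pairing is exactly the ingredient that makes these ring-theoretic statements go through, so the only genuine care lies in tracking phases and in citing the qudit Clifford--symplectic dictionary; the structure of the argument is otherwise unchanged.
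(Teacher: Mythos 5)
Your overall strategy --- pass to phase space, show the prescribed images form a hyperbolic pair, extend by symplectic transitivity, lift to a Clifford --- is essentially the same completion argument the paper runs; the paper just routes it through the explicit tableau algorithm of Hostens et al.\ (two columns of $C$ and two entries of $h$ are prescribed, and the completion algorithm fills in the rest) rather than through Witt extension plus the Clifford-to-$\mathrm{Sp}(2n,\mathbb{Z}_D)$ surjection. Your derivation of commutation-phase preservation from injectivity of $V$ is exactly the paper's first verified constraint.

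However, there is a genuine gap at your final step, and it sits precisely in the even/composite-$D$ regime you flag: you verify only the symplectic (commutation) condition, never the order-preservation condition, and your phase-fixing claim fails without it. The criterion the paper invokes requires the prescribed assignment to preserve commutation relations \emph{and} to preserve orders, i.e.\ $\tilde X^D = \tilde Z^D = I$ exactly; the second condition is not implied by the symplectic labels. Take $D=2$ with prescribed image $\tilde X = XZ$: its label $(1,1)$ forms a perfectly good hyperbolic pair with the label of $Z$, your Witt-extension-plus-lift machinery produces a Clifford with $U X U^\dagger = \pm Y = \pm i\, XZ$, and the residual phase $\pm i$ is a primitive $2D$-th root of unity that no Pauli conjugation (which only produces $D$-th roots of unity, i.e.\ signs here) can remove --- indeed no unitary conjugation can map the order-$2$ operator $X$ exactly to the order-$4$ operator $XZ$. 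What saves the lemma is that the hypothesis excludes such data: applying the intertwining relation $D$ times gives $\tilde P_i^D V = V P_i^D = V$, and since $\tilde P_i^D$ is a phase times the identity, injectivity of $V$ forces $\tilde P_i^D = I$. This is the paper's second constraint, and it is exactly what guarantees that the residual phase after your lift satisfies $\alpha^D = 1$, so that it can be killed by a Pauli (the unimodular pairing $\langle \vec x,\vec z\rangle = 1$ lets you solve for a Pauli label $b$ correcting both generators simultaneously). Adding this one injectivity argument closes the gap; as written, the step ``any residual phase is removed by post-composing with a Pauli'' is false.
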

We defer the proof to Appendix \ref{existance_clifford_purification}. The above lemma states that for any isometry with Clifford-like symmetries, one can always find a Clifford unitary with the same symmetries. With that, we are ready to unveil the Clifford structure of the MF MPS.
\begin{theorem}\label{structural_theorem_2}
    In the setting of \cref{structural_theorem_1}, when \(\{P_i\}\) is the \(D\)-dimensional Weyl-Heisenberg group, \(Q\) can be written in the following Clifford form when interpreted sideways:
\begin{align}\label{clifford_magic}
    \begin{tikzpicture}[scale=0.5,baseline={([yshift=-0.65ex] current bounding box.center) }]
        \draw[very thick, draw=red] (0,0) -- (0,2.4);
        \draw[very thick, draw=red] (\singledx,0) -- (\singledx,1.2);
        \draw[very thick, draw=red] (0,2.4) -- (\singledx+1.2,2.4);
        \draw[very thick, draw=red] (\singledx,1.2) -- (\singledx+1.2,1.2);
        \GFTensor{0,0}{1.2}{.6}{\small $Q$}{0};
        \RedArrow{-0.8,0}{0.2}{1};
        \RedArrow{1.+\singledx,0}{0.2}{1};
        \RedArrow{0,1.}{0.2}{0};
        \RedArrow{\singledx,1.}{0.2}{0};
\end{tikzpicture}
    \; \propto \;
    \begin{tikzpicture}[scale=0.5,baseline={([yshift=-0.65ex] current bounding box.center) }]
        \FUnitary{0,0}{1.2}{.6}{\small $U_C$}{0};
        \draw[ thick, rounded corners=2pt] (-1.2-1.2,1.2-0.6) rectangle (-1.2,3.6-0.6);
        \draw (-1.2-0.6,2.4-0.6) node {\small $\ket{\psi}$};
        \RedArrow{-0.8,0}{0.2}{1};
        \RedArrow{1,0}{0.2}{1};
        \RedArrow{1,1.2}{0.2}{1};
        \RedArrow{1,2.4}{0.2}{1};
\end{tikzpicture}
\;.
\end{align}
Here \(U_C: \mathbb{C}^{D^3} \rightarrow \mathbb{C}^{D^3}\) is a Clifford unitary satisfying the same symmetry constrains as $Q$ in Eq.~\eqref{eq:symmetry_Q} and \(\ket{\psi}\) is a (possibly magic) pure state.
\end{theorem}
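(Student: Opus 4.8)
The plan is to reinterpret $Q$ sideways as a single linear map and show that the symmetry of Eq.~\eqref{eq:symmetry_Q} forces it to be proportional to an isometry with exactly the Clifford-compatible intertwining structure consumed by Lemma~\ref{lemma_clifford_purification}. Following the arrows in the statement, I would keep as the \emph{sole} input the virtual leg carrying the full operator $P_i$ in Eq.~\eqref{eq:symmetry_Q} (the left leg), and bend the other three legs---which carry $P_i^\dagger$, $P_i'^T$, and $P_i'^*$---into outputs, producing a map $\tilde Q:\mathbb{C}^D\to\mathbb{C}^{D^3}$. Tracking the conjugate/transpose that each bent leg picks up (bending the input factor $P_i'^*$ turns it into $P_i'^\dagger$ acting on the new output leg), the symmetry collapses to the single intertwining relation $\tilde Q=(P_i^\dagger\otimes P_i'^T\otimes P_i'^\dagger)\,\tilde Q\, P_i$ for all $P_i$, where the three factors act on the output qudits and $P_i$ on the input. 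The point of retaining precisely the left leg is that its operators range over the \emph{full} Weyl-Heisenberg group, which is irreducible; had I bent it instead, the retained input would carry the possibly reducible image $\{P_i'\}$.

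Next I would upgrade $\tilde Q$ to an isometry. Since every operator in the intertwining relation is unitary, it yields $\tilde Q^\dagger\tilde Q=P_i^\dagger(\tilde Q^\dagger\tilde Q)P_i$, so $\tilde Q^\dagger\tilde Q$ commutes with the entire irreducible Weyl-Heisenberg group on $\mathbb{C}^D$. By Schur's lemma $\tilde Q^\dagger\tilde Q\propto I_D$, hence $\tilde Q=\lambda W$ with $W$ an isometry and $\lambda>0$ (the degenerate case $\tilde Q=0$ being excluded). This is the conceptual heart of the argument: although $Q$ is a generic positive-semidefinite operator that may carry magic, the sideways reshaping turns it into an isometry, which is what ultimately confines the magic to a \emph{state} rather than to the Clifford skeleton.

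The isometry $W$ inherits the intertwining relation, and its three target operators $P_i^\dagger$, $P_i'^T$, $P_i'^\dagger$ are all single-qudit Weyl-Heisenberg operators (the group is closed under dagger, transpose, and complex conjugation up to phases), so $P_i^\dagger\mapsto P_i^\dagger\otimes P_i'^T\otimes P_i'^\dagger$ is a projective representation whose first tensor factor reproduces the argument $P_i^\dagger$ itself. Lemma~\ref{lemma_clifford_purification} then provides a Clifford unitary $U_C$ on $\mathbb{C}^{D^3}$ that conjugates a single-qudit Weyl-Heisenberg operator on the first output qudit into exactly $P_i^\dagger\otimes P_i'^T\otimes P_i'^\dagger$. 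Setting $W'=U_C^\dagger W$, this conjugation reduces the intertwining relation to $(P_i^\dagger\otimes I\otimes I)\,W'=W'P_i^\dagger$, i.e. $W'$ intertwines the irreducible representation $\rho(P_i)=P_i$ with $\tau(P_i)=P_i\otimes I\otimes I$. Because $\tau\cong\rho^{\oplus D^2}$, Schur's lemma forces $W'=\alpha\,(I_D\otimes\ket{\psi})$ for some state $\ket{\psi}\in\mathbb{C}^{D^2}$ and scalar $\alpha$, so that $\tilde Q=\lambda\alpha\,U_C(I_D\otimes\ket{\psi})$. Up to relabeling which of the three output qudits is the ``free'' one (absorbed into $U_C$), this is precisely the claimed decomposition $\tilde Q\propto U_C(\ket{\psi}\otimes I_D)$; all nonstabilizer content sits in $\ket{\psi}$, while $U_C$ is Clifford and, by the conjugation property, satisfies the same symmetry Eq.~\eqref{eq:symmetry_Q} as $Q$.

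I expect the main obstacle to be the index bookkeeping at the two ends of the argument. First, the bending in step one must deliver operators that are genuinely Weyl-Heisenberg and assemble into a bona fide projective representation, since this is exactly the hypothesis feeding Lemma~\ref{lemma_clifford_purification}; a sign or transpose error here would break the applicability of the lemma. Second, one must verify that the residual intertwiner $W'$ is forced into the unentangled product form $I_D\otimes\ket{\psi}$ rather than something correlated across the output qudits---this is where the reduction $\tau\cong\rho^{\oplus D^2}$ and the second use of Schur's lemma are essential. Once these are nailed down, the final claim that $U_C$ inherits the symmetry of $Q$ follows immediately from its defining conjugation relation.
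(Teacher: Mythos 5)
Your proposal is correct and follows essentially the same route as the paper's proof: reinterpret $Q$ sideways as an isometry $\mathbb{C}^D\to\mathbb{C}^{D^3}$ intertwining $P_i$ with $P_i^\dagger\otimes P_i'^T\otimes P_i'^\dagger$, invoke Lemma~\ref{lemma_clifford_purification} to obtain a Clifford $U_C$ with the same intertwining data, and apply Schur's lemma to the residual intertwiner $U_C^\dagger W$ (using $\tau\cong\rho^{\oplus D^2}$) to force the product form $I_D\otimes\ket{\psi}$. The only deviation is how the isometry property is established—you derive $\tilde Q^\dagger\tilde Q\propto I_D$ directly from the symmetry of $Q$ via Schur's lemma, whereas the paper routes through Lemma~\ref{lemma_cf} applied to $A$ together with the null-space relation $V^\dagger V Q=Q$; both arguments are valid (indeed yours is a mild shortcut), and your treatment of the final Schur step is if anything more explicit than the paper's.
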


\begin{proof}
We begin with showing that \(Q\) is also in the MPS canonical form
\begin{align}
\begin{tikzpicture}[scale=0.5,baseline={([yshift=-0.65ex] current bounding box.center) }]
        \draw[very thick,red] (0,0) -- (1.2,0) -- (1.2,\singledx) -- (0,\singledx);
        \GTensor{0,0}{1.2}{0.6}{\small $A$}{0};
        \begin{scope}[yscale=-1,xscale=1]
        \GTensor{0,-\singledx}{1.2}{0.6}{\small $A^\dag$}{0};
        \end{scope}
         \RedArrow{-0.8,0}{0.2}{1};
        \RedArrow{-1,\singledx}{0.2}{3};
\end{tikzpicture}
=\begin{tikzpicture}[scale=0.5,baseline={([yshift=-0.65ex] current bounding box.center) }]
        \GFTensor{0,0}{1.2}{.6}{\small $Q$}{0};
        \RedArrow{-0.8,0}{0.2}{1};
        \RedArrow{1.+\singledx,0}{0.2}{1};
        \RedArrow{-1,3*\singledx}{0.2}{3};
        \RedArrow{0,1.}{0.2}{0};
        \RedArrow{\singledx,1.}{0.2}{0};
        \UFTensor{0,\singledx}{1.}{.6}{\small $V$}{1};
        \UFTensor{0,2*\singledx}{1.}{.6}{\small $V^\dag$}{2};
        \GFTensor{0,3*\singledx}{1.2}{.6}{\small $Q^\dag$}{1};
        \draw[very thick, draw=red](\singledx+1.2,0) -- (\singledx+1.2,3*\singledx);
\end{tikzpicture}
=\begin{tikzpicture}[scale=0.5,baseline={([yshift=-0.65ex] current bounding box.center) }]
        \GFTensor{0,0}{1.2}{.6}{\small $Q$}{0};
        \RedArrow{-0.8,0}{0.2}{1};
        \RedArrow{1.+\singledx,0}{0.2}{1};
        \RedArrow{-1,\singledx}{0.2}{3};
        \RedArrow{0,1.}{0.2}{0};
        \RedArrow{\singledx,1.}{0.2}{0};
        \GFTensor{0,\singledx}{1.2}{.6}{\small $Q^\dag$}{1};
        \draw[very thick, draw=red](\singledx+1.2,0) -- (\singledx+1.2,\singledx);
\end{tikzpicture}
\propto\begin{tikzpicture}[scale=0.5,baseline={([yshift=-0.65ex] current bounding box.center) }]
        \draw[very thick, draw=red](0,0) -- (0,\singledx);
        \draw[very thick, draw=red](-1.2,0) -- (0,0);
        \draw[very thick, draw=red](-1.2,\singledx) -- (0,\singledx);
        \RedArrow{-0.8,0}{0.2}{1};
        \RedArrow{-1,\singledx}{0.2}{3};
\end{tikzpicture}
\end{align}
We have again used the fact that \(Q\) and \(V\) share the same null space and the last equality follows from \cref{lemma_cf}. As a result, \(Q\) is also an isometry when interpreted sideways
\begin{align} \label{eq:def_V_Q}
    \begin{tikzpicture}[scale=0.5,baseline={([yshift=-0.65ex] current bounding box.center) }]
        \draw[very thick, draw=red] (0,0) -- (0,2.4);
        \draw[very thick, draw=red] (\singledx,0) -- (\singledx,1.2);
        \draw[very thick, draw=red] (0,2.4) -- (\singledx+1.2,2.4);
        \draw[very thick, draw=red] (\singledx,1.2) -- (\singledx+1.2,1.2);
        \GFTensor{0,0}{1.2}{.6}{\small $Q$}{0};
        \RedArrow{-0.8,0}{0.2}{1};
        \RedArrow{1.+\singledx,0}{0.2}{1};
        \RedArrow{0,1.}{0.2}{0};
        \RedArrow{\singledx,1.}{0.2}{0};
\end{tikzpicture}
    \; \propto \;
    \begin{tikzpicture}[scale=0.5,baseline={([yshift=-0.65ex] current bounding box.center) }]
        \FUnitary{0,0}{1.2}{.6}{\small $V_Q$}{1};
        \RedArrow{-0.8,0}{0.2}{1};
        \RedArrow{1,0}{0.2}{1};
        \RedArrow{1,1.2}{0.2}{1};
        \RedArrow{1,2.4}{0.2}{1};
\end{tikzpicture}
\end{align}
Because of Eqs. (\ref{eq:symmetry_Q}), \(V_Q\) has symmetries mapping \(P_i\) to \((P_i^\dag \otimes P_i'^T \otimes P_i'^\dagger)\).
\begin{align}
    \begin{tikzpicture}[scale=0.5,baseline={([yshift=-0.65ex] current bounding box.center) }]
        \FUnitary{0,0}{1.2}{.6}{\small $V_Q$}{1};
\end{tikzpicture}
=\begin{tikzpicture}[scale=0.5,baseline={([yshift=-0.65ex] current bounding box.center) }]
        \FUnitary{0,0}{1.2}{.6}{\small $V_Q$}{1};
        \DTensor{-\singledx,0}{1.2}{.6}{\small $P_i^\dag$}{0};
        \DTensor{\singledx,0}{1.2}{.5}{\small $P_i'^\dag$}{0};
        \DTensor{\singledx,1.2}{1.2}{.5}{\small $P_i'^T$}{0};
        \DTensor{\singledx,2.4}{1.2}{.5}{\small $P_i^\dag$}{0};
\end{tikzpicture}
\end{align}
Note that the Weyl-Heisenberg group is closed under taking transpose or complex conjugation. Now we use Lemma \ref{lemma_clifford_purification} to find a Clifford unitary \(U_C\) that also maps under adjoint action \(I \otimes I \otimes P_i\) to \(P_i^\dag \otimes P_i'^T \otimes P'^\dagger\). Contracting \(U_C\) with \(V_Q\).
\begin{align}
    \begin{tikzpicture}[scale=0.5,baseline={([yshift=-0.65ex] current bounding box.center) }]
        \FUnitary{0,0}{1.2}{.6}{\small $V_Q$}{1};
        \FUnitary{\singledx,0}{1.2}{.6}{\small $U_C^\dag$}{0};
        \RedArrow{-0.8,0}{0.2}{1};
        \RedArrow{1+\singledx,0}{0.2}{1};
        \RedArrow{1+\singledx,1.2}{0.2}{1};
        \RedArrow{1+\singledx,2.4}{0.2}{1};
\end{tikzpicture}
\end{align}
We can again apply the symmetries of \(V_Q\) and \(U_C\) to show that any \(P_i\) can be pushed through the virtual leg:
\begin{align}
    \begin{tikzpicture}[scale=0.5,baseline={([yshift=-0.65ex] current bounding box.center) }]
        \DTensor{-\singledx,0}{1.2}{0.6}{\small $P_i$}{0}
        \FUnitary{0,0}{1.2}{.6}{\small $V_Q$}{1};
        \FUnitary{\singledx,0}{1.2}{.6}{\small $U_C^\dag$}{0};
        \RedArrow{-0.8,0}{0.2}{1};
        \RedArrow{1+\singledx,0}{0.2}{1};
        \RedArrow{1+\singledx,1.2}{0.2}{1};
        \RedArrow{1+\singledx,2.4}{0.2}{1};
\end{tikzpicture}
=\begin{tikzpicture}[scale=0.5,baseline={([yshift=-0.65ex] current bounding box.center) }]
        \DTensor{2*\singledx,0}{1.2}{0.6}{\small $P_i$}{0}
        \FUnitary{0,0}{1.2}{.6}{\small $V_Q$}{1};
        \FUnitary{\singledx,0}{1.2}{.6}{\small $U_C^\dag$}{0};
        \RedArrow{-0.8,0}{0.2}{1};
        \RedArrow{1+\singledx,0}{0.2}{1};
        \RedArrow{1+\singledx,1.2}{0.2}{1};
        \RedArrow{1+\singledx,2.4}{0.2}{1};
\end{tikzpicture}
\end{align}
Then by Schur's lemma, we can conclude that the contraction results in an identify on the virtual leg.
\begin{align}
    \begin{tikzpicture}[scale=0.5,baseline={([yshift=-0.65ex] current bounding box.center) }]
        \FUnitary{0,0}{1.2}{.6}{\small $V_Q$}{1};
        \FUnitary{\singledx,0}{1.2}{.6}{\small $U_C^\dag$}{0};
        \RedArrow{-0.8,0}{0.2}{1};
        \RedArrow{1+\singledx,0}{0.2}{1};
        \RedArrow{1+\singledx,1.2}{0.2}{1};
        \RedArrow{1+\singledx,2.4}{0.2}{1};
\end{tikzpicture}
=\begin{tikzpicture}[scale=0.5,baseline={([yshift=-0.65ex] current bounding box.center) }]
        \draw[very thick,draw=red] (-1.2,0)--(1.2,0);
        \draw[very thick,draw=red] (0,1.2)--(1.2,1.2);
        \draw[very thick,draw=red] (0,2.4)--(1.2,2.4);
        \draw[ thick, fill=white, rounded corners=2pt] (-0.6,1.2-0.6) rectangle (0.6,3.6-0.6);
        \draw (0,2.4-0.6) node {\small $\ket{\psi}$};
        \RedArrow{-0.8,0}{0.2}{1};
        \RedArrow{1,0}{0.2}{1};
        \RedArrow{1,1.2}{0.2}{1};
        \RedArrow{1,2.4}{0.2}{1};
\end{tikzpicture}
\end{align}
The final equation is equivalent to \cref{clifford_magic}.
\end{proof}

This result has several interesting consequences. First, it shows that, when $\mathcal B$ the Weyl-Heisenberg group, the mechanism that allows gluing together the different tensors is a Clifford property. In that sense, all MF protocols closely resemble the parallelization of stabilizer state preparation via gate teleportation~\cite{gottesman1999demonstrating}. Nevertheless, as we have seen the resulting MF MPS are not necessarily stabilizer states. According to \cref{structural_theorem_1} and \cref{structural_theorem_2} this can happen due to (a) $\ket{\psi}$ being nonstabilizer, and (b) $V^\dagger$ not having the Clifford symmetry.

Importantly, (a) and (b) are, in general, separate sources of magic.
To see that, consider applying a rotation \(\tilde{U}\) to \(\ket{\psi}\) in Eq.~(\ref{clifford_magic}). We push \(\tilde{U}\) through \(U_C\). For a generic rotation, this results in a global unitary \(\tilde{U}'\) supported on both the physical legs (top two legs) and the virtual leg.
\begin{align}
    \begin{tikzpicture}[scale=0.5,baseline={([yshift=-0.65ex] current bounding box.center) }]
        \FUnitary{0,0}{1.2}{.6}{\small $U_C$}{0};
        \Unitary{-\singledx,1.2}{1.2}{.6}{\small $\tilde{U}$}{2};
        \draw[ thick, rounded corners=2pt] (-1.2-1.2-\singledx,1.2-0.6) rectangle (-1.2-\singledx,3.6-0.6);
        \draw (-1.2-0.6-\singledx,2.4-0.6) node {\small $\ket{\psi}$};
        \RedArrow{-0.8,0}{0.2}{1};
        \RedArrow{1,0}{0.2}{1};
        \RedArrow{1,1.2}{0.2}{1};
        \RedArrow{1,2.4}{0.2}{1};
\end{tikzpicture}
=\begin{tikzpicture}[scale=0.5,baseline={([yshift=-0.65ex] current bounding box.center) }]
        \FUnitary{0,0}{1.2}{.6}{\small $U_C$}{0};
        \FUnitary{\singledx,0}{1.2}{.6}{\small $\tilde{U}'$}{0};
        \draw[ thick, rounded corners=2pt] (-1.2-1.2,1.2-0.6) rectangle (-1.2,3.6-0.6);
        \draw (-1.2-0.6,2.4-0.6) node {\small $\ket{\psi}$};
        \RedArrow{-0.8,0}{0.2}{1};
        \RedArrow{1+\singledx,0}{0.2}{1};
        \RedArrow{1+\singledx,1.2}{0.2}{1};
        \RedArrow{1+\singledx,2.4}{0.2}{1};
\end{tikzpicture}
\end{align}
On the other hand, a local rotation results in a unitary only on the physical legs, so the magic originating from $\ket{\psi}$ cannot be represented as a local rotation.

We provide an example to illustrate the magic generated by (a). We consider the case where \( (P,P') \in \{ (X,X), (Z,Z)\}\). In this case, \(V_Q\) [Eq.~\eqref{eq:def_V_Q}] maps via adjoint action \(X \mapsto X^{\otimes 3}\) and \(Z \mapsto Z^{\otimes 3}\), so we can take the following \(U_C\):
\begin{align}
\begin{tikzpicture}[scale=0.5,baseline={([yshift=-0.65ex] current bounding box.center) }]
        \FUnitary{0,0}{1.2}{.6}{\small $U_C$}{0};
        \draw[ thick, rounded corners=2pt] (-1.2-1.2,1.2-0.6) rectangle (-1.2,3.6-0.6);
        \draw (-1.2-0.6,2.4-0.6) node {\small $\ket{\psi}$};
        \RedArrow{-0.8,0}{0.2}{1};
        \RedArrow{1,0}{0.2}{1};
        \RedArrow{1,1.2}{0.2}{1};
        \RedArrow{1,2.4}{0.2}{1};
\end{tikzpicture}
\; =
    \begin{tikzpicture}[scale=0.4,baseline={([yshift=-0.25ex] current bounding box.center) }]
        \draw[thick] (-1,0) -- (0,0) -- (0, 2*\singledx) -- (6*\singledx+1.2, 2*\singledx);
        \draw[thick] (\singledx, \singledx) -- (6*\singledx+1.2, \singledx);
        \draw[thick] (\singledx, 0) -- (6*\singledx+1.2, 0);
        \draw[thick, fill=white, rounded corners=2pt] (\singledx-0.6,-0.6) rectangle (\singledx+0.6,\singledx+0.6);
        \draw (\singledx,0.5*\singledx) node {\small $\ket{\psi}$};
        \CNOT{2*\singledx,2*\singledx}{\singledx};
        \Gate{3*\singledx,2*\singledx}{\small $H$};
        \CNOT{4*\singledx,2*\singledx}{2*\singledx};
        \Gate{5*\singledx,2*\singledx}{\small $H$};
        \Gate{5*\singledx,0*\singledx}{\small $H$};
        \CNOT{6*\singledx,1*\singledx}{\singledx};
        \end{tikzpicture}
\end{align}
One can verify that the above Clifford circuit indeed maps \(X\) and \(Z\) as stated before, regardless of $\ket{\psi}$. When \(\ket{\psi}=\ket{+}\ket{+}\), it generates the cluster state (up to a local unitary).
\begin{align}
    \begin{tikzpicture}[scale=0.4,baseline={([yshift=-0.25ex] current bounding box.center) }]
        \draw[thick] (-1,0) -- (0,0) -- (0, 2*\singledx) -- (6*\singledx+1.2, 2*\singledx);
        \draw[thick] (\singledx, \singledx) -- (6*\singledx+1.2, \singledx);
        \draw[thick] (\singledx, 0) -- (6*\singledx+1.2, 0);
        \HollowDot{\singledx,1*\singledx};
        \HollowDot{\singledx,0*\singledx};
        \draw (\singledx-1,1*\singledx) node {\small $\ket{+}$};
        \draw (\singledx-1,0*\singledx) node {\small $\ket{+}$};
        \CNOT{2*\singledx,2*\singledx}{\singledx};
        \Gate{3*\singledx,2*\singledx}{\small $H$};
        \CNOT{4*\singledx,2*\singledx}{2*\singledx};
        \Gate{5*\singledx,2*\singledx}{\small $H$};
        \Gate{5*\singledx,0*\singledx}{\small $H$};
        \CNOT{6*\singledx,1*\singledx}{\singledx};
        \end{tikzpicture}
        =\begin{tikzpicture}[scale=0.3,baseline={([yshift=-0.25ex] current bounding box.center) }]
        \draw[thick] (-1,0) -- (0,0) -- (0, 2*\singledx) -- (\singledx+1.2, 2*\singledx);
        \draw[thick] (\singledx, \singledx) -- (\singledx+1.2, \singledx);
        \draw[thick] (\singledx, 0) -- (\singledx+1.2, 0);
        \draw[thick] (\singledx, 0) -- (\singledx, \singledx);
        \end{tikzpicture}
\end{align}
On the other hand, one can choose an arbitrary \(\ket{\psi}\) and the resulting MPS still preserves the MF symmetry. 
\vtwo{For example, for AKLT state, $\ket{\psi}$ is the following non-stabilizer state.
\begin{align}
    \ket{\psi} &= \frac{1}{4\sqrt{3}}(3\ket{+}\ket{+} - \ket{-}\ket{+} -\ket{+}\ket{-} - \ket{-}\ket{-})\\
    &= \frac{1}{\sqrt{3}}(\ket{01}+\ket{10}+\ket{11})
\end{align}
This example illustrates how the tensor \(Q\) can contain magic even though it possesses a Clifford-like symmetry.
}

\vtwo{In addition, we point out a property of the MF MPS regarding the calculation of expectation values. After applying the local isometry $V^\dagger$, the evaluation of the Weyl-Heisenberg observable is simple even though the state could possess high magic. More precisely, consider the MPS after applying $V^\dagger$ to each tensor; recall that \(V^\dag A=Q\), and the resulting state can be sequentially generated using Clifford unitaries and product of magic initial states \cite{lami2024quantum}.}

\begin{align}
&\begin{tikzpicture}[scale=0.5,baseline={([yshift=-3.65ex] current bounding box.center) }]
        \draw[very thick, draw=red] (-1.2,0) -- (4*\singledx+1.2,0);
        \GTensor{0,0}{1.2}{0.6}{\small $A$}{0};
        \UFTensor{-0.5*\singledx,\singledx}{1.2}{.6}{\small $V^\dag$}{2};
        \GTensor{2*\singledx,0}{1.2}{0.6}{\small $A$}{0};
        \UFTensor{1.5*\singledx,\singledx}{1.2}{.6}{\small $V^\dag$}{2};
        \GTensor{4*\singledx,0}{1.2}{0.6}{\small $A$}{0};
        \UFTensor{3.5*\singledx,\singledx}{1.2}{.6}{\small $V^\dag$}{2};
\end{tikzpicture}\\
    = & \; \begin{tikzpicture}[scale=0.5,baseline={([yshift=-0.65ex] current bounding box.center) }]
        \GFTensor{0,0}{1.2}{.6}{\small $Q$}{0};
        \GFTensor{2*\singledx,0}{1.2}{.6}{\small $Q$}{0};
        \GFTensor{4*\singledx,0}{1.2}{.6}{\small $Q$}{0};
\end{tikzpicture}\\
    \; \propto & \;
    \begin{tikzpicture}[scale=0.4,baseline={([yshift=-0.65ex] current bounding box.center) }]
    \draw[very thick, draw=red] (-2.4,0) -- (6*\singledx+2.4,0);
    \begin{scope}[shift={(0,0)}]
        \FUnitary{0,0}{1.2}{.6}{\small $U_C$}{0};
        \draw[ thick, rounded corners=2pt] (-1.2-1.2,1.2-0.6) rectangle (-1.2,3.6-0.6);
        \draw (-1.2-0.6,2.4-0.6) node {\small $\ket{\psi}$};
        \draw[very thick, draw=red](1.2,2.4) -- (1.2,3.6);
        \draw[very thick, draw=red](1.2,1.2) -- (1.2+1.2,1.2) -- (1.2+1.2,3.6);
    \end{scope}
    \begin{scope}[shift={(3*\singledx,0)}]
        \FUnitary{0,0}{1.2}{.6}{\small $U_C$}{0};
        \draw[ thick, rounded corners=2pt] (-1.2-1.2,1.2-0.6) rectangle (-1.2,3.6-0.6);
        \draw (-1.2-0.6,2.4-0.6) node {\small $\ket{\psi}$};
        \draw[very thick, draw=red](1.2,2.4) -- (1.2,3.6);
        \draw[very thick, draw=red](1.2,1.2) -- (1.2+1.2,1.2) -- (1.2+1.2,3.6);
    \end{scope}
    \begin{scope}[shift={(6*\singledx,0)}]
        \FUnitary{0,0}{1.2}{.6}{\small $U_C$}{0};
        \draw[ thick, rounded corners=2pt] (-1.2-1.2,1.2-0.6) rectangle (-1.2,3.6-0.6);
        \draw (-1.2-0.6,2.4-0.6) node {\small $\ket{\psi}$};
        \draw[very thick, draw=red](1.2,2.4) -- (1.2,3.6);
        \draw[very thick, draw=red](1.2,1.2) -- (1.2+1.2,1.2) -- (1.2+1.2,3.6);
    \end{scope}
\end{tikzpicture}
\end{align}

\vtwo{After removing the inverse isometry, the sequential structure renders the computation of Weyl-Heisenberg expectation values efficient despite the fact that the resulting state potentially has high magic. This is because one can push Pauli strings through the sequential Clifford unitaries as shown below.}

\begin{align}\label{mps_op_ex1}
    &\begin{tikzpicture}[scale=0.4,baseline={([yshift=-0.65ex] current bounding box.center) }]
    \draw[very thick, draw=red] (-2.4,0) -- (6*\singledx+2.4,0);
    \begin{scope}[shift={(0,0)}]
        \FUnitary{0,0}{1.2}{.6}{\small $U_C$}{0};
        \draw[ thick, rounded corners=2pt] (-1.2-1.2,1.2-0.6) rectangle (-1.2,3.6-0.6);
        \draw (-1.2-0.6,2.4-0.6) node {\small $\ket{\psi}$};
        \draw[very thick, draw=red](1.2,2.4) -- (1.2,3.6);
        \draw[very thick, draw=red](1.2,1.2) -- (1.2+1.2,1.2) -- (1.2+1.2,3.6);
        \DTensor{1.2,4}{1}{.5}{}{1};
        \DTensor{2.4,4}{1}{.5}{}{1};
        \draw [very thick,->] (3.5,4.5) -- (3.5,3.5);
    \end{scope}
    \begin{scope}[shift={(3*\singledx,0)}]
        \FUnitary{0,0}{1.2}{.6}{\small $U_C$}{0};
        \draw[ thick, rounded corners=2pt] (-1.2-1.2,1.2-0.6) rectangle (-1.2,3.6-0.6);
        \draw (-1.2-0.6,2.4-0.6) node {\small $\ket{\psi}$};
        \draw[very thick, draw=red](1.2,2.4) -- (1.2,3.6);
        \draw[very thick, draw=red](1.2,1.2) -- (1.2+1.2,1.2) -- (1.2+1.2,3.6);
        \DTensor{1.2,4}{1}{.5}{}{1};
        \DTensor{2.4,4}{1}{.5}{}{1};
        \draw [very thick,->] (3.5,4.5) -- (3.5,3.5);
    \end{scope}
    \begin{scope}[shift={(6*\singledx,0)}]
        \FUnitary{0,0}{1.2}{.6}{\small $U_C$}{0};
        \draw[ thick, rounded corners=2pt] (-1.2-1.2,1.2-0.6) rectangle (-1.2,3.6-0.6);
        \draw (-1.2-0.6,2.4-0.6) node {\small $\ket{\psi}$};
        \draw[very thick, draw=red](1.2,2.4) -- (1.2,3.6);
        \draw[very thick, draw=red](1.2,1.2) -- (1.2+1.2,1.2) -- (1.2+1.2,3.6);
        \DTensor{1.2,4}{1}{.5}{}{1};
        \DTensor{2.4,4}{1}{.5}{}{1};
        \draw [very thick,->] (3.5,4.5) -- (3.5,3.5);
    \end{scope}
\end{tikzpicture}\\ \label{mps_op_ex2}
&=\begin{tikzpicture}[scale=0.4,baseline={([yshift=-0.65ex] current bounding box.center) }]
    \draw[very thick, draw=red] (-3.4,0) -- (6*\singledx+4.4,0);
    \begin{scope}[shift={(0,0)}]
        \FUnitary{0,0}{1.2}{.6}{\small $U_C$}{0};
        \draw[ thick, rounded corners=2pt] (-1.2-1.2-1.,1.2-0.6) rectangle (-1.2-1.,3.6-0.6);
        \draw (-1.2-0.6-1.,2.4-0.6) node {\small $\ket{\psi}$};
        \draw[very thick, draw=red](1.2,2.4) -- (1.2,3.6);
        \draw[very thick, draw=red](1.2,1.2) -- (1.2+1.2,1.2) -- (1.2+1.2,3.6);
        \DTensor{-1.2-0.2,0}{0.8}{.5}{}{0};
        \DTensor{-1.2-0.2,1.2}{0.8}{.5}{}{0};
        \DTensor{-1.2-0.2,2.4}{0.8}{.5}{}{0};
    \end{scope}
    \begin{scope}[shift={(3*\singledx+1,0)}]
        \FUnitary{0,0}{1.2}{.6}{\small $U_C$}{0};
        \draw[ thick, rounded corners=2pt] (-1.2-1.2-1.,1.2-0.6) rectangle (-1.2-1.,3.6-0.6);
        \draw (-1.2-0.6-1.,2.4-0.6) node {\small $\ket{\psi}$};
        \draw[very thick, draw=red](1.2,2.4) -- (1.2,3.6);
        \draw[very thick, draw=red](1.2,1.2) -- (1.2+1.2,1.2) -- (1.2+1.2,3.6);
        \DTensor{-1.2-0.2,0}{0.8}{.5}{}{0};
        \DTensor{-1.2-0.2,1.2}{0.8}{.5}{}{0};
        \DTensor{-1.2-0.2,2.4}{0.8}{.5}{}{0};
        \draw [very thick,->] (-1.2-0.2+0.5,-1) -- (-1.2-0.2-0.5,-1);
    \end{scope}
    \begin{scope}[shift={(6*\singledx+2,0)}]
        \FUnitary{0,0}{1.2}{.6}{\small $U_C$}{0};
        \draw[ thick, rounded corners=2pt] (-1.2-1.2-1.,1.2-0.6) rectangle (-1.2-1.,3.6-0.6);
        \draw (-1.2-0.6-1.,2.4-0.6) node {\small $\ket{\psi}$};
        \draw[very thick, draw=red](1.2,2.4) -- (1.2,3.6);
        \draw[very thick, draw=red](1.2,1.2) -- (1.2+1.2,1.2) -- (1.2+1.2,3.6);
        \DTensor{-1.2-0.2,0}{0.8}{.5}{}{0};
        \DTensor{-1.2-0.2,1.2}{0.8}{.5}{}{0};
        \DTensor{-1.2-0.2,2.4}{0.8}{.5}{}{0};
        \draw [very thick,->] (-1.2-0.2+0.5,-1) -- (-1.2-0.2-0.5,-1);
    \end{scope}
\end{tikzpicture}\\ \label{mps_op_ex3}
&=\begin{tikzpicture}[scale=0.4,baseline={([yshift=-0.65ex] current bounding box.center) }]
    \draw[very thick, draw=red] (-3.4,0) -- (6*\singledx+4.4,0);
    \begin{scope}[shift={(0,0)}]
        \FUnitary{0,0}{1.2}{.6}{\small $U_C$}{0};
        \draw[ thick, rounded corners=2pt] (-1.2-1.2-1.,1.2-0.6) rectangle (-1.2-1.,3.6-0.6);
        \draw (-1.2-0.6-1.,2.4-0.6) node {\small $\ket{\psi}$};
        \draw[very thick, draw=red](1.2,2.4) -- (1.2,3.6);
        \draw[very thick, draw=red](1.2,1.2) -- (1.2+1.2,1.2) -- (1.2+1.2,3.6);
        \DTensor{-1.2-0.2,0}{0.8}{.5}{}{0};
        \DTensor{-1.2-0.2,1.2}{0.8}{.5}{}{0};
        \DTensor{-1.2-0.2,2.4}{0.8}{.5}{}{0};
    \end{scope}
    \begin{scope}[shift={(3*\singledx+1,0)}]
        \FUnitary{0,0}{1.2}{.6}{\small $U_C$}{0};
        \draw[ thick, rounded corners=2pt] (-1.2-1.2-1.,1.2-0.6) rectangle (-1.2-1.,3.6-0.6);
        \draw (-1.2-0.6-1.,2.4-0.6) node {\small $\ket{\psi}$};
        \draw[very thick, draw=red](1.2,2.4) -- (1.2,3.6);
        \draw[very thick, draw=red](1.2,1.2) -- (1.2+1.2,1.2) -- (1.2+1.2,3.6);
        \DTensor{-1.2-0.2,1.2}{0.8}{.5}{}{0};
        \DTensor{-1.2-0.2,2.4}{0.8}{.5}{}{0};
    \end{scope}
    \begin{scope}[shift={(6*\singledx+2,0)}]
        \FUnitary{0,0}{1.2}{.6}{\small $U_C$}{0};
        \draw[ thick, rounded corners=2pt] (-1.2-1.2-1.,1.2-0.6) rectangle (-1.2-1.,3.6-0.6);
        \draw (-1.2-0.6-1.,2.4-0.6) node {\small $\ket{\psi}$};
        \draw[very thick, draw=red](1.2,2.4) -- (1.2,3.6);
        \draw[very thick, draw=red](1.2,1.2) -- (1.2+1.2,1.2) -- (1.2+1.2,3.6);
        \DTensor{-1.2-0.2,1.2}{0.8}{.5}{}{0};
        \DTensor{-1.2-0.2,2.4}{0.8}{.5}{}{0};
    \end{scope}
\end{tikzpicture}
\end{align}

\vtwo{In the second line above we push the operators from the physical legs to the virtual legs and the purification legs; in the third line we push the operators on the virtual legs to the left. After pushing back, the operator expectation values factorizes into terms supported on product initial states, rendering the computation to be efficient even though the state may have high magic. While we are evaluating the Pauli observables on a possibly expanded space after the isometries, they can be related to the observables on the physical leg. In the case of injective MPS, the Pauli operators simply relates to some operators on the physical leg via a rotation $V$; in the case of non-injective MPS, one can still relate $U_p$ to some Pauli operators via Corollary \ref{Up_PP_relation}.

We note that computing the expectation value of any product of local observable is computationally efficient with MPS, so the sequential generation structure does not bring computational advantage in one dimension. Nevertheless, we will see later that MF PEPS has the similar sequential generation structure which allows for efficient computation of the Weyl-Heisenberg observable. This property does bring a computational advantage in two dimensions.}

Lastly, one would still expect a similar structure when the MF group is not Weyl-Heisenberg, but it is not clear what is the analog of Clifford operators there. Specifically, one would need to find a set of unitaries that do not generate operator entanglement in a generic MF basis (one can always find a channel that does that, but it is not clear if such a channel is a unitary). We leave this question to future work.

\subsection{Connection to SPT}


The MF symmetry of MPS resembles the symmetry of the SPT order~\cite{hong2023quantum}, but there are subtle differences. In SPT the fractionalized operators can always be chosen to be identical $P'_i = P_i$, whereas the symmetry of the MF MPS need not be. In fact, we have seen that the map \(M: \mathcal B \to \mathcal B\) mapping \(P_i \mapsto P_i'\) in Eq.~\eqref{mps_sym} need not be bijective as well. 
%

As an example, we discuss the MF MPS with the following symmetry constraints:
\begin{align}
   \begin{tikzpicture}[scale=0.5,baseline={([yshift=-3.65ex] current bounding box.center) }]
       \DTensor{-\singledx,0}{1.}{.6}{\small $X$}{0};
       \UTensor{0,\singledx}{1.}{.6}{\small $X$}{3};
       \ATensor{0,0}{\small $A$}{0};
\end{tikzpicture}
   \; = \;\begin{tikzpicture}[scale=0.5,baseline={([yshift=-0.65ex] current bounding box.center) }]
       \DTensor{\singledx,0}{1.}{.6}{\small $Y$}{0};
       \ATensor{0,0}{\small $A$}{0};
\end{tikzpicture} \\
\begin{tikzpicture}[scale=0.5,baseline={([yshift=-3.65ex] current bounding box.center) }]
       \DTensor{-\singledx,0}{1.}{.6}{\small $Z$}{0};
       \UTensor{0,\singledx}{1.}{.6}{\small $Z$}{3};
       \ATensor{0,0}{\small $A$}{0};
\end{tikzpicture}
\; = \;\begin{tikzpicture}[scale=0.5,baseline={([yshift=-0.65ex] current bounding box.center) }]
       \DTensor{\singledx,0}{1.}{.6}{\small $I$}{0};
       \ATensor{0,0}{\small $A$}{0};
\end{tikzpicture}
\end{align}
\noindent This corresponds to the case where $M$ neither keeps \(P_i\) unchanged nor is bijective. To compare it with SPT, we block the site twice which results in the following symmetry:
\vtwo{
\begin{align}
   \begin{tikzpicture}[scale=0.5,baseline={([yshift=-3.65ex] current bounding box.center) }]
       \DTensor{-\singledx,0}{1.}{.6}{\small $Y$}{0};
       \ATensor{0,0}{\small $A$}{0};
       \ATensor{\singledx,0}{\small $A$}{0};
       \UTensor{0,\singledx}{1.}{.6}{\small $Y$}{3};
       \UTensor{\singledx,\singledx}{1.}{.6}{\small $Y$}{3};
\end{tikzpicture}
\; = \;
   \begin{tikzpicture}[scale=0.5,baseline={([yshift=-0.65ex] current bounding box.center) }]
       \ATensor{0,0}{\small $A$}{0};
       \ATensor{\singledx,0}{\small $A$}{0};
       \DTensor{2*\singledx,0}{1.}{.6}{\small $Y$}{0};
\end{tikzpicture}
\end{align}
}
We see that only \(Y\) of the virtual space can be pushed through, and all other Pauli operators result in identity. Therefore, the corresponding symmetry is not the Pauli group \(\mathbb{Z}_2 \times \mathbb{Z}_2\) but only \(\mathbb{Z}_2\). Because \(\mathbb{Z}_2\) only has trivial second group cohomology, the resulting state has the trivial type of SPT order.


When $M$ is bijective, one can reproduce the SPT-type symmetry (where $P_i' = P_i$) by blocking adjacent sites.
\begin{observation}
If the map \(M: \mathcal B \to \mathcal B\) mapping \(P_i \mapsto P_i'\) as defined by the MF symmetry [Eq.~\eqref{mps_sym}] is bijective, then after blocking a finite (i.e., system-size independent) number of times, the resulting supersite possesses the SPT-type symmetry:
\begin{align}\label{full_push_thru_sym}
    \begin{tikzpicture}[scale=0.5,baseline={([yshift=-0.65ex] current bounding box.center) }]
        \ATensor{0,0}{\small $A$}{0};
\end{tikzpicture}
    \; = \;\begin{tikzpicture}[scale=0.5,baseline={([yshift=-3.65ex] current bounding box.center) }]
        \DTensor{-\singledx,0}{1.}{.6}{\small $P_i$}{0};
        \UTensor{0,\singledx}{1.}{.6}{\small $U_{P_i}$}{3};
        \ATensor{0,0}{\small $A$}{0};
        \DTensor{\singledx,0}{1.}{.6}{\small $P_i^{\dag}$}{0};
        \RedArrow{-0.8,0}{0.2}{1};
        \RedArrow{1,0}{0.2}{1};
\end{tikzpicture}
\; , \quad \forall P_i \;.
\end{align}
\end{observation}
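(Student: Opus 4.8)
The plan is to reduce the claim to the elementary fact that a bijection of a finite set has finite order. The relevant object is the map $M:\mathcal B\to\mathcal B$, $P_i\mapsto P_i'$, read off from the MF symmetry [Eq.~\eqref{mps_sym}]: transporting a virtual operator $P$ from the left leg of $A$ to the right leg produces $M(P)\in\mathcal B$ while dressing the physical leg with $U_P$. Since the MF basis $\mathcal B$ is an orthonormal set of exactly $D^2$ matrices, $M$ is a self-map of a finite set; by hypothesis it is a bijection, hence a permutation of $\mathcal B$.

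First I would verify that blocking $n$ consecutive copies of the tensor merely iterates $M$. The point is that the output $M(P)$ is itself an element of $\mathcal B$, so the single-site symmetry \eqref{mps_sym} may be applied again to the next tensor. Proceeding inductively across the $n$ blocked sites, the bond operator after the $k$-th tensor is $M^{k}(P)$ and the $k$-th physical leg carries $U_{M^{k-1}(P)}$. Hence pushing $P$ through the entire supersite leaves $M^{n}(P)$ on the right virtual leg and applies the physical unitary $U^{(n)}_P:=U_P\otimes U_{M(P)}\otimes\cdots\otimes U_{M^{n-1}(P)}$, which is a genuine unitary determined by $P$ alone.

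It then remains only to pick the block size. As a permutation of the finite set $\mathcal B$, the bijection $M$ has finite order $m$ (the least common multiple of its cycle lengths), so $M^{m}=\mathrm{id}_{\mathcal B}$; importantly $m\le (D^2)!$ depends on $\mathcal B$ (equivalently on $D$) but not on the number of sites, so the blocking is system-size independent. Taking $n=m$ gives $M^{m}(P)=P$ exactly for every $P\in\mathcal B$, so the operator emerging on the right leg coincides with the one inserted on the left. Relabeling $U_{P_i}\to U^{(m)}_{P_i}$, the blocked symmetry is precisely Eq.~\eqref{full_push_thru_sym} with $P_i'=P_i$, i.e., the SPT-type symmetry.

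I do not anticipate a serious obstacle; the substance is entirely in the finiteness observation. The two points needing care are (i) that every intermediate bond operator $M^{k}(P)$ stays inside $\mathcal B$ so the local symmetry can be reapplied, which is immediate from $M(\mathcal B)\subseteq\mathcal B$, and (ii) that $m$ is independent of the system size, which holds because it is an invariant of the permutation $M$ of the fixed finite set $\mathcal B$. I would avoid claiming more structure than needed: although one is tempted to argue that $M$ is a group automorphism, the physical dressings $U_{P_i}$ need not compose into a representation, so I would rely only on the permutation argument, for which bijectivity of $M$ already suffices. If, in a given model, $M$ can additionally be shown to be a homomorphism of the underlying group, then bijectivity makes it an automorphism and $m$ divides the order of the automorphism group; but this sharper statement is not required here.
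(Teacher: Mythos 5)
Your proposal is correct and takes essentially the same route as the paper, which proves the observation in one line by noting that a bijective $M$ is a permutation of the finite MF basis and hence has finite order; your write-up simply makes explicit the induction across blocked sites and the accumulated physical dressing $U_P\otimes U_{M(P)}\otimes\cdots\otimes U_{M^{m-1}(P)}$, which the paper leaves implicit.
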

The above observation is simply proven by the fact that a bijective \(M\) defines a permutation in the MF basis, and any permutation always has a finite order. Hence, the specific map \(M\) does not matter critically, but the bijectiveness does matter in terms of connecting to the SPT order. For instance, consider the following symmetries:
\begin{align}
   \begin{tikzpicture}[scale=0.5,baseline={([yshift=-3.65ex] current bounding box.center) }]
       \DTensor{-\singledx,0}{1.}{.6}{\small $X$}{0};
       \UTensor{0,\singledx}{1.}{.6}{\small $X$}{3};
       \ATensor{0,0}{\small $A$}{0};
\end{tikzpicture}
   \; = \;\begin{tikzpicture}[scale=0.5,baseline={([yshift=-0.65ex] current bounding box.center) }]
       \DTensor{\singledx,0}{1.}{.6}{\small $Y$}{0};
       \ATensor{0,0}{\small $A$}{0};
\end{tikzpicture} \\
\begin{tikzpicture}[scale=0.5,baseline={([yshift=-3.65ex] current bounding box.center) }]
       \DTensor{-\singledx,0}{1.}{.6}{\small $Z$}{0};
       \UTensor{0,\singledx}{1.}{.6}{\small $I$}{3};
       \ATensor{0,0}{\small $A$}{0};
\end{tikzpicture}
\; = \;\begin{tikzpicture}[scale=0.5,baseline={([yshift=-0.65ex] current bounding box.center) }]
       \DTensor{\singledx,0}{1.}{.6}{\small $X$}{0};
       \ATensor{0,0}{\small $A$}{0};
\end{tikzpicture}
\end{align}
The symmetries define a map \(X \mapsto Y\), \(Y \mapsto Z\), \(Z \mapsto X\) which is is equivalent to $M$ being the identity after blocking three sites.

\subsection{Characterization of MF MPS with SPT-type symmetry}

We now focus on the case where the map \(M: \mathcal B \to \mathcal B\) is identity, in other words, when the MPS tensors have the SPT-type symmetry discussed previously. Such symmetries admit the following solution when we require the MF basis to form a projective representation of an Abelian group. For example, the Weyl-Heisenberg group is a projective representation of \(\mathbb{Z}_D \times \mathbb{Z}_D\). We note that all the known constructions from Ref.~\cite{werner2001all} are Abelian. Under the Abelian condition, we provide the following analytical solution to Eq.~\eqref{full_push_thru_sym}.
\begin{theorem}
Let \(\{P_i\}\) be a projective representation of an Abelian MF group, then under Eq.~\eqref{full_push_thru_sym},
\begin{align}
\begin{tikzpicture}[scale=0.5,baseline={([yshift=-0.65ex] current bounding box.center) }]
        \GFTensor{0,0}{1.2}{.6}{\small $Q$}{0};
\end{tikzpicture}
=\sum_{i} \alpha_i \begin{tikzpicture}[scale=0.5,baseline={([yshift=-0.65ex] current bounding box.center) }]
        \draw[very thick, draw=red] (-2,0) -- (0,0);
        \BTensor{0,1}{1}{.6}{\small $P_i^\dag$}{1};
        \BTensor{\singledx,1}{1}{.6}{\small $P_i$}{1};
        \draw[very thick, draw=red] (\singledx,0) -- (\singledx+2,0);
        \RedArrow{-1,0}{0.2}{1};
        \RedArrow{\singledx+1,0}{0.2}{1};
\end{tikzpicture}
\end{align}
\end{theorem}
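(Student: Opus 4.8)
The plan is to read the claim as a statement about the commutant of the doubled representation and to solve it with Schur's lemma. By \cref{structural_theorem_1}(ii) specialized to the SPT case $P_i' = P_i$, the polar factor $Q$ satisfies $[Q, P_i \otimes P_i^*] = 0$ for every $P_i \in \mathcal B$. Hence the whole content of the theorem is to characterize the operators on $\mathbb C^D \otimes \mathbb C^D$ that commute with the family $\{P_i \otimes P_i^*\}_i$, and then to recognize the resulting space as the span displayed on the right-hand side. The Hermiticity and positivity of $Q$ from the polar decomposition will not be needed to pin down the form; they only impose reality/pairing constraints on the coefficients $\alpha_i$.

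First I would observe that $g \mapsto P_g \otimes P_g^*$ is an honest (non-projective) representation of the abelian MF group $G$: if $P_g P_h = \beta(g,h) P_{gh}$ then the cocycle cancels against its conjugate, $(P_g\otimes P_g^*)(P_h\otimes P_h^*) = |\beta(g,h)|^2\, P_{gh}\otimes P_{gh}^* = P_{gh}\otimes P_{gh}^*$. Under vectorization $\mathbb C^D\otimes\mathbb C^D\cong \mathrm{End}(\mathbb C^D)$ this representation is exactly the adjoint action $M \mapsto P_g M P_g^\dagger$. Since $\mathcal B$ is an orthonormal basis of $\mathrm{End}(\mathbb C^D)$ (orthogonality together with $|\mathcal B| = D^2$), each $P_i$ is a simultaneous eigenvector, $P_g P_i P_g^\dagger = \chi(g,i)\,P_i$ with $\chi(g,i)$ the commutator phase; equivalently each $P_i$ spans a one-dimensional subrepresentation with character $\chi(\cdot,i)$.

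The key step is to show this representation is multiplicity-free, i.e. the $D^2$ characters $\chi(\cdot,i)$ are pairwise distinct. This reduces to non-degeneracy of the commutator pairing, which I would extract from the irreducibility of $\mathcal B$ (guaranteed by completeness of the measurement): if $\chi(g,\cdot)\equiv 1$ then $P_g$ commutes with the whole irreducible family, so Schur forces $P_g \propto I$, i.e. $g = e$; the symmetric argument in the other slot gives non-degeneracy and hence distinctness of the characters. Being multiplicity-free, the commutant of $\{P_g\otimes P_g^*\}$ is the maximal abelian algebra spanned by the spectral projectors $\Pi_i$ onto the lines $\mathbb C\,P_i$; and because the character matrix $[\chi(g,i)]$ is invertible, this algebra is equally spanned by the group elements themselves, $\mathrm{span}\{P_g\otimes P_g^*\} = \mathrm{span}\{\Pi_i\}$. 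Since $Q$ lies in this commutant, $Q = \sum_i \alpha_i\, P_i\otimes P_i^*$, which is the asserted decomposition once the two virtual legs are read with the orientation (bending) conventions of the diagram, as that is what renders the two tensor factors as the drawn $P_i^\dagger$ and $P_i$.

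I expect the main obstacle to be this multiplicity-free/non-degeneracy step for a general abelian MF group rather than only Weyl--Heisenberg: one must argue purely from irreducibility and the order $|G| = D^2$ that the commutator cocycle has trivial radical, so that no character is repeated and the commutant does not acquire extra dimensions from multiplicities. A secondary, purely bookkeeping, difficulty is matching the exact daggers, transposes, and complex conjugations in the pictured legs to the intrinsic form $P_i\otimes P_i^*$ produced by the commutant argument; I would handle this by tracking which legs are bent in passing from the operator $Q$ to its sideways diagrammatic presentation.
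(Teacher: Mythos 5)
Your proof is correct and takes essentially the same route as the paper's: the paper expands $Q$ in the complete operator basis $\{P_i^\dagger \otimes P_j\}$, applies the symmetry to pick up the commutation phases $\omega(i,k)^{-1}\omega(j,k)$, and invokes Schur's lemma (irreducibility of $\mathcal B$) to force $i=j$ --- which is precisely your multiplicity-free commutant argument, with the paper's term-by-term matching playing the role of your statement that the commutant of a multiplicity-free abelian representation is the diagonal (equivalently, group-element-spanned) algebra. The leg-orientation bookkeeping you flag is real but benign: the paper reads the right leg with reversed orientation, under which your intrinsic form $P_g \otimes P_g^*$ becomes exactly the drawn $P_i^\dagger \otimes P_i$.
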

\begin{proof}
We expand \(Q\) in the basis of the MF group (recall that the MF group forms a complete operator basis)
\begin{align}
\begin{tikzpicture}[scale=0.5,baseline={([yshift=-0.65ex] current bounding box.center) }]
        \GFTensor{0,0}{1.2}{.6}{\small $Q$}{0};
\end{tikzpicture}
=\sum_{ij}\alpha_{ij} \begin{tikzpicture}[scale=0.5,baseline={([yshift=-0.65ex] current bounding box.center) }]
        \draw[very thick, draw=red] (-2,0) -- (0,0);
        \BTensor{0,1}{1}{.6}{\small $P_i^\dag$}{1};
        \BTensor{\singledx,1}{1}{.6}{\small $P_j$}{1};
        \draw[very thick, draw=red] (\singledx,0) -- (\singledx+2,0);
        \RedArrow{-1,0}{0.2}{1};
        \RedArrow{\singledx+1,0}{0.2}{1};
\end{tikzpicture}
\;.
\end{align}
Per Theorem~\ref{structural_theorem_1}, we apply the symmetry \(P_i^{\dag} \otimes P_i\) to the tensor in the expanded basis (note that the orientation of the right leg is reversed):
\begin{widetext}
\begin{align*}
\sum_{ij}\alpha_{ij} \begin{tikzpicture}[scale=0.5,baseline={([yshift=-0.65ex] current bounding box.center) }]
        \draw[very thick, draw=red] (-2,0) -- (0,0);
        \BTensor{0,1}{1}{.6}{\small $P_i^\dag$}{1};
        \BTensor{\singledx,1}{1}{.6}{\small $P_j$}{1};
        \draw[very thick, draw=red] (\singledx,0) -- (\singledx+2,0);
        \RedArrow{-1,0}{0.2}{1};
        \RedArrow{\singledx+1,0}{0.2}{1};
\end{tikzpicture}
=\sum_{ij}\alpha_{ij} \begin{tikzpicture}[scale=0.5,baseline={([yshift=-0.65ex] current bounding box.center) }]
        \draw[very thick, draw=red] (-2,0) -- (0,0);
        \draw[very thick, draw=red] (\singledx,0) -- (\singledx+2,0);
        \BTensor{0,1}{1}{.6}{\small $P_i^{\dag}$}{1};
        \BTensor{\singledx,1}{1}{.6}{\small $P_j$}{1};
        \DTensor{-\singledx,0}{1}{.6}{\small $P_k$}{0};
        \DTensor{2*\singledx,0}{1}{.6}{\small $P_k^{\dag}$}{0};
        \UTensor{0,1+\singledx}{1}{.6}{\small $P_k^{\dag}$}{1};
        \UTensor{\singledx,1+\singledx}{1}{.6}{\small $P_k$}{1};
        \RedArrow{-0.5,0}{0.2}{1};
        \RedArrow{\singledx+0.5,0}{0.2}{1};
\end{tikzpicture}
=\sum_{ij}\alpha_{ij} \omega(i,k)^{-1}\omega(j,k) \begin{tikzpicture}[scale=0.5,baseline={([yshift=-0.65ex] current bounding box.center) }]
        \draw[very thick, draw=red] (-2,0) -- (0,0);
        \draw[very thick, draw=red] (\singledx,0) -- (\singledx+2,0);
        \UTensor{0,1}{1}{.6}{\small $P_k^{\dag}$}{1};
        \UTensor{\singledx,1}{1}{.6}{\small $P_k$}{1};
        \DTensor{-\singledx,0}{1}{.6}{\small $P_k$}{0};
        \DTensor{2*\singledx,0}{1}{.6}{\small $P_k^{\dag}$}{0};
        \BTensor{0,1+\singledx}{1}{.6}{\small $P_i^{\dag}$}{1};
        \BTensor{\singledx,1+\singledx}{1}{.6}{\small $P_j$}{1};
        \RedArrow{-0.5,0}{0.2}{1};
        \RedArrow{\singledx+0.5,0}{0.2}{1};
\end{tikzpicture}
, \forall k
\end{align*}
\end{widetext}
where\(\omega(j,k)\) is defined as the phase after commuting \(P_j\) and \(P_k\) due to the projective representation: \(P_k P_j = \omega(j,k) P_j P_k\). In the last equality, we commute \(P_k\) through \(P_i^\dag\) and \(P_j\) and include the phase \(\omega(i,k)^{-1}\omega(j,k)\). The above equation has to be satisfied term-by-term, implying that, if $\alpha_{ij} \ne 0$, then \(\omega(i,k)\omega(j,k)^{-1}=1, \forall k\). This restricts the non-zero terms to be those with \(i=j\), because otherwise \(P_iP_j^{-1}\) commutes with all \(P_k\). But since \(P_k\) forms a complete operator basis, it implies that \(P_iP_j^{-1} \propto I\), so \(P_i\) and \(P_j\) are the same up to a phase.
\end{proof}

As an example, the AKLT tensor can also be written in the following way:
\begin{align}
\begin{tikzpicture}[scale=0.4,baseline={([yshift=-0.65ex] current bounding box.center) }]
        \GFTensor{0,0}{1.2}{.6}{\small $Q$}{0};
\end{tikzpicture}
=3\begin{tikzpicture}[scale=0.45,baseline={([yshift=-0.65ex] current bounding box.center) }]
        \draw[very thick, draw=red] (-2,0) -- (0,0);
        \BTensor{0,1}{1}{.6}{\small $I$}{1};
        \BTensor{\singledx,1}{1}{.6}{\small $I$}{1};
        \draw[very thick, draw=red] (\singledx,0) -- (\singledx+2,0);
        \RedArrow{-1,0}{0.2}{1};
        \RedArrow{\singledx+1,0}{0.2}{1};
\end{tikzpicture}
-\begin{tikzpicture}[scale=0.45,baseline={([yshift=-0.65ex] current bounding box.center) }]
        \draw[very thick, draw=red] (-2,0) -- (0,0);
        \BTensor{0,1}{1}{.6}{\small $X$}{1};
        \BTensor{\singledx,1}{1}{.6}{\small $X$}{1};
        \draw[very thick, draw=red] (\singledx,0) -- (\singledx+2,0);
        \RedArrow{-1,0}{0.2}{1};
        \RedArrow{\singledx+1,0}{0.2}{1};
\end{tikzpicture} \nonumber \\
-\begin{tikzpicture}[scale=0.45,baseline={([yshift=-0.65ex] current bounding box.center) }]
        \draw[very thick, draw=red] (-2,0) -- (0,0);
        \BTensor{0,1}{1}{.6}{\small $Y$}{1};
        \BTensor{\singledx,1}{1}{.6}{\small $Y$}{1};
        \draw[very thick, draw=red] (\singledx,0) -- (\singledx+2,0);
        \RedArrow{-1,0}{0.2}{1};
        \RedArrow{\singledx+1,0}{0.2}{1};
\end{tikzpicture}
-\begin{tikzpicture}[scale=0.45,baseline={([yshift=-0.65ex] current bounding box.center) }]
        \draw[very thick, draw=red] (-2,0) -- (0,0);
        \BTensor{0,1}{1}{.6}{\small $Z$}{1};
        \BTensor{\singledx,1}{1}{.6}{\small $Z$}{1};
        \draw[very thick, draw=red] (\singledx,0) -- (\singledx+2,0);
        \RedArrow{-1,0}{0.2}{1};
        \RedArrow{\singledx+1,0}{0.2}{1};
\end{tikzpicture}
\end{align}
In fact, one can recognize this as local-unitary equivalent to the triplet projector which is known to generate the AKLT state. Lastly, we note that similar results were recently obtained in \cite{sahay2024classifying} in a different but equivalent basis.

\vtwo{However, even under the same SPT-like symmetry, not all MF MPS states are in the same SPT phase. For example, a \(D\)-level GHZ state can be written as (up to an isometry and normalization):}
\begin{align}
\begin{tikzpicture}[scale=0.5,baseline={([yshift=-0.65ex] current bounding box.center) }]
        \GFTensor{0,0}{1.2}{.6}{\small $Q$}{0};
        \draw (-1.5,0) node {\small $a$};
        \draw (\singledx+1.5,0) node {\small $d$};
        \draw (0,1.5) node {\small $b$};
        \draw (\singledx,1.5) node {\small $c$};
\end{tikzpicture}
\; =\sum_{i} \begin{tikzpicture}[scale=0.5,baseline={([yshift=-0.65ex] current bounding box.center) }]
        \draw[very thick, draw=red] (-2,0) -- (0,0);
        \BTensor{0,1}{1}{.6}{\small $Z^{-i}$}{1};
        \BTensor{\singledx,1}{1}{.6}{\small $Z^{i}$}{1};
        \draw[very thick, draw=red] (\singledx,0) -- (\singledx+2,0);
        \RedArrow{-1,0}{0.2}{1};
        \RedArrow{\singledx+1,0}{0.2}{1};
        \draw (-2.3,0) node {\small $a$};
        \draw (\singledx+2.3,0) node {\small $d$};
        \draw (0,1+1.3) node {\small $b$};
        \draw (\singledx,1+1.3) node {\small $c$};
\end{tikzpicture}
\;.
\end{align}
\vtwo{Here \(Z\) is the phase operator in the Weyl-Heisenberg group. One can verify that the above tensor is equal to \(\delta_{a,b}\delta_{b,c}\delta_{c,d}\), so the physical leg only has an rank of \(D\), and the tensor is exactly the GHZ tensor after eliminating the null space in the physical space. By setting $D=2$, we obtain a qubit GHZ state which is in a different phase then the AKLT state. Hence, having the same MF symmetry does not imply that the states are in the same phase. However, when we demand the MF MPS to be injective, then different MF MPS satisfying the same SPT-like symmetry are indeed in the same phase \cite{chen2011classification,schuch2011classifying}.}


\section{PEPS}\label{section_peps}
\subsection{Overview of the Protocol}
We now extend the discussion to the 2D square lattice and consider a similar protocol shown in Eq.~(\ref{peps_glue}). We start by preparing five-party states in $\mathbb C^d \otimes (\mathbb C^D)^{\otimes 4}$ that correspond to the PEPS tensor. We then measure the adjacent virtual leg qubits in the MF basis. Finally, we apply local operations to push the operator \(P_i\) to the boundary.
\begin{align}\label{peps_glue}
\begin{tikzpicture}[scale=0.4,baseline={([yshift=-0.65ex] current bounding box.center) }]
        \GPTensor{\singledx,\singledx}{1.2}{.6}{}{0};
        \GPTensor{-\singledx,\singledx}{1.2}{.6}{}{0};
        \GPTensor{-\singledx,-\singledx}{1.2}{.6}{}{0};
        \GPTensor{\singledx,-\singledx}{1.2}{.6}{}{0};
        \Measurement{-\singledx,0};
        \Measurement{\singledx,0};
        \Measurement{0,-\singledx};
        \Measurement{0,\singledx};
\end{tikzpicture}
=\begin{tikzpicture}[scale=0.4,baseline={([yshift=-0.65ex] current bounding box.center) }]
        \GPTensor{\singledx,\singledx}{1.}{.6}{}{0};
        \GPTensor{-\singledx,\singledx}{1.}{.6}{}{0};
        \GPTensor{-\singledx,-\singledx}{1.}{.6}{}{0};
        \GPTensor{\singledx,-\singledx}{1.}{.6}{}{0};
        \DTensor{\singledx,0}{1.}{.6}{}{1};
        \DTensor{-\singledx,0}{1.}{.6}{}{1};
        \DTensor{0,\singledx}{1.}{.6}{}{0};
        \DTensor{0,-\singledx}{1.}{.6}{}{0};
\end{tikzpicture}
\end{align}

However, the symmetry constraints to push the operator throught is more subtle in the case of PEPS than in the case of MPS because there are multiple virtual legs, so one can push the operator to multiple legs. Due to this, not all the possible types of symmetries allow to deterministically correct the measurement operators. For instance, observe the following symmetry condition:
\begin{align}\label{peps_bad_sym}
\begin{tikzpicture}[scale=0.5,baseline={([yshift=-2.65ex] current bounding box.center) }]
        \GPTensor{0,0}{1.3}{.6}{}{0};
        \DTensor{-\singledx,0}{1.}{.6}{\small $P$}{0};
        \begin{scope}[rotate around={45:(1.6,1.6)}]
        \UTensor{1.6,1.6}{1.3}{.6}{\small $U_{P}$}{2};
        \end{scope}
\end{tikzpicture}
=\; \begin{tikzpicture}[scale=0.5,baseline={([yshift=-0.65ex] current bounding box.center) }]
        \GPTensor{0,0}{1.3}{.6}{}{0};
        \DTensor{0,\singledx}{1.}{.6}{\small $P$}{1};
        \DTensor{\singledx,0}{1.}{.6}{\small $P$}{0};
        \DTensor{0,-\singledx}{1.}{.6}{\small $P$}{1};
\end{tikzpicture}
\;.
\end{align}
This symmetry constraint does not allow pushing the operator to the boundary. This is demonstrated in the following example:
\begin{align}\label{peps_bad_sym_ex}
\begin{tikzpicture}[scale=0.3,baseline={([yshift=-0.65ex] current bounding box.center) }]
        \draw[very thick, draw=red] (-2*\singledx-1,-2*\singledx) -- (2*\singledx+1,-2*\singledx);
        \draw[very thick, draw=red] (-2*\singledx-1,-0*\singledx) -- (2*\singledx+1,-0*\singledx);
        \draw[very thick, draw=red] (-2*\singledx-1,2*\singledx) -- (2*\singledx+1,2*\singledx);
        \draw[very thick, draw=red] (-2*\singledx,-2*\singledx-1) -- (-2*\singledx,2*\singledx+1);
        \draw[very thick, draw=red] (-0*\singledx,-2*\singledx-1) -- (-0*\singledx,2*\singledx+1);
        \draw[very thick, draw=red] (2*\singledx,-2*\singledx-1) -- (2*\singledx,2*\singledx+1);
        \GPTensor{-2*\singledx,-2*\singledx}{1.}{.6}{}{0};
        \GPTensor{-0*\singledx,-2*\singledx}{1.}{.6}{}{0};
        \GPTensor{2*\singledx,-2*\singledx}{1.}{.6}{}{0};
        \GPTensor{-2*\singledx,-0*\singledx}{1.}{.6}{}{0};
        \GPTensor{-0*\singledx,-0*\singledx}{1.}{.6}{}{0};
        \GPTensor{2*\singledx,-0*\singledx}{1.}{.6}{}{0};
        \GPTensor{-2*\singledx,2*\singledx}{1.}{.6}{}{0};
        \GPTensor{-0*\singledx,2*\singledx}{1.}{.6}{}{0};
        \GPTensor{2*\singledx,2*\singledx}{1.}{.6}{}{0};
        \RTensor{0,\singledx}{1.}{.6}{}{1};
        \begin{scope}[shift={(-2*\singledx,-2*\singledx)}, rotate around={45:(1.4,1.4)}]
        \UTensor{1.4,1.4}{1.1}{.6}{}{2};
        \end{scope}
        \begin{scope}[shift={(-2*\singledx,-0*\singledx)}, rotate around={45:(1.4,1.4)}]
        \UTensor{1.4,1.4}{1.1}{.6}{}{2};
        \end{scope}
        \begin{scope}[shift={(-0*\singledx,-2*\singledx)}, rotate around={45:(1.4,1.4)}]
        \UTensor{1.4,1.4}{1.1}{.6}{}{2};
        \end{scope}
        \begin{scope}[shift={(-0*\singledx,-0*\singledx)}, rotate around={45:(1.4,1.4)}]
        \UTensor{1.4,1.4}{1.1}{.6}{}{2};
        \end{scope}
        \begin{scope}[shift={(2*\singledx,-2*\singledx)}, rotate around={45:(1.4,1.4)}]
        \UTensor{1.4,1.4}{1.1}{.6}{}{2};
        \end{scope}
        \begin{scope}[shift={(2*\singledx,-0*\singledx)}, rotate around={45:(1.4,1.4)}]
        \UTensor{1.4,1.4}{1.1}{.6}{}{2};
        \end{scope}
\end{tikzpicture}
=\begin{tikzpicture}[scale=0.3,baseline={([yshift=1ex] current bounding box.center) }]
        \draw[very thick, draw=red] (-2*\singledx-1,-2*\singledx) -- (2*\singledx+1,-2*\singledx);
        \draw[very thick, draw=red] (-2*\singledx-1,-0*\singledx) -- (2*\singledx+1,-0*\singledx);
        \draw[very thick, draw=red] (-2*\singledx-1,2*\singledx) -- (2*\singledx+1,2*\singledx);
        \draw[very thick, draw=red] (-2*\singledx,-2*\singledx-1) -- (-2*\singledx,2*\singledx+1);
        \draw[very thick, draw=red] (-0*\singledx,-2*\singledx-1) -- (-0*\singledx,2*\singledx+1);
        \draw[very thick, draw=red] (2*\singledx,-2*\singledx-1) -- (2*\singledx,2*\singledx+1);
        \GPTensor{-2*\singledx,-2*\singledx}{1.}{.6}{}{0};
        \GPTensor{-0*\singledx,-2*\singledx}{1.}{.6}{}{0};
        \GPTensor{2*\singledx,-2*\singledx}{1.}{.6}{}{0};
        \GPTensor{-2*\singledx,-0*\singledx}{1.}{.6}{}{0};
        \GPTensor{-0*\singledx,-0*\singledx}{1.}{.6}{}{0};
        \GPTensor{2*\singledx,-0*\singledx}{1.}{.6}{}{0};
        \GPTensor{-2*\singledx,2*\singledx}{1.}{.6}{}{0};
        \GPTensor{-0*\singledx,2*\singledx}{1.}{.6}{}{0};
        \GPTensor{2*\singledx,2*\singledx}{1.}{.6}{}{0};
        \RTensor{-2*\singledx,\singledx}{1.}{.6}{}{1};
        \RTensor{2*\singledx,\singledx}{1.}{.6}{}{1};
        \DTensor{-3*\singledx,0}{1.}{.6}{}{0};
        \DTensor{3*\singledx,0}{1.}{.6}{}{0};
        \DTensor{-3*\singledx,-2*\singledx}{1.}{.6}{}{0};
        \DTensor{3*\singledx,-2*\singledx}{1.}{.6}{}{0};
        \DTensor{-2*\singledx,-3*\singledx}{1.}{.6}{}{1};
        \DTensor{-0*\singledx,-3*\singledx}{1.}{.6}{}{1};
        \DTensor{2*\singledx,-3*\singledx}{1.}{.6}{}{1};
\end{tikzpicture}
\,.
\end{align}
Here, attempting to push the purple operator to the boundary creates two more purple operators. This is in fact inevitable because of a topological constraint. The operator can be considered as a line connecting two plaquettes. The symmetry constraint only deform the string but it always ends at the same plaquettes. This is analogous to the Wilson line operator connecting two fluxes in the toric code. Since the plaquettes are in the bulk, any deformation will necessarily contain part of the string in the bulk, rendering the error non-correctable.

If we do not impose additional symmetry constraints to avoid the above situation, then the minimal symmetry constraint to enable correcting all operators is the following where we push the operators to the upper-right corner:
\begin{subequations} \label{peps_isosym}
\begin{align}\label{peps_isosym1}
\begin{tikzpicture}[scale=0.5,baseline={([yshift=-2.6ex] current bounding box.center) }]
        \GPTensor{0,0}{1.3}{.6}{$A$}{0};
        \RedArrow{-0.8,0}{0.2}{1}
        \RedArrow{1.0,0}{0.2}{1}
        \RedArrow{0,1.0}{0.2}{0}
        \RedArrow{0,-0.8}{0.2}{0}
        \DTensor{-\singledx,0}{1.}{.6}{\small $P_A$}{0};
        \begin{scope}[rotate around={45:(1.6,1.6)}]
        \UTensor{1.6,1.6}{1.3}{.6}{\small $U_{P_A}$}{2};
        \end{scope}
\end{tikzpicture}
&= \; \begin{tikzpicture}[scale=0.5,baseline={([yshift=-2.7ex] current bounding box.center) }]
        \GPTensor{0,0}{1.3}{.6}{$A$}{0};
        \RedArrow{-0.8,0}{0.2}{1}
        \RedArrow{1.0,0}{0.2}{1}
        \RedArrow{0,1.0}{0.2}{0}
        \RedArrow{0,-0.8}{0.2}{0}
        \DTensor{0,\singledx}{1.}{.6}{\small $P_{A1}$}{1};
        \DTensor{\singledx,0}{1.}{.6}{\small $P_{A2}$}{0};
\end{tikzpicture}\\\label{peps_isosym2}
\begin{tikzpicture}[scale=0.5,baseline={([yshift=0ex] current bounding box.center) }]
        \GPTensor{0,0}{1.3}{.6}{$A$}{0};
        \RedArrow{-0.8,0}{0.2}{1}
        \RedArrow{1.0,0}{0.2}{1}
        \RedArrow{0,1.0}{0.2}{0}
        \RedArrow{0,-0.8}{0.2}{0}
        \DTensor{0,-\singledx}{1.}{.6}{\small $P_B$}{1};
        \begin{scope}[rotate around={45:(1.6,1.6)}]
        \UTensor{1.6,1.6}{1.3}{.6}{\small $U_{P_B}$}{2};
        \end{scope}
\end{tikzpicture}
&= \; \begin{tikzpicture}[scale=0.5,baseline={([yshift=-2.7ex] current bounding box.center) }]
        \GPTensor{0,0}{1.3}{.6}{$A$}{0};
        \RedArrow{-0.8,0}{0.2}{1}
        \RedArrow{1.0,0}{0.2}{1}
        \RedArrow{0,1.0}{0.2}{0}
        \RedArrow{0,-0.8}{0.2}{0}
        \DTensor{0,\singledx}{1.}{.6}{\small $P_{B1}$}{1};
        \DTensor{\singledx,0}{1.}{.6}{\small $P_{B2}$}{0};
\end{tikzpicture}
\;.
\end{align}
\end{subequations}
We refer to tensors with such a symmetry as MF PEPS.

\subsection{Structural Theorem of MF PEPS}
We now derive the structural theorem for MF PEPS, analogous to the case of MF MPS. To begin with, the symmetry constraints Eqs.~(\ref{peps_isosym}) result in the PEPS being an isometric tensor network (isoTNS) \cite{zaletel2020isometric}.

\begin{lemma}
    Under the symmetry constraints Eqs.~(\ref{peps_isosym}),
    \begin{align}
    \begin{tikzpicture}[scale=0.5,baseline={([yshift=-0.65ex] current bounding box.center) }]
            \draw[very thick, draw=red] (0,0) -- (0,\singledx) -- (\singledx,\singledx);
            \draw[very thick, draw=red] (0,0) -- (\singledx,0) -- (\singledx,\singledx);
            \GPTensor{0,0}{1.3}{.6}{\small $A$}{0};
            \RedArrow{-0.8,0}{0.2}{1}
            \RedArrow{1.0,0}{0.2}{1}
            \RedArrow{0,1.0}{0.2}{0}
            \RedArrow{0,-0.8}{0.2}{0}
            \GPTensor{\singledx,\singledx}{1.3}{.6}{\small $A^\dag$}{2};
            \begin{scope}[shift={(\singledx,\singledx)}]
                \RedArrow{-0.8,0}{0.2}{1}
                \RedArrow{1.0,0}{0.2}{1}
                \RedArrow{0,1.0}{0.2}{0}
                \RedArrow{0,-0.8}{0.2}{0}
            \end{scope}
            \begin{scope}[rotate around={-45:(0,0)}]
                \Arrow{0,1.5}{0.2}{0}
            \end{scope}
    \end{tikzpicture}
    = \begin{tikzpicture}[scale=0.5,baseline={([yshift=-0.65ex] current bounding box.center) }]
            \draw[very thick, draw=red] (-1.2,0) -- (1.2, 0);
            \draw[very thick, draw=red] (0,-1.2) -- (0,1.2);
            \RedArrow{-0.8,0}{0.2}{1}
            \RedArrow{0,-0.8}{0.2}{0}
            \RedArrow{1.,0}{0.2}{1}
            \RedArrow{0,1.}{0.2}{0}
    \end{tikzpicture}
    \end{align}
\end{lemma}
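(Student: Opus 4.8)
The plan is to mirror the proof of \cref{lemma_cf}, now treating the PEPS tensor as a two-to-three-leg map and using \emph{both} symmetry constraints in Eqs.~\eqref{peps_isosym}. Reading the arrows, I interpret \(A\) as a linear map \(A:\mathbb C^{D}\otimes\mathbb C^{D}\to\mathbb C^{D}\otimes\mathbb C^{D}\otimes\mathbb C^{d}\) sending the left and bottom virtual legs to the top, right, and physical legs. The claimed diagrammatic identity is exactly the statement that the transfer operator \(E:=A^\dagger A\), acting on the input space \(\mathbb C^{D}\otimes\mathbb C^{D}\), equals the identity (up to normalization), i.e.\ that \(A\) is proportional to an isometry, which is the defining property of an isoTNS.

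First I would show that \(E\) commutes with \(P_A\otimes I\) for every \(P_A\in\mathcal B\). Rearranging Eq.~\eqref{peps_isosym1} into an operator identity gives \(A\,(P_A\otimes I)=U_{P_A}^\dagger\,(P_{A1}\otimes P_{A2}\otimes I)\,A\). Conjugating \(E\) then yields
\begin{align}
(P_A^\dagger\otimes I)\,E\,(P_A\otimes I)
&=\big[A\,(P_A\otimes I)\big]^\dagger\big[A\,(P_A\otimes I)\big]\nonumber\\
&=A^\dagger\,(P_{A1}^\dagger\otimes P_{A2}^\dagger\otimes I)\,U_{P_A}U_{P_A}^\dagger\,(P_{A1}\otimes P_{A2}\otimes I)\,A\nonumber\\
&=A^\dagger A=E,
\end{align}
where the correction \(U_{P_A}\) on the (contracted) physical leg cancels with its adjoint, and the pushed operators \(P_{A1},P_{A2}\) on the (contracted) top and right legs cancel with theirs, all because \(\mathcal B\) consists of unitaries. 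Repeating the identical computation with Eq.~\eqref{peps_isosym2} shows that \(E\) also commutes with \(I\otimes P_B\) for every \(P_B\in\mathcal B\).

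To finish, I would invoke that \(\mathcal B\) is a complete operator basis for \(\mathbb C^D\): the set \(\{P_A\otimes P_B\}\) then spans \(\mathrm{End}(\mathbb C^{D})\otimes\mathrm{End}(\mathbb C^{D})=\mathrm{End}(\mathbb C^{D}\otimes\mathbb C^{D})\). Since \(E\) commutes with every \(P_A\otimes I\) and every \(I\otimes P_B\), it commutes with all products \(P_A\otimes P_B\), hence with a spanning set of the full matrix algebra; by Schur's lemma \(E\) is therefore central, so \(E\propto I\). Because \(E=A^\dagger A\) is positive semidefinite, the constant is nonnegative, and after the normalization implicit in the statement one obtains \(E=I\), which is precisely the claimed isometry (isoTNS) identity.

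I expect the main obstacle to be bookkeeping rather than conceptual: one must carefully track that the physical-leg correction \(U_{P_A}\) (resp.\ \(U_{P_B}\)) is postcomposed on an \emph{output} leg and that the pushed operators \(P_{A1},P_{A2}\) land on the two output virtual legs, so that all of them sit on contracted bonds and cancel pairwise when \(A\) is glued to \(A^\dagger\). The genuinely new ingredient compared with \cref{lemma_cf} is that there are two input legs and two separate push-through relations; the step deserving the most care is combining these two \emph{one-sided} commutation relations—rather than a single adjoint action—into the two-sided Schur conclusion \(E\propto I\).
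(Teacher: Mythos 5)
Your proof is correct and follows essentially the same route as the paper's: both apply the push-through symmetries together with unitarity of $U_{P}$ and of the pushed operators to show that $E=A^\dagger A$ commutes with $P_A\otimes P_B$ for all elements of $\mathcal B$, and then conclude $E\propto I$ from irreducibility/completeness of the MF basis via Schur's lemma. The only (immaterial) difference is that you establish the two one-sided commutation relations separately and then multiply them, whereas the paper dresses both input legs at once; your explicit remark that normalization is needed to upgrade $E\propto I$ to $E=I$ is a point the paper leaves implicit.
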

\begin{proof}
    The proof is analogous to the case of MPS. First we apply the symmetry constraint to show that \(\forall P_A, P_B,\)
\begin{align}
    \begin{tikzpicture}[scale=0.5,baseline={([yshift=-0.65ex] current bounding box.center) }]
            \draw[very thick, draw=red] (0,0) -- (0,\singledx) -- (\singledx,\singledx);
            \draw[very thick, draw=red] (0,0) -- (\singledx,0) -- (\singledx,\singledx);
            \GPTensor{0,0}{1.3}{.6}{\small $A$}{0};
            \RedArrow{-0.8,0}{0.2}{1}
            \RedArrow{1.0,0}{0.2}{1}
            \RedArrow{0,1.0}{0.2}{0}
            \RedArrow{0,-0.8}{0.2}{0}
            \GPTensor{\singledx,\singledx}{1.3}{.6}{\small $A^\dag$}{2};
            \begin{scope}[shift={(\singledx,\singledx)}]
                \RedArrow{-0.8,0}{0.2}{1}
                \RedArrow{1.0,0}{0.2}{1}
                \RedArrow{0,1.0}{0.2}{0}
                \RedArrow{0,-0.8}{0.2}{0}
            \end{scope}
            \begin{scope}[rotate around={-45:(0,0)}]
                \Arrow{0,1.5}{0.2}{0}
            \end{scope}
    \end{tikzpicture}
    = \begin{tikzpicture}[scale=0.5,baseline={([yshift=-0.65ex] current bounding box.center) }]
            \draw[very thick, draw=red] (0,0) -- (0,\singledx) -- (\singledx,\singledx);
            \draw[very thick, draw=red] (0,0) -- (\singledx,0) -- (\singledx,\singledx);
            \GPTensor{0,0}{1.3}{.6}{\small $A$}{0};
            \RedArrow{-0.8,0}{0.2}{1}
            \RedArrow{1.0,0}{0.2}{1}
            \RedArrow{0,1.0}{0.2}{0}
            \RedArrow{0,-0.8}{0.2}{0}
            \GPTensor{\singledx,\singledx}{1.3}{.6}{\small $A^\dag$}{2};
            \begin{scope}[shift={(\singledx,\singledx)}]
                \RedArrow{-0.8,0}{0.2}{1}
                \RedArrow{1.0,0}{0.2}{1}
                \RedArrow{0,1.0}{0.2}{0}
                \RedArrow{0,-0.8}{0.2}{0}
            \end{scope}
            \begin{scope}[rotate around={-45:(0,0)}]
                \Arrow{0,1.5}{0.2}{0}
            \end{scope}
            \DTensor{-\singledx,0}{1.2}{.6}{\small $P_A$}{0}
            \DTensor{0,-\singledx}{1.2}{.6}{\small $P_B$}{1}
            \DTensor{2*\singledx,\singledx}{1.2}{.6}{\small $P_A^\dag$}{0}
            \DTensor{\singledx,2*\singledx}{1.2}{.6}{\small $P_B^\dag$}{1}
    \end{tikzpicture}
    \end{align}
Then Schur's lemma directly implies the isometry property.
\end{proof}

Note that, although every MPS can be expressed without loss of generality in terms of isometric tensors, the same is not true for a general PEPS. That is, isometric PEPS (of a fixed virtual dimension $D$), as the ones appearing in the Lemma above, form a strict subclass of PEPS with the same $D$.

Similar to the case of MPS, MF PEPS also admit a convenient Clifford-like structure after performing the polar decomposition.
\begin{theorem}\label{iso_structural_theorem}
    Under the conditions~(\ref{peps_isosym}), the polar decomposition of the PEPS tensor reads
\begin{align}
    \begin{tikzpicture}[scale=0.5,baseline={([yshift=-0.65ex] current bounding box.center) }]
        \GPTensor{0,0}{1.2}{.6}{}{0};
\end{tikzpicture}
=\begin{tikzpicture}[scale=0.5,baseline={([yshift=-0.65ex] current bounding box.center) }]
        \draw[very thick, draw=red] (-\singledx-1.2,0) -- (\singledx+1.2,0);
        \draw[very thick, draw=red] (0,-\singledx-1.2) -- (0,\singledx+1.2);
        \draw[thick, fill=tensorcolor, rounded corners=2pt] (-\singledx-0.6,-\singledx-0.6) rectangle (\singledx+0.6,\singledx+0.6);
        \draw[thick, fill=white, rounded corners=2pt] (-\singledx+0.6,-\singledx+0.6) rectangle (\singledx-0.6,\singledx-0.6);
        \UPTensor{0,0}{1.2}{.6}{\small $V$}{3};
        \RedArrow{-0.8-\singledx,0}{0.2}{1}
        \RedArrow{0.8+\singledx,0}{0.2}{3}
        \RedArrow{0,0.8+\singledx}{0.2}{2}
        \RedArrow{0,-0.8-\singledx}{0.2}{0}
        \RedArrow{-0.8,0}{0.2}{1}
        \RedArrow{0.8,0}{0.2}{3}
        \RedArrow{0,0.8}{0.2}{2}
        \RedArrow{0,-0.8}{0.2}{0}
        \draw (0,-\singledx) node {\small $Q$};
\end{tikzpicture}
\end{align}
where \(Q: \mathbb{C}^{D^4} \rightarrow \mathbb{C}^{D^4}\) is a positive-semidefinite matrix and \(V:  \mathbb{C}^{D^4} \rightarrow \mathbb{C}^{d}\) is an isometry in the range of \(Q\), satisfying:
\begin{enumerate}[(i)]
    \item \(V\) has the same null space as \(Q\), i.e.,
    \begin{align}
        \begin{tikzpicture}[scale=0.5,baseline={([yshift=-0.65ex] current bounding box.center) }]
        \GPTensor{0,0}{1.2}{.6}{}{0};
        \draw (-1.4,0) node {\small $i_v$};
        \draw (0,1.4) node {\small $j_v$};
        \draw (1.4,0) node {\small $k_v$};
        \draw (0,-1.4) node {\small $l_v$};
        \begin{scope}[shift={(\singledx,\singledx)},xscale=-1,yscale=-1]
            \UPTensor{0,0}{1.2}{.6}{\small $V^\dag$}{3};
            \draw (-1.4,0) node {\small $k_p$};
            \draw (0,1.4) node {\small $l_p$};
            \draw (1.4,0) node {\small $i_p$};
            \draw (0,-1.4) node {\small $j_p$};
        \end{scope}
        \end{tikzpicture}
        =\begin{tikzpicture}[scale=0.5,baseline={([yshift=-0.65ex] current bounding box.center) }]
        \draw[thick, fill=tensorcolor, rounded corners=2pt] (-\singledx-0.6,-\singledx-0.6) rectangle (\singledx+0.6,\singledx+0.6);
        \draw[thick, fill=white, rounded corners=2pt] (-\singledx+0.6,-\singledx+0.6) rectangle (\singledx-0.6,\singledx-0.6);
        \draw[very thick, draw=red] (-\singledx-1.2,0) -- (-\singledx-0.6,0);
        \draw[very thick, draw=red] (\singledx+0.6,0) -- (\singledx+1.2,0);
        \draw[very thick, draw=red] (0,-\singledx-1.2) -- (0,-\singledx-0.6);
        \draw[very thick, draw=red] (0,\singledx+0.6) -- (0,\singledx+1.2);
        \draw[very thick, draw=red] (-\singledx+1.2,0) -- (-\singledx+0.6,0);
        \draw[very thick, draw=red] (\singledx-0.6,0) -- (\singledx-1.2,0);
        \draw[very thick, draw=red] (0,-\singledx+1.2) -- (0,-\singledx+0.6);
        \draw[very thick, draw=red] (0,\singledx-0.6) -- (0,\singledx-1.2);
        \RedArrow{-0.8-\singledx,0}{0.2}{1}
        \RedArrow{0.8+\singledx,0}{0.2}{3}
        \RedArrow{0,0.8+\singledx}{0.2}{2}
        \RedArrow{0,-0.8-\singledx}{0.2}{0}
        \RedArrow{-0.8,0}{0.2}{1}
        \RedArrow{0.8,0}{0.2}{3}
        \RedArrow{0,0.8}{0.2}{2}
        \RedArrow{0,-0.8}{0.2}{0}
        \draw (-1.4-\singledx,0) node {\small $i_v$};
        \draw (0,1.4+\singledx) node {\small $j_v$};
        \draw (1.4+\singledx,0) node {\small $k_v$};
        \draw (0,-1.4-\singledx) node {\small $l_v$};
        \draw (-0.4,0) node {\small $i_p$};
        \draw (0,0.4) node {\small $j_p$};
        \draw (0.4,0) node {\small $k_p$};
        \draw (0,-0.4) node {\small $l_p$};
        \draw (0,-\singledx) node {\small $Q$};
\end{tikzpicture}
    \end{align}
    where the corresponding legs are labelled for clarity.
    \item \(P_A \otimes I \otimes P_{A1}^T \otimes P_{A2}^T\) and \(I \otimes P_B \otimes P_{B1}^T \otimes P_{B2}^T\) commute with \(Q\), \(\forall P_A, P_B\).
    \item When \(\{P_A\}, \{P_B\}\) are the \(D\)-dimensional Weyl-Heisenberg group, \(Q\) can be written in the following Clifford form:
    \begin{align}
    \begin{tikzpicture}[scale=0.5,baseline={([yshift=-0.65ex] current bounding box.center) }]
        \draw[thick, fill=tensorcolor, rounded corners=2pt] (-\singledx-0.6,-\singledx-0.6) rectangle (\singledx+0.6,\singledx+0.6);
        \draw[thick, fill=white, rounded corners=2pt] (-\singledx+0.6,-\singledx+0.6) rectangle (\singledx-0.6,\singledx-0.6);
        \draw[very thick, draw=red] (-\singledx-1.2,0) -- (-\singledx-0.6,0);
        \draw[very thick, draw=red] (\singledx+0.6,0) -- (\singledx+1.2,0);
        \draw[very thick, draw=red] (0,-\singledx-1.2) -- (0,-\singledx-0.6);
        \draw[very thick, draw=red] (0,\singledx+0.6) -- (0,\singledx+1.2);
        \draw[very thick, draw=red] (-\singledx+1.2,0) -- (-\singledx+0.6,0);
        \draw[very thick, draw=red] (\singledx-0.6,0) -- (\singledx-1.2,0);
        \draw[very thick, draw=red] (0,-\singledx+1.2) -- (0,-\singledx+0.6);
        \draw[very thick, draw=red] (0,\singledx-0.6) -- (0,\singledx-1.2);
        \RedArrow{-0.8-\singledx,0}{0.2}{1}
        \RedArrow{1.+\singledx,0}{0.2}{1}
        \RedArrow{0,1.+\singledx}{0.2}{0}
        \RedArrow{0,-0.8-\singledx}{0.2}{0}
        \RedArrow{-0.8,0}{0.2}{1}
        \RedArrow{0.8,0}{0.2}{3}
        \RedArrow{0,0.8}{0.2}{2}
        \RedArrow{0,-0.8}{0.2}{0}
        \draw (-1.4-\singledx,0) node {\small $i_v$};
        \draw (0,1.4+\singledx) node {\small $j_v$};
        \draw (1.4+\singledx,0) node {\small $k_v$};
        \draw (0,-1.4-\singledx) node {\small $l_v$};
        \draw (-0.4,0) node {\small $i_p$};
        \draw (0,0.4) node {\small $j_p$};
        \draw (0.4,0) node {\small $k_p$};
        \draw (0,-0.4) node {\small $l_p$};
        \draw (0,-\singledx) node {\small $Q$};
\end{tikzpicture}
    =\begin{tikzpicture}[scale=0.5,baseline={([yshift=-0.65ex] current bounding box.center) }]
        \begin{scope}[rotate around={-45:(0,0)}]
            \draw[very thick, draw=red] (-1.2, 1.2) -- (1.2, -1.2);
            \draw[very thick, draw=red] (1.2, 1.2) -- (-1.2, -1.2);
            \draw[very thick, draw=red] (0, 1.2) -- (0, -1.2);
            \draw[thick, fill=tensorcolor, rounded corners=2pt] (-0.5*\singledx-0.6,-0.6) rectangle (0.5*\singledx+0.6,+0.6);
            \HollowRedDot{0,.-1.2}
            \draw (0,0) node {\small $U_C$};
            \draw (-1.4, -1.4) node {\small $i_v$};
            \draw (-1.4, 1.4) node {\small $j_v$};
            \draw (1.4, 1.4) node {\small $k_v$};
            \draw (1.4, -1.4) node {\small $l_v$};
            \draw (0.4,2) node {\small $i_p, j_p, k_p, l_p$};
            \RedArrow{0,1.}{0.2}{0}
        \end{scope}
        \RedArrow{-1.2,0}{0.2}{1}
        \RedArrow{1.4,0}{0.2}{1}
        \RedArrow{0,1.4}{0.2}{0}
        \RedArrow{0,-1.2}{0.2}{0}
    \end{tikzpicture}
    \;.
    \end{align}
    Here we group the \(i_p,j_p,k_p,l_p\) legs into a single leg. \(U_C: \mathbb{C}^{D^6} \rightarrow \mathbb{C}^{D^6}\) is a Clifford unitary satisfying the same symmetry constrains as in (ii), and \(\ket{\psi}\) is a (possibly magic) pure state.
\end{enumerate}
\end{theorem}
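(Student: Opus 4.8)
The plan is to follow the one-dimensional blueprint of \cref{structural_theorem_1} and \cref{structural_theorem_2}, carrying the additional bookkeeping forced by having two virtual output legs. Property (i) is just the defining feature of the polar decomposition $A = VQ$ with $Q = (A^\dagger A)^{1/2}$: one has $\ker Q = \ker A$, and $V$ is taken to be the partial isometry whose initial space is $\operatorname{range}Q = (\ker Q)^\perp$, so $V$ is isometric precisely on $\operatorname{range}Q$ and vanishes on $\ker Q$. Nothing two-dimensional enters at this stage.

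For (ii) I would repeat the ``transfer-matrix'' computation. Contracting $A = VQ$ with $A^\dagger$ over the physical leg, the shared null space lets $V^\dagger V$ collapse to the projector onto $\operatorname{range}Q$, leaving $Q^2 = A^\dagger A$ as an operator on the four virtual legs. Feeding the two symmetry relations \eqref{peps_isosym1} and \eqref{peps_isosym2} into $A^\dagger A$ then shows that $Q^2$ commutes with both $P_A \otimes I \otimes P_{A1}^T \otimes P_{A2}^T$ and $I \otimes P_B \otimes P_{B1}^T \otimes P_{B2}^T$; since $Q \ge 0$ has the same eigenspaces as $Q^2$, the commutation descends to $Q$, which is exactly (ii).

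The substance is (iii). First I would promote $Q$ to a sideways isometry, in analogy with the opening move of \cref{structural_theorem_2}. Bending the up and right virtual legs to the output side, the isoTNS lemma above together with the shared null space of $Q$ and $V$ should show that contracting $Q$ with $Q^\dagger$ over the up, right, and four internal legs yields the identity on the remaining left and down legs; hence $Q$ defines an isometry $V_Q : \mathbb{C}^{D^2} \to \mathbb{C}^{D^6}$ from the two incoming (left, down) legs to the two outgoing (up, right) legs together with the four internal legs. Reading property (ii) through this reshaping—tracking that each bent leg converts a transpose into a complex conjugate—endows $V_Q$ with a symmetry intertwining the two-qudit Weyl-Heisenberg operators generated by $P_A$ and $P_B$ on the $\mathbb{C}^{D^2}$ input with multi-qudit Weyl-Heisenberg operators on the $\mathbb{C}^{D^6}$ output. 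I would then invoke the two-qudit-input generalization of \cref{lemma_clifford_purification} to produce a Clifford unitary $U_C : \mathbb{C}^{D^6} \to \mathbb{C}^{D^6}$ realizing the same intertwining, contract $U_C^\dagger$ onto $V_Q$, and note that the composite $M = U_C^\dagger V_Q$ intertwines $P$ on the input with $I \otimes P$ on the output. Because $\{P_A \otimes I,\, I \otimes P_B\}$ generate the full operator algebra on $\mathbb{C}^{D^2}$, their representation is irreducible, so Schur's lemma forces $M = \ket{\psi} \otimes I_{D^2}$ for some pure state $\ket{\psi} \in \mathbb{C}^{D^4}$; undoing the contraction gives the claimed Clifford form, with $U_C$ inheriting the symmetry of (ii).

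The main obstacle is the first step of (iii): establishing that $Q$ reshaped sideways is genuinely an isometry. In one dimension this was immediate from \cref{lemma_cf}, but in two dimensions bending only two of the four legs to the output interacts with the directed (isoTNS) arrow structure, and one must check that the isoTNS contraction really collapses to the identity on the retained legs rather than to some nontrivial positive operator—reconciling the $(\text{left},\text{down}) \to (\text{phys},\text{up},\text{right})$ reshaping used by the isoTNS lemma with the $(\text{four virtual}) \to \text{phys}$ reshaping used in the polar decomposition. The attendant difficulty is combinatorial: keeping the transpose and conjugate decorations on $P_{A1}, P_{A2}, P_{B1}, P_{B2}$ consistent through the reshaping so that the target operators remain honest Weyl-Heisenberg elements, which is precisely what lets \cref{lemma_clifford_purification} and Schur's lemma close the argument as in the MPS case.
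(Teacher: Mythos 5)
Your proposal is correct and takes essentially the same route as the paper: the paper's own proof of this theorem is just the remark that it is ``entirely analogous to the case of MPS,'' and your write-up is precisely that analogy carried out---polar decomposition for (i), the $Q^2$-commutation argument descending to $Q$ by positivity for (ii), and the sideways-isometry / Clifford-purification / Schur chain for (iii). The obstacle you flag at the end does resolve exactly as you hope: contracting $A^\dagger A = Q^2$ over the up and right virtual legs is precisely the isoTNS condition supplied by the preceding lemma, so $Q$ reshaped from (left, down) to (up, right, internal) is genuinely an isometry and the rest of the argument closes as in the MPS case.
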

The proof is entirely analogous to the case of MPS, and Corollary~\ref{Up_PP_relation}
carries to PEPS as well. Importantly, MF PEPS also exhibit the Clifford sequential generation structure as demonstrated in Eqs.~(\ref{mps_op_ex1},\ref{mps_op_ex2},\ref{mps_op_ex3}), and which allows for the efficient evaluation of the Weyl-Heisenberg observable. Since the evaluation of observables with large support in PEPS is computationally inefficient even with isoTNS~\cite{malz2024computational}, the Clifford sequential generation structure provides a significant computational simplification in two dimensions.

Lastly, the orthogonality center which can be considered as the ``starting point" of pushing operators need not be at the lower-left corner but can be located in the bulk.
For example, we can glue PEPS with isometry direction pointing at all four corners as shown below. The green lines partition the PEPS into regions with the intersection being the orthogonality center. A defect operator can be pushed to any one of the four corners, depending on which region it belongs to.
\begin{align}
    \begin{tikzpicture}[scale=0.5,baseline={([yshift=-0.65ex] current bounding box.center) }]
        \begin{scope}[shift={(0,0)}]
            \GPTensor{0,0}{1.6}{.6}{A}{0};
            \RedArrow{-1,0}{0.2}{3}
            \RedArrow{0.8,0}{0.2}{3}
            \RedArrow{0,0.8}{0.2}{2}
            \RedArrow{0,-1}{0.2}{2}
        \end{scope}
        \begin{scope}[shift={(\singledx+0.8,0)}]
            \GPTensor{0,0}{1.6}{.6}{A}{0};
            \RedArrow{-0.8,0}{0.2}{1}
            \RedArrow{1,0}{0.2}{1}
            \RedArrow{0,0.8}{0.2}{2}
            \RedArrow{0,-1}{0.2}{2}
        \end{scope}
        \begin{scope}[shift={(2*\singledx+1.6,0)}]
            \GPTensor{0,0}{1.6}{.6}{A}{0};
            \RedArrow{-0.8,0}{0.2}{1}
            \RedArrow{1,0}{0.2}{1}
            \RedArrow{0,0.8}{0.2}{2}
            \RedArrow{0,-1}{0.2}{2}
        \end{scope}
        \begin{scope}[shift={(0,\singledx+0.8)}]
            \GPTensor{0,0}{1.6}{.6}{A}{0};
            \RedArrow{-1,0}{0.2}{3}
            \RedArrow{0.8,0}{0.2}{3}
            \RedArrow{0,1.0}{0.2}{0}
            \RedArrow{0,-0.8}{0.2}{0}
        \end{scope}
        \begin{scope}[shift={(0,2*\singledx+1.6)}]
            \GPTensor{0,0}{1.6}{.6}{A}{0};
            \RedArrow{-1,0}{0.2}{3}
            \RedArrow{0.8,0}{0.2}{3}
            \RedArrow{0,1.0}{0.2}{0}
            \RedArrow{0,-0.8}{0.2}{0}
        \end{scope}
        \begin{scope}[shift={(\singledx+0.8,\singledx+0.8)}]
            \GPTensor{0,0}{1.6}{.6}{A}{0};
            \RedArrow{-0.8,0}{0.2}{1}
            \RedArrow{1,0}{0.2}{1}
            \RedArrow{0,1}{0.2}{0}
            \RedArrow{0,-0.8}{0.2}{0}
        \end{scope}
        \begin{scope}[shift={(2*\singledx+1.6,\singledx+0.8)}]
            \GPTensor{0,0}{1.6}{.6}{A}{0};
            \RedArrow{-0.8,0}{0.2}{1}
            \RedArrow{1,0}{0.2}{1}
            \RedArrow{0,1}{0.2}{0}
            \RedArrow{0,-0.8}{0.2}{0}
        \end{scope}
        \begin{scope}[shift={(\singledx+0.8,2*\singledx+1.6)}]
            \GPTensor{0,0}{1.6}{.6}{A}{0};
            \RedArrow{-0.8,0}{0.2}{1}
            \RedArrow{1,0}{0.2}{1}
            \RedArrow{0,1}{0.2}{0}
            \RedArrow{0,-0.8}{0.2}{0}
        \end{scope}
        \begin{scope}[shift={(2*\singledx+1.6,2*\singledx+1.6)}]
            \GPTensor{0,0}{1.6}{.6}{A}{0};
            \RedArrow{-0.8,0}{0.2}{1}
            \RedArrow{1,0}{0.2}{1}
            \RedArrow{0,1}{0.2}{0}
            \RedArrow{0,-0.8}{0.2}{0}
        \end{scope}
        \draw[very thick, dashed, draw=gtensorcolor] (0.5*\singledx+0.4, -1.6) -- (0.5*\singledx+0.4, 3*\singledx+1.6);
        \draw[very thick, dashed, draw=gtensorcolor] (-1.6, 0.5*\singledx+0.4) -- (3*\singledx+1.6, 0.5*\singledx+0.4);
\end{tikzpicture}
\end{align}


\subsection{Topological Order}
An important class of MF PEPS are those with topological order. This is one instance where MF has a unique advantage because it is known that topological order cannot be prepared by a constant-depth unitary circuits~\cite{kitaev2003fault,bravyi2006lieb,aharonov2018quantum}, but can be using constant-depth MF.
Therefore, in this section we begin by imposing an MF symmetry that resemble the topological order. This symmetry is insufficient to necessarily imply topological states, so we then impose an additional symmetry which further constrains the state but allows to connect with topological order. We probe the latter via the spectrum of the transfer matrix and provide analytical solutions to understand its degeneracy.


\subsubsection{Symmetry constraints}
Our first step is to impose the following stronger symmetry constraint where we have a single MF basis $\mathcal B = \{ P_i \}$ and we allow the defect operators to move in any direction:
\begin{subequations}\label{topo_sym}
\begin{align}\label{topo_sym1}
\begin{tikzpicture}[scale=0.5,baseline={([yshift=-0.65ex] current bounding box.center) }]
        \DTensor{-\singledx,0}{1.}{.6}{\small $P_i$}{0};
        \GPTensor{0,0}{1.3}{.6}{\small $A$}{0};
        \RedArrow{-0.8,0}{0.2}{1}
        \RedArrow{1.0,0}{0.2}{1}
        \RedArrow{0,1.0}{0.2}{0}
        \RedArrow{0,-0.8}{0.2}{0}
\end{tikzpicture}
=\begin{tikzpicture}[scale=0.5,baseline={([yshift=-3ex] current bounding box.center) }]
        \GPTensor{0,0}{1.3}{.6}{\small $A$}{0};
        \DTensor{0,\singledx}{1.}{.6}{\small $P_i$}{1};
        \begin{scope}[rotate around={45:(1.6,1.6)}]
        \UTensor{1.6,1.6}{1.3}{.6}{\small $U_{1}$}{2};
        \end{scope}
        \RedArrow{-0.8,0}{0.2}{1}
        \RedArrow{1.0,0}{0.2}{1}
        \RedArrow{0,1.0}{0.2}{0}
        \RedArrow{0,-0.8}{0.2}{0}
\end{tikzpicture}\\ \label{topo_sym2}
=\begin{tikzpicture}[scale=0.5,baseline={([yshift=-2.65ex] current bounding box.center) }]
        \GPTensor{0,0}{1.3}{.6}{\small $A$}{0};
        \DTensor{\singledx,0}{1.}{.6}{\small $P_i$}{0};
        \begin{scope}[rotate around={45:(1.6,1.6)}]
        \UTensor{1.6,1.6}{1.3}{.6}{\small $U_{2}$}{2};
        \end{scope}
        \RedArrow{-0.8,0}{0.2}{1}
        \RedArrow{1.0,0}{0.2}{1}
        \RedArrow{0,1.0}{0.2}{0}
        \RedArrow{0,-0.8}{0.2}{0}
\end{tikzpicture}
=\begin{tikzpicture}[scale=0.5,baseline={([yshift=-0ex] current bounding box.center) }]
        \GPTensor{0,0}{1.3}{.6}{\small $A$}{0};
        \DTensor{0,-\singledx}{1.}{.6}{\small $P_i$}{1};
        \begin{scope}[rotate around={45:(1.6,1.6)}]
        \UTensor{1.6,1.6}{1.3}{.6}{\small $U_{3}$}{2};
        \end{scope}
        \RedArrow{-0.8,0}{0.2}{1}
        \RedArrow{1.0,0}{0.2}{1}
        \RedArrow{0,1.0}{0.2}{0}
        \RedArrow{0,-0.8}{0.2}{0}
\end{tikzpicture}
\end{align}
\end{subequations}
Here we again label the virtual legs with arrows to specify the order of matrix multiplication. 
However, the above symmetry is insufficient to necessarily imply topological order. We discuss why this is the case in Appendix~\ref{topo_insufficiency}. Following Ref.~\cite{schuch2010peps}, we consider a subgroup \(\{M\}\) of the MF group and impose the following constraint.
\begin{align}\label{topo_more_sym}
\begin{tikzpicture}[scale=0.5,baseline={([yshift=-0.65ex] current bounding box.center) }]
        \GPTensor{0,0}{1.2}{.6}{\small $A$}{0};
        \RedArrow{-0.8,0}{0.2}{1}
        \RedArrow{1.0,0}{0.2}{1}
        \RedArrow{0,1.0}{0.2}{0}
        \RedArrow{0,-0.8}{0.2}{0}
\end{tikzpicture}
=e^{i\phi}\begin{tikzpicture}[scale=0.5,baseline={([yshift=-0.65ex] current bounding box.center) }]
        \DTensor{-\singledx,0}{1.2}{.6}{\small $M$}{0};
        \DTensor{\singledx,0}{1.2}{.6}{\small $M^\dag$}{0};
        \DTensor{0,-\singledx}{1.2}{.6}{\small $M$}{1};
        \DTensor{0,\singledx}{1.2}{.6}{\small $M^\dag$}{1};
        \GPTensor{0,0}{1.2}{.6}{\small $A$}{0};
        \RedArrow{-0.8,0}{0.2}{1}
        \RedArrow{1.0,0}{0.2}{1}
        \RedArrow{0,1.0}{0.2}{0}
        \RedArrow{0,-0.8}{0.2}{0}
\end{tikzpicture}
\;, \forall M
\end{align}
Here \(\phi\) is some phase. Note that there is no error correction operator acting on the physical leg so it is stricter than MF symmetry. Also the above symmetry is satisfied by \(Q\) as well because \(V^\dag A = Q\).

Similar to the case of MPS with the SPT-type symmetry [Eq.~\eqref{full_push_thru_sym}], we provide an analytical solution to the states satisfying the above symmetry when the MF group is Abelian.
\begin{theorem}\label{topo_structural_theorem}
Let \(\{P_i\}\) be a projective representation of an Abelian MF group, then under Eqs.~(\ref{topo_sym}),
\begin{align}\label{topo_structural_theorem_eq}
Q
=\sum_{i} \alpha_i \;
\begin{tikzpicture}[scale=0.5,baseline={([yshift=-0.65ex] current bounding box.center) }]
        \PTensor{-\singledx,0}{1.5}{.6}{\small $P_i^\dag$}{3};
        \PTensor{\singledx,0}{1.5}{.6}{\small $P_i$}{1};
        \PTensor{0,-\singledx}{1.5}{.6}{\small $P_i^\dag$}{2};
        \PTensor{0,\singledx}{1.5}{.6}{\small $P_i$}{0};
        \RedArrow{-0.8-\singledx,0}{0.2}{1}
        \RedArrow{1.0+\singledx,0}{0.2}{1}
        \RedArrow{0,1.0+\singledx}{0.2}{0}
        \RedArrow{0,-0.8-\singledx}{0.2}{0}
\end{tikzpicture}
\;.
\end{align}
In addition, under Eqs. (\ref{topo_more_sym}), \(\alpha_{P_i}=e^{i\phi}\alpha_{M P_i}\) in Eq.~(\ref{topo_structural_theorem_eq}) for any \(M\) in the subgroup and \(n \phi = 0  \mod  2\pi\) where \(n\) is the order of \(M\).
\end{theorem}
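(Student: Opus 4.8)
The plan is to follow the one-dimensional argument [the theorem proved for the SPT-type symmetry \eqref{full_push_thru_sym}] almost verbatim, with the four virtual legs of the PEPS tensor replacing the two legs of the MPS tensor. First I would expand \(Q\) in the complete operator basis that the MF group supplies on each virtual leg,
\[
  Q=\sum_{ijkl}\alpha_{ijkl}\,P_i^\dagger\otimes P_j\otimes P_k^\dagger\otimes P_l ,
\]
ordering the factors as (left, right, down, up) to match \eqref{topo_structural_theorem_eq}. Before matching coefficients I must record the purely virtual symmetries that \(Q\) inherits from the push relations \eqref{topo_sym}. For this I reuse the polar-decomposition and transfer-matrix argument of Theorem~\ref{structural_theorem_1} and Theorem~\ref{iso_structural_theorem}: because the physical corrections \(U_1,U_2,U_3\) are absorbed by the isometry \(V\) (Schur's lemma, using that \(\mathcal B\) is irreducible), each of the three pushes descends to the statement that \(Q\) is invariant under acting on two of its legs jointly by \(P_m\). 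Concretely \(L\!\to\!R\) ties the left and right legs, \(L\!\to\!U\) ties left and up, and \(L\!\to\!D\) ties left and down, for every \(m\).

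I would then impose these invariances term by term on the expansion. As in one dimension, acting on a basis monomial by \(P_m\) produces the projective commutator phase \(\omega(\cdot,m)\) (fixed by \(P_mP_j=\omega(j,m)P_jP_m\)), so the identity must hold on each monomial separately. The \(L\!\to\!R\) relation forces \(\omega(i,m)=\omega(j,m)\) for all \(m\); equivalently \(P_iP_j^{-1}\) commutes with the whole group, and since \(\{P_m\}\) is a complete irreducible basis this gives \(P_iP_j^{-1}\propto I\), i.e. \(i=j\). The identical reasoning applied to \(L\!\to\!U\) and \(L\!\to\!D\) gives \(i=l\) and \(i=k\). Hence \(i=j=k=l\) and the sum collapses to \(Q=\sum_i\alpha_i\,P_i^\dagger\otimes P_i\otimes P_i^\dagger\otimes P_i\), which is \eqref{topo_structural_theorem_eq}. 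Geometrically, the surviving configurations are single flux lines threading the tensor with one conserved Weyl--Heisenberg label, the lattice counterpart of the topological constraint discussed around \eqref{peps_bad_sym}.

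For the refinement under \eqref{topo_more_sym} I would use that this constraint acts directly on \(Q\) (it carries no physical correction, as noted in the text), so \(Q=e^{i\phi}\,(M\otimes M^\dagger\otimes M\otimes M^\dagger)\,Q\). Substituting the diagonal form and reindexing via \(P_{g'}\propto M^{-1}P_g\) — consistent precisely because \(M\) commutes with every \(P_g\) up to a phase — maps the term labelled \(P_i\) to the one labelled \(MP_i\). Tracking the projective cocycle phases, which cancel between the conjugate leg-pairs \((L,R)\) and \((D,U)\) because these carry \(P_i^\dagger\) against \(P_i\), leaves the recursion \(\alpha_{P_i}=e^{i\phi}\alpha_{MP_i}\). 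Iterating it \(n\) times and using \(M^n\propto I\) returns the label to \(P_i\) and yields \(\alpha_{P_i}=e^{in\phi}\alpha_{P_i}\), so any nonzero solution requires \(n\phi\equiv 0\pmod{2\pi}\).

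I expect the main obstacle to lie in the first paragraph rather than in the bookkeeping: carefully justifying that each physical-correction push in \eqref{topo_sym} descends to a clean two-leg symmetry of \(Q\), i.e. running the PEPS transfer-matrix/Schur step and pinning down the exact transpose and conjugation pattern on the outgoing legs (cf. the transposes in Theorem~\ref{iso_structural_theorem}(ii)). Once those virtual symmetries are in hand, the term-by-term matching and the phase tracking in both parts reduce to repeated use of the non-degeneracy of the Abelian commutator \(\omega\), and are routine.
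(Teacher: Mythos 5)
Your proposal is correct and follows essentially the same route as the paper: expand \(Q\) in the MF operator basis, use the two-leg virtual symmetries that \(Q\) inherits from Eqs.~(\ref{topo_sym}) through the polar decomposition, and match terms using the non-degeneracy of the commutator phase \(\omega\) together with completeness and irreducibility of the basis to force \(i=j=k=l\). Your explicit handling of the second claim (the recursion \(\alpha_{P_i}=e^{i\phi}\alpha_{MP_i}\) with the phases cancelling between conjugate leg pairs, and the iteration giving \(n\phi\equiv 0 \bmod 2\pi\)) is also sound; the paper's written proof actually stops after establishing the diagonal form and leaves that part implicit, so your version is marginally more complete.
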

\begin{proof}
    The proof is entirely analogous to the case of MPS. We expand \(Q\) in the MF basis:
    \begin{align}
    Q
    =\sum_{ijkl} \alpha_{ijkl} \;
    \begin{tikzpicture}[scale=0.5,baseline={([yshift=-0.65ex] current bounding box.center) }]
            \PTensor{-\singledx,0}{1.5}{.6}{\small $P_i^\dag$}{3};
            \PTensor{\singledx,0}{1.5}{.6}{\small $P_k$}{1};
            \PTensor{0,-\singledx}{1.5}{.6}{\small $P_l^\dag$}{2};
            \PTensor{0,\singledx}{1.5}{.6}{\small $P_j$}{0};
            \RedArrow{-0.8-\singledx,0}{0.2}{1}
            \RedArrow{1.0+\singledx,0}{0.2}{1}
            \RedArrow{0,1.0+\singledx}{0.2}{0}
            \RedArrow{0,-0.8-\singledx}{0.2}{0}
    \end{tikzpicture}
    \;.
\end{align}
We now apply the symmetry constraint on legs labeled by \(i\) and \(j\).
\begin{widetext}
\begin{align}
    Q
    =\sum_{ijkl} \alpha_{ijkl} \begin{tikzpicture}[scale=0.38,baseline={([yshift=-4.65ex] current bounding box.center) }]
             \PTensor{-2*\singledx,0}{1.}{.6}{\small $P_i^\dag$}{3};
            \UTensor{-\singledx,0}{1.}{.6}{\small $P_m^\dag$}{0};
            \DTensor{-3*\singledx,0}{1.2}{.6}{\small $P_m$}{0};
            \PTensor{\singledx,0}{1.5}{.6}{\small $P_k$}{1};
            \PTensor{0,-\singledx}{1.5}{.6}{\small $P_l^\dag$}{2};
            \PTensor{0,2*\singledx}{1.}{.6}{\small $P_j$}{0};
            \UTensor{0,\singledx}{1.}{.6}{\small $P_m$}{1};
            \DTensor{0,3*\singledx}{1.2}{.6}{\small $P_m^\dag$}{1};
            \RedArrow{-0.8-3*\singledx,0}{0.2}{1}
            \RedArrow{1.0+\singledx,0}{0.2}{1}
            \RedArrow{0,1.0+3*\singledx}{0.2}{0}
            \RedArrow{0,-0.8-\singledx}{0.2}{0}
    \end{tikzpicture}
    =\sum_{ijkl} \frac{\omega(j,m)}{\omega(i,m)} \alpha_{ijkl} \begin{tikzpicture}[scale=0.38,baseline={([yshift=-4.65ex] current bounding box.center) }]
            \PTensor{-1*\singledx,0}{1.}{.6}{\small $P_i^\dag$}{3};
            \UTensor{-2*\singledx,0}{1.}{.6}{\small $P_m^\dag$}{0};
            \DTensor{-3*\singledx,0}{1.2}{.6}{\small $P_m$}{0};
            \PTensor{\singledx,0}{1.5}{.6}{\small $P_k$}{1};
            \PTensor{0,-\singledx}{1.5}{.6}{\small $P_l^\dag$}{2};
            \PTensor{0,1*\singledx}{1.}{.6}{\small $P_j$}{0};
            \UTensor{0,2*\singledx}{1.}{.6}{\small $P_m$}{1};
            \DTensor{0,3*\singledx}{1.2}{.6}{\small $P_m^\dag$}{1};
            \RedArrow{-0.8-3*\singledx,0}{0.2}{1}
            \RedArrow{1.0+\singledx,0}{0.2}{1}
            \RedArrow{0,1.0+3*\singledx}{0.2}{0}
            \RedArrow{0,-0.8-\singledx}{0.2}{0}
    \end{tikzpicture}
    =\sum_{ijkl} \frac{\omega(j,m)}{\omega(i,m)} \alpha_{ijkl} \begin{tikzpicture}[scale=0.38,baseline={([yshift=-0.65ex] current bounding box.center) }]
            \PTensor{-\singledx,0}{1.5}{.6}{\small $P_i^\dag$}{3};
            \PTensor{\singledx,0}{1.5}{.6}{\small $P_k$}{1};
            \PTensor{0,-\singledx}{1.5}{.6}{\small $P_l^\dag$}{2};
            \PTensor{0,\singledx}{1.5}{.6}{\small $P_j$}{0};
            \RedArrow{-0.8-\singledx,0}{0.2}{1}
            \RedArrow{1.0+\singledx,0}{0.2}{1}
            \RedArrow{0,1.0+\singledx}{0.2}{0}
            \RedArrow{0,-0.8-\singledx}{0.2}{0}
            \end{tikzpicture}
\end{align}
Again we the equation has to be satisfied term-by-term, so \(\omega(i,m)^{-1}\omega(j,m)=1, \forall m\), leaving the only choice to be \(i=j\). One can repeat the same calculation for legs \(i\) and \(k\) to get \(i=k\). Finally, we repeat the same calculation for legs \(i\) and \(l\).
%
\begin{align}
    Q
    =\sum_{ijkl} \alpha_{ijkl} \begin{tikzpicture}[scale=0.38,baseline={([yshift=-0.65ex] current bounding box.center) }]
             \PTensor{-2*\singledx,0}{1.}{.6}{\small $P_i^\dag$}{3};
            \UTensor{-\singledx,0}{1.}{.6}{\small $P_m^\dag$}{0};
            \DTensor{-3*\singledx,0}{1.2}{.6}{\small $P_m$}{0};
            \PTensor{\singledx,0}{1.5}{.6}{\small $P_k$}{1};
            \PTensor{0,-2*\singledx}{1.5}{.6}{\small $P_l^\dag$}{2};
            \UTensor{0,-\singledx}{1.}{.6}{\small $P_m$}{1};
            \DTensor{0,-3*\singledx}{1.2}{.6}{\small $P_m^\dag$}{1};
            \PTensor{0,\singledx}{1.5}{.6}{\small $P_j$}{0};
            \RedArrow{-0.8-3*\singledx,0}{0.2}{1}
            \RedArrow{1.0+\singledx,0}{0.2}{1}
            \RedArrow{0,1.0+\singledx}{0.2}{0}
            \RedArrow{0,-0.8-3*\singledx}{0.2}{0}
    \end{tikzpicture}
    =\sum_{ijkl} \alpha_{ijkl}
    \begin{tikzpicture}[scale=0.38,baseline={([yshift=-0.65ex] current bounding box.center) }]
             \PTensor{-\singledx,0}{1.}{.6}{\small $P_i^\dag$}{3};
            \UTensor{-2*\singledx,0}{1.}{.6}{\small $P_m^\dag$}{0};
            \DTensor{-3*\singledx,0}{1.2}{.6}{\small $P_m$}{0};
            \PTensor{\singledx,0}{1.5}{.6}{\small $P_k$}{1};
            \PTensor{0,-\singledx}{1.5}{.6}{\small $P_l^\dag$}{2};
            \UTensor{0,-2*\singledx}{1.}{.6}{\small $P_m$}{1};
            \DTensor{0,-3*\singledx}{1.2}{.6}{\small $P_m^\dag$}{1};
            \PTensor{0,\singledx}{1.5}{.6}{\small $P_j$}{0};
            \RedArrow{-0.8-3*\singledx,0}{0.2}{1}
            \RedArrow{1.0+\singledx,0}{0.2}{1}
            \RedArrow{0,1.0+\singledx}{0.2}{0}
            \RedArrow{0,-0.8-3*\singledx}{0.2}{0}
    \end{tikzpicture}
    =\sum_{ijkl} \frac{\omega(l,m)}{\omega(i,m)} \alpha_{ijkl} \begin{tikzpicture}[scale=0.38,baseline={([yshift=-0.65ex] current bounding box.center) }]
            \PTensor{-\singledx,0}{1.5}{.6}{\small $P_i^\dag$}{1};
            \PTensor{\singledx,0}{1.5}{.6}{\small $P_k$}{1};
            \PTensor{0,-\singledx}{1.5}{.6}{\small $P_l^\dag$}{0};
            \PTensor{0,\singledx}{1.5}{.6}{\small $P_j$}{0};
            \RedArrow{-0.8-\singledx,0}{0.2}{1}
            \RedArrow{1.0+\singledx,0}{0.2}{1}
            \RedArrow{0,1.0+\singledx}{0.2}{0}
            \RedArrow{0,-0.8-\singledx}{0.2}{0}
            \end{tikzpicture}
\end{align}
\end{widetext}
Again, satisfying the equation requires that \(\omega(i,m)^{-1}\omega(l,m)=1, \forall m\), leaving the only choice to be \(i=m\). Hence, the only non-trivial terms are those with \(i=j=k=l\).
\end{proof}

We discuss some examples. 
Consider the tensor
\begin{align}\label{toric_code_ex}
    Q
    =\sum_{i=0}^{D-1} \begin{tikzpicture}[scale=0.5,baseline={([yshift=-0.65ex] current bounding box.center) }]
            \PTensor{-\singledx,0}{1.5}{.6}{\small $X^{-i}$}{3};
            \PTensor{\singledx,0}{1.5}{.6}{\small $X^{i}$}{1};
            \PTensor{0,-\singledx}{1.5}{.6}{\small $X^{-i}$}{2};
            \PTensor{0,\singledx}{1.5}{.6}{\small $X^{i}$}{0};
            \RedArrow{-0.8-\singledx,0}{0.2}{1}
            \RedArrow{1.0+\singledx,0}{0.2}{1}
            \RedArrow{0,1.0+\singledx}{0.2}{0}
            \RedArrow{0,-0.8-\singledx}{0.2}{0}
    \end{tikzpicture}
\end{align}
corresponding to a $\mathbb Z_D$ quantum double model, up to an isometry in the physical space. As a modification, one can introduce a phase to each term in the summation,
\begin{align}
    Q
    =\sum_{i=0}^{D-1} \omega_k^i \begin{tikzpicture}[scale=0.5,baseline={([yshift=-0.65ex] current bounding box.center) }]
            \PTensor{-\singledx,0}{1.5}{.6}{\small $X^{\shortminus i}$}{3};
            \PTensor{\singledx,0}{1.5}{.6}{\small $X^{i}$}{1};
            \PTensor{0,-\singledx}{1.5}{.6}{\small $X^{\shortminus i}$}{2};
            \PTensor{0,\singledx}{1.5}{.6}{\small $X^{i}$}{0};
            \RedArrow{-0.8-\singledx,0}{0.2}{1}
            \RedArrow{1.0+\singledx,0}{0.2}{1}
            \RedArrow{0,1.0+\singledx}{0.2}{0}
            \RedArrow{0,-0.8-\singledx}{0.2}{0}
    \end{tikzpicture}
    \;,
\end{align}
where \(\omega_k=e^{i 2 k \pi / D }, k=0,1,...,D-1\). This corresponds to a \(\mathbb Z_D\) quantum double model with an electric charge \(k\). To see that, we glue two tensors with charge \(k\) and \(-k\) and apply a local unitary \(Z^k \otimes Z^{-k}\) to annihilate the charge:
\begin{align}
&\sum_{i,j=0}^{D-1}  \omega_k^i \omega_{-k}^{j}  \begin{tikzpicture}[scale=0.45,baseline={([yshift=-0.65ex] current bounding box.center) }]
    \begin{scope}[shift={(0,0)}]
            \PTensor{-2*\singledx,0}{1.}{.6}{\small $X^{\shortminus i}$}{3};
            \PTensor{\singledx,0}{1.5}{.6}{\small $X^{i}$}{1};
            \PTensor{0,-\singledx}{1.5}{.6}{\small $X^{\shortminus i}$}{2};
            \PTensor{0,\singledx}{1.5}{.6}{\small $X^{i}$}{0};
            \UTensor{-\singledx,0}{1.}{.6}{\small $Z^k$}{0}
            \RedArrow{-0.8-\singledx,0}{0.2}{1}
            \RedArrow{1.0+\singledx,0}{0.2}{1}
            \RedArrow{0,1.0+\singledx}{0.2}{0}
            \RedArrow{0,-0.8-\singledx}{0.2}{0}
    \end{scope}
    \begin{scope}[shift={(-5*\singledx,0)}]
            \PTensor{-\singledx,0}{1.5}{.6}{\small $X^{\shortminus j}$}{3};
            \PTensor{2*\singledx,0}{1.}{.6}{\small $X^{j}$}{1};
            \PTensor{0,-\singledx}{1.5}{.6}{\small $X^{\shortminus j}$}{2};
            \PTensor{0,\singledx}{1.5}{.6}{\small $X^{j}$}{0};
            \UTensor{1*\singledx,0}{1.}{.6}{\small $Z^{\shortminus k}$}{0}
            \RedArrow{-0.8-\singledx,0}{0.2}{1}
            \RedArrow{1.0+\singledx,0}{0.2}{1}
            \RedArrow{0,1.0+\singledx}{0.2}{0}
            \RedArrow{0,-0.8-\singledx}{0.2}{0}
    \end{scope}
\end{tikzpicture} \\
&=\sum_{i,j=0}^{D-1} \begin{tikzpicture}[scale=0.45,baseline={([yshift=-0.65ex] current bounding box.center) }]
    \begin{scope}[shift={(0,0)}]
            \PTensor{-1*\singledx,0}{1.}{.6}{\small $X^{\shortminus i}$}{3};
            \PTensor{\singledx,0}{1.5}{.6}{\small $X^{i}$}{1};
            \PTensor{0,-\singledx}{1.5}{.6}{\small $X^{\shortminus i}$}{2};
            \PTensor{0,\singledx}{1.5}{.6}{\small $X^{i}$}{0};
            \UTensor{-2*\singledx,0}{1.}{.6}{\small $Z^{k}$}{0}
            \RedArrow{-0.8-\singledx,0}{0.2}{1}
            \RedArrow{1.0+\singledx,0}{0.2}{1}
            \RedArrow{0,1.0+\singledx}{0.2}{0}
            \RedArrow{0,-0.8-\singledx}{0.2}{0}
    \end{scope}
    \begin{scope}[shift={(-5*\singledx,0)}]
            \PTensor{-\singledx,0}{1.5}{.6}{\small $X^{\shortminus j}$}{3};
            \PTensor{\singledx,0}{1.}{.6}{\small $X^{j}$}{1};
            \PTensor{0,-\singledx}{1.5}{.6}{\small $X^{\shortminus j}$}{2};
            \PTensor{0,\singledx}{1.5}{.6}{\small $X^{j}$}{0};
            \UTensor{2*\singledx,0}{1.}{.6}{\small $Z^{\shortminus k}$}{0}
            \RedArrow{-0.8-\singledx,0}{0.2}{1}
            \RedArrow{1.0+\singledx,0}{0.2}{1}
            \RedArrow{0,1.0+\singledx}{0.2}{0}
            \RedArrow{0,-0.8-\singledx}{0.2}{0}
    \end{scope}
\end{tikzpicture}\\
&=\sum_{i,j=0}^{D-1} \begin{tikzpicture}[scale=0.45,baseline={([yshift=-0.65ex] current bounding box.center) }]
    \begin{scope}[shift={(0,0)}]
            \PTensor{-1*\singledx,0}{1.}{.6}{\small $X^{\shortminus i}$}{3};
            \PTensor{\singledx,0}{1.5}{.6}{\small $X^{i}$}{1};
            \PTensor{0,-\singledx}{1.5}{.6}{\small $X^{\shortminus i}$}{2};
            \PTensor{0,\singledx}{1.5}{.6}{\small $X^{i}$}{0};
            \RedArrow{-0.8-\singledx,0}{0.2}{1}
            \RedArrow{1.0+\singledx,0}{0.2}{1}
            \RedArrow{0,1.0+\singledx}{0.2}{0}
            \RedArrow{0,-0.8-\singledx}{0.2}{0}
    \end{scope}
    \begin{scope}[shift={(-3*\singledx,0)}]
            \PTensor{-\singledx,0}{1.5}{.6}{\small $X^{\shortminus j}$}{3};
            \PTensor{\singledx,0}{1.}{.6}{\small $X^{j}$}{1};
            \PTensor{0,-\singledx}{1.5}{.6}{\small $X^{\shortminus j}$}{2};
            \PTensor{0,\singledx}{1.5}{.6}{\small $X^{j}$}{0};
            \RedArrow{-0.8-\singledx,0}{0.2}{1}
            \RedArrow{1.0+\singledx,0}{0.2}{1}
            \RedArrow{0,1.0+\singledx}{0.2}{0}
            \RedArrow{0,-0.8-\singledx}{0.2}{0}
    \end{scope}
\end{tikzpicture}
\;.
\end{align}
In the second line above we commute \(Z^k\) and \(Z^{-k}\) through \(X^{-i}\) and \(X^i\) to cancel out the phase, and in the third line we annihilate \(Z^k\) and \(Z^{-k}\) on the virtual legs.

The two above examples also satisfy the symmetry (\ref{topo_more_sym}) with \(M=Z\). In the charge-free case, \(\phi=0\), while in the case with charge \(k\), \(\phi=2 \pi k / D\). This is the symmetry that allows the Wilson line to fluctuate freely. As another example, if we take \(\{M\}\) to be the entire MF group and set \(\phi=0\), then the resulting state becomes
\begin{align}\label{full_sym_ex}
Q
=\sum_{i} \begin{tikzpicture}[scale=0.5,baseline={([yshift=-0.65ex] current bounding box.center) }]
        \PTensor{-\singledx,0}{1.5}{.6}{\small $P_i^\dag$}{3};
        \PTensor{\singledx,0}{1.5}{.6}{\small $P_i$}{1};
        \PTensor{0,-\singledx}{1.5}{.6}{\small $P_i^\dag$}{2};
        \PTensor{0,\singledx}{1.5}{.6}{\small $P_i$}{0};
        \RedArrow{-0.8-\singledx,0}{0.2}{1}
        \RedArrow{1.0+\singledx,0}{0.2}{1}
        \RedArrow{0,1.0+\singledx}{0.2}{0}
        \RedArrow{0,-0.8-\singledx}{0.2}{0}
\end{tikzpicture}
\end{align}
In other words, we have created a tensor which takes the equal sum of all the terms in Eq.~(\ref{topo_structural_theorem_eq}).

\subsubsection{Spectrum of transfer matrix}
We now try to probe the possible topological order under the symmetries Eqs.~(\ref{topo_sym},\ref{topo_more_sym}). Unfortunately, currently there does not exist a universal method to determine the topological order in PEPS \cite{schuch2013topological}. Therefore, we seek for signatures of the topological order instead. One way to probe the topological order of PEPS is to calculate the spectrum of its transfer matrix \cite{cirac2011entanglement,schuch2013topological}. We define \(E=A^\dag A\), or pictorially
\begin{align}
    \begin{tikzpicture}[scale=0.5,baseline={([yshift=-0.65ex] current bounding box.center) }]
            \EPTensor{0,0}{1.3}{.6}{\small $E$}{0};
            \RedArrow{-0.8,-0.2}{0.2}{1}
            \RedArrow{1.0,-0.2}{0.2}{1}
            \RedArrow{-1.,0.2}{0.2}{3}
            \RedArrow{0.8,0.2}{0.2}{3}
            \RedArrow{-0.2,1.0}{0.2}{0}
            \RedArrow{-0.2,-0.8}{0.2}{0}
            \RedArrow{0.2,0.8}{0.2}{2}
            \RedArrow{0.2,-1.}{0.2}{2}
    \end{tikzpicture}
    =\begin{tikzpicture}[scale=0.5,baseline={([yshift=-0.65ex] current bounding box.center) }]
            \GPTensor{0,0}{1.3}{.6}{\small $A$}{0};
            \RedArrow{-0.8,0}{0.2}{1}
            \RedArrow{1.0,0}{0.2}{1}
            \RedArrow{0,1.0}{0.2}{0}
            \RedArrow{0,-0.8}{0.2}{0}
            \GPTensor{\singledx,\singledx}{1.3}{.6}{\small $A^\dag$}{2};
            \begin{scope}[shift={(\singledx,\singledx)}]
                \RedArrow{-1,0}{0.2}{3}
                \RedArrow{0.8,0}{0.2}{3}
                \RedArrow{0,0.8}{0.2}{2}
                \RedArrow{0,-1}{0.2}{2}
            \end{scope}
    \end{tikzpicture}
    \;.
\end{align}
The transfer matrix \(\mathbb{T}\) is defined as a chain of \(E\) contracted sideways. We take the periodic boundary condition and take the large system limit.
\begin{align}\label{transfer_matrix}
\mathbb{T}=\textcolor{red}{...}
\begin{tikzpicture}[scale=0.5,baseline={([yshift=-0.65ex] current bounding box.center) }]
\begin{scope}[shift={(0,0)}]
        \EPTensor{0,0}{1.2}{.6}{\small $E$}{0};
        \RedArrow{-0.8,-0.2}{0.2}{1}
        \RedArrow{1.0,-0.2}{0.2}{1}
        \RedArrow{-1.,0.2}{0.2}{3}
        \RedArrow{0.8,0.2}{0.2}{3}
        \RedArrow{-0.2,1.0}{0.2}{0}
        \RedArrow{-0.2,-0.8}{0.2}{0}
        \RedArrow{0.2,0.8}{0.2}{2}
        \RedArrow{0.2,-1.}{0.2}{2}
\end{scope}
\begin{scope}[shift={(\singledx,0)}]
        \EPTensor{0,0}{1.2}{.6}{\small $E$}{0};
        \RedArrow{-0.8,-0.2}{0.2}{1}
        \RedArrow{1.0,-0.2}{0.2}{1}
        \RedArrow{-1.,0.2}{0.2}{3}
        \RedArrow{0.8,0.2}{0.2}{3}
        \RedArrow{-0.2,1.0}{0.2}{0}
        \RedArrow{-0.2,-0.8}{0.2}{0}
        \RedArrow{0.2,0.8}{0.2}{2}
        \RedArrow{0.2,-1.}{0.2}{2}
\end{scope}
\begin{scope}[shift={(2*\singledx,0)}]
        \EPTensor{0,0}{1.2}{.6}{\small $E$}{0};
        \RedArrow{-0.8,-0.2}{0.2}{1}
        \RedArrow{1.0,-0.2}{0.2}{1}
        \RedArrow{-1.,0.2}{0.2}{3}
        \RedArrow{0.8,0.2}{0.2}{3}
        \RedArrow{-0.2,1.0}{0.2}{0}
        \RedArrow{-0.2,-0.8}{0.2}{0}
        \RedArrow{0.2,0.8}{0.2}{2}
        \RedArrow{0.2,-1.}{0.2}{2}
\end{scope}
\begin{scope}[shift={(3*\singledx,0)}]
        \EPTensor{0,0}{1.2}{.6}{\small $E$}{0};
        \RedArrow{-0.8,-0.2}{0.2}{1}
        \RedArrow{1.0,-0.2}{0.2}{1}
        \RedArrow{-1.,0.2}{0.2}{3}
        \RedArrow{0.8,0.2}{0.2}{3}
        \RedArrow{-0.2,1.0}{0.2}{0}
        \RedArrow{-0.2,-0.8}{0.2}{0}
        \RedArrow{0.2,0.8}{0.2}{2}
        \RedArrow{0.2,-1.}{0.2}{2}
\end{scope}
\end{tikzpicture}
\textcolor{red}{...}
\end{align}
It is shown in Ref.~\cite{schuch2013topological} that the degeneracy in the largest eigenvalue of \(\mathbb{T}\) constitutes a signature that the PEPS is in a non-trivial phase (symmetry-broken or topological). Using Theorem ~\ref{topo_structural_theorem}, we can easily evaluate the transfer matrix.
\begin{theorem}
Under the symmetry constraints Eqs.~(\ref{topo_sym}), the transfer matrix \(\mathbb{T}\) can be written as
\begin{align}\label{transfer_matrix_structure1}
    \mathbb{T} &= \sum_i t_{P_i} \begin{tikzpicture}[scale=0.5,baseline={([yshift=-0.65ex] current bounding box.center) }]
        \PTensor{0,-\singledx}{1.2}{.6}{\small $P_i^\dag$}{2};
        \PTensor{0,\singledx}{1.2}{.6}{\small $P_i$}{0};
        \PTensor{\singledx,-\singledx}{1.2}{.6}{\small $P_i^\dag$}{2};
        \PTensor{\singledx,\singledx}{1.2}{.6}{\small $P_i$}{0};
        \PTensor{2*\singledx,-\singledx}{1.2}{.6}{\small $P_i^\dag$}{2};
        \PTensor{2*\singledx,\singledx}{1.2}{.6}{\small $P_i$}{0};
        \RedArrow{0,1.0+\singledx}{0.2}{0}
        \RedArrow{0,-0.8+\singledx}{0.2}{0}
        \RedArrow{0,1.0-\singledx}{0.2}{0}
        \RedArrow{0,-0.8-\singledx}{0.2}{0}
        \RedArrow{\singledx,1.0+\singledx}{0.2}{0}
        \RedArrow{\singledx,-0.8+\singledx}{0.2}{0}
        \RedArrow{\singledx,1.0-\singledx}{0.2}{0}
        \RedArrow{\singledx,-0.8-\singledx}{0.2}{0}
        \RedArrow{2*\singledx,1.0+\singledx}{0.2}{0}
        \RedArrow{2*\singledx,-0.8+\singledx}{0.2}{0}
        \RedArrow{2*\singledx,1.0-\singledx}{0.2}{0}
        \RedArrow{2*\singledx,-0.8-\singledx}{0.2}{0}
        \begin{scope}[shift={(0,0)}]
            \draw[very thick, draw=red] (0,-\singledx+1.2) -- (0.8,-\singledx+1.2) --  (0.8,-\singledx-0.8)  -- (0.4,-\singledx-0.8) -- (0.4,-\singledx-1.2);
        \end{scope}
        \begin{scope}[shift={(\singledx,0)}]
            \draw[very thick, draw=red] (0,-\singledx+1.2) -- (0.8,-\singledx+1.2) --  (0.8,-\singledx-0.8)  -- (0.4,-\singledx-0.8) -- (0.4,-\singledx-1.2);
        \end{scope}
        \begin{scope}[shift={(2*\singledx,0)}]
            \draw[very thick, draw=red] (0,-\singledx+1.2) -- (0.8,-\singledx+1.2) --  (0.8,-\singledx-0.8)  -- (0.4,-\singledx-0.8) -- (0.4,-\singledx-1.2);
        \end{scope}
        \begin{scope}[xscale=-1,yscale=-1,shift={(0,0)}]
            \draw[very thick, draw=red] (0,-\singledx+1.2) -- (0.8,-\singledx+1.2) --  (0.8,-\singledx-0.8)  -- (0.4,-\singledx-0.8) -- (0.4,-\singledx-1.2);
        \end{scope}
        \begin{scope}[xscale=-1,yscale=-1,shift={(-\singledx,0)}]
            \draw[very thick, draw=red] (0,-\singledx+1.2) -- (0.8,-\singledx+1.2) --  (0.8,-\singledx-0.8)  -- (0.4,-\singledx-0.8) -- (0.4,-\singledx-1.2);
        \end{scope}
        \begin{scope}[xscale=-1,yscale=-1,shift={(-2*\singledx,0)}]
            \draw[very thick, draw=red] (0,-\singledx+1.2) -- (0.8,-\singledx+1.2) --  (0.8,-\singledx-0.8)  -- (0.4,-\singledx-0.8) -- (0.4,-\singledx-1.2);
        \end{scope}
        \draw (-\singledx+0.5,-\singledx) node {\textcolor{blue}{...}};
        \draw (3*\singledx-0.5,-\singledx) node {\textcolor{blue}{...}};
        \draw (-\singledx+0.5,\singledx) node {\textcolor{blue}{...}};
        \draw (3*\singledx-0.5,\singledx) node {\textcolor{blue}{...}};
\end{tikzpicture}\\ \label{transfer_matrix_structure2}
t_{P_i} &= (\sum_j \alpha_{P_j} \alpha_{P_i^\dag P_j}^*)^L
\end{align}
where \(L\) is the length of the transfer matrix.
\end{theorem}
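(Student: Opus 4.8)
The plan is to reduce the transfer matrix of $A$ to that of $Q$ and then exploit the Abelian structural form of $Q$ from \cref{topo_structural_theorem}. First I would use the polar decomposition $A=VQ$ together with the isometry property $V^\dagger V=R$ and $RQ=Q$ established in \cref{iso_structural_theorem}(i): contracting the physical index of $E=A^\dagger A$ then collapses the two copies of $V$, giving $E=Q^\dagger Q$ as an operator on the fourfold virtual space $(\mathbb{C}^D)^{\otimes 4}$. This removes $V$ (and whatever magic it carries) from the computation, so that $\mathbb{T}$ depends only on $Q$.

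Next I would substitute the closed form $Q=\sum_i \alpha_i\,\mathsf P_i$, where $\mathsf P_i$ places $P_i^\dagger, P_i, P_i^\dagger, P_i$ on the left, right, down, up virtual legs respectively. Then $E=\sum_{ij}\alpha_i^\ast\alpha_j\,\mathsf P_i^\dagger\mathsf P_j$, and on each individual leg the two Weyl--Heisenberg operators fuse, via the Abelian projective representation, into a single operator proportional to $P_{i^{-1}j}$ carrying a definite phase $\omega(\cdot,\cdot)$. In particular the doubled vertical legs of a single $E$ carry $P_i^\dagger P_j$, while the doubled horizontal legs carry $P_iP_j^\dagger$ (up to the orientation conventions of the arrows).

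The core step is the horizontal contraction around the ring. I would treat each $E$ as an operator on the horizontal bond space, contract the left/right legs cyclically, and keep the vertical legs open. The contraction of adjacent blocks is governed by Weyl--Heisenberg orthogonality $\Tr(P_a^\dagger P_b)\propto\delta_{ab}$, which enforces that the relative (ket-versus-bra) flux $i^{-1}j$ is the same across every bond, whereas the absolute ket flux $j$ at each site remains unconstrained and is summed independently. Thus only configurations with a single common relative flux $P_i$ survive, the leftover vertical legs uniformly carry $P_i$ on top and $P_i^\dagger$ on the bottom, and summing the free ket index $P_j$ at each of the $L$ sites produces the per-bond factor $\sum_j\alpha_{P_j}\alpha^\ast_{P_i^\dagger P_j}$, raised to the $L$-th power. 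This reproduces exactly \eqref{transfer_matrix_structure1}--\eqref{transfer_matrix_structure2}.

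The hard part will be the bookkeeping of the projective phases $\omega$ around the closed loop and the verification that the relative flux is globally, not merely locally, conserved. Concretely, I must check that the phases accumulated when commuting the screening operators past one another cancel once the ring is closed under periodic boundary conditions, so that no residual phase contaminates $t_{P_i}$; and I must confirm that the chained horizontal orthogonality constraints force a constant relative flux rather than allowing it to wind. Both facts follow from the Abelian property together with the completeness and orthogonality of $\mathcal B$, but they are where the care is required; the remainder is the same Pauli-pushing manipulation already used in the proof of \cref{topo_structural_theorem}.
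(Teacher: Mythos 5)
Your proposal is correct and follows essentially the same route as the paper's proof: reduce $E=A^\dagger A$ to $Q^\dagger Q$ via the shared null space of $V$ and $Q$, expand $Q$ in the MF basis using the Abelian structural theorem, and use orthogonality of the basis under the horizontal (cyclic) contraction to force a single common flux $P_i$ with per-site weight $e_{P_i}=\sum_j \alpha_{P_j}\alpha^*_{P_i^\dagger P_j}$ raised to the $L$-th power. The only cosmetic difference is bookkeeping order — the paper sums the absolute flux at the level of a single $E$ tensor before contracting, whereas you keep the double sum and let the bond-by-bond orthogonality enforce flux agreement — and the phase/winding issues you flag are handled the same way in both arguments.
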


\begin{proof}
    To begin with, we use Theorem~\ref{iso_structural_theorem} to find
    \begin{align}
        E=A^\dag A = Q^\dag V^\dag V Q = Q^\dag Q
    \end{align}
    where we use the fact that \(V\) and \(Q\) share the same null space to have \(V^\dag V Q = Q\). By using the analytic solution (\ref{topo_structural_theorem_eq}) of \(Q\), we can explicitly calculate \(E\).
    \begin{align}
    \begin{tikzpicture}[scale=0.5,baseline={([yshift=-0.65ex] current bounding box.center) }]
\begin{scope}[shift={(0,0)}]
        \EPTensor{0,0}{1.2}{.6}{\small $E$}{0};
        \RedArrow{-0.8,-0.2}{0.2}{1}
        \RedArrow{1.0,-0.2}{0.2}{1}
        \RedArrow{-1.,0.2}{0.2}{3}
        \RedArrow{0.8,0.2}{0.2}{3}
        \RedArrow{-0.2,1.0}{0.2}{0}
        \RedArrow{-0.2,-0.8}{0.2}{0}
        \RedArrow{0.2,0.8}{0.2}{2}
        \RedArrow{0.2,-1.}{0.2}{2}
\end{scope}
\end{tikzpicture}
        &=\sum_i e_{P_i} \begin{tikzpicture}[scale=0.5,baseline={([yshift=-0.65ex] current bounding box.center) }]
        \PTensor{-\singledx,0}{1.2}{.6}{\small $P_i^\dag$}{3};
        \PTensor{0,-\singledx}{1.2}{.6}{\small $P_i^\dag$}{2};
        \PTensor{0,\singledx}{1.2}{.6}{\small $P_i$}{0};
        \PTensor{\singledx,0}{1.2}{.6}{\small $P_i$}{1};
        \RedArrow{-0.8-\singledx,0}{0.2}{1}
        \RedArrow{1-\singledx,0}{0.2}{1}
        \RedArrow{-0.8+\singledx,0}{0.2}{1}
        \RedArrow{1.0+\singledx,0}{0.2}{1}
        \RedArrow{0,1.0+\singledx}{0.2}{0}
        \RedArrow{0,-0.8+\singledx}{0.2}{0}
        \RedArrow{0,1.0-\singledx}{0.2}{0}
        \RedArrow{0,-0.8-\singledx}{0.2}{0}
        \begin{scope}[xscale=1,yscale=1,shift={(0,0)}]
            \draw[very thick, draw=red] (0,-\singledx+1.2) -- (0.8,-\singledx+1.2) --  (0.8,-\singledx-0.8)  -- (0.4,-\singledx-0.8) -- (0.4,-\singledx-1.2);
        \end{scope}
        \begin{scope}[xscale=-1,yscale=-1,shift={(0,0)}]
            \draw[very thick, draw=red] (0,-\singledx+1.2) -- (0.8,-\singledx+1.2) --  (0.8,-\singledx-0.8)  -- (0.4,-\singledx-0.8) -- (0.4,-\singledx-1.2);
        \end{scope}
        \begin{scope}[xscale=1,yscale=1,shift={(0,0)}, rotate around={90:(0,0)}]
            \draw[very thick, draw=red] (0,-\singledx+1.2) -- (0.8,-\singledx+1.2) --  (0.8,-\singledx-0.8)  -- (0.4,-\singledx-0.8) -- (0.4,-\singledx-1.2);
        \end{scope}
        \begin{scope}[xscale=-1,yscale=-1,shift={(0,0)}, rotate around={90:(0,0)}]
            \draw[very thick, draw=red] (0,-\singledx+1.2) -- (0.8,-\singledx+1.2) --  (0.8,-\singledx-0.8)  -- (0.4,-\singledx-0.8) -- (0.4,-\singledx-1.2);
        \end{scope}
\end{tikzpicture}\\
e_{P_i} &= \sum_j \alpha_{P_j} \alpha_{P_i^\dag P_j}^*
    \end{align}
    We now multiply two \(E\) sideways.
    \begin{align}
\begin{tikzpicture}[scale=0.4,baseline={([yshift=-0.65ex] current bounding box.center) }]
\begin{scope}[shift={(0,0)}]
        \EPTensor{0,0}{1.2}{.6}{\small $E$}{0};
        \RedArrow{-0.8,-0.2}{0.2}{1}
        \RedArrow{1.0,-0.2}{0.2}{1}
        \RedArrow{-1.,0.2}{0.2}{3}
        \RedArrow{0.8,0.2}{0.2}{3}
        \RedArrow{-0.2,1.0}{0.2}{0}
        \RedArrow{-0.2,-0.8}{0.2}{0}
        \RedArrow{0.2,0.8}{0.2}{2}
        \RedArrow{0.2,-1.}{0.2}{2}
\end{scope}
\begin{scope}[shift={(\singledx,0)}]
        \EPTensor{0,0}{1.2}{.6}{\small $E$}{0};
        \RedArrow{-0.8,-0.2}{0.2}{1}
        \RedArrow{1.0,-0.2}{0.2}{1}
        \RedArrow{-1.,0.2}{0.2}{3}
        \RedArrow{0.8,0.2}{0.2}{3}
        \RedArrow{-0.2,1.0}{0.2}{0}
        \RedArrow{-0.2,-0.8}{0.2}{0}
        \RedArrow{0.2,0.8}{0.2}{2}
        \RedArrow{0.2,-1.}{0.2}{2}
\end{scope}
\end{tikzpicture}
= \sum_{i,j} e_{P_i} e_{P_j} \begin{tikzpicture}[scale=0.4,baseline={([yshift=-0.65ex] current bounding box.center) }]
        \begin{scope}[shift={(0,0)}]
        \PTensor{-\singledx,0}{1.2}{.6}{\small $P_i^\dag$}{3};
        \PTensor{0,-\singledx}{1.2}{.6}{\small $P_i^\dag$}{2};
        \PTensor{0,\singledx}{1.2}{.6}{\small $P_i$}{0};
        \PTensor{\singledx,0}{1.2}{.6}{\small $P_i$}{1};
        \RedArrow{-0.8-\singledx,0}{0.2}{1}
        \RedArrow{1-\singledx,0}{0.2}{1}
        \RedArrow{-0.8+\singledx,0}{0.2}{1}
        \RedArrow{1.0+\singledx,0}{0.2}{1}
        \RedArrow{0,1.0+\singledx}{0.2}{0}
        \RedArrow{0,-0.8+\singledx}{0.2}{0}
        \RedArrow{0,1.0-\singledx}{0.2}{0}
        \RedArrow{0,-0.8-\singledx}{0.2}{0}
        \end{scope}
        \begin{scope}[shift={(3*\singledx,0)}]
        \PTensor{-\singledx,0}{1.2}{.6}{\small $P_j^\dag$}{3};
        \PTensor{0,-\singledx}{1.2}{.6}{\small $P_j^\dag$}{2};
        \PTensor{0,\singledx}{1.2}{.6}{\small $P_j$}{0};
        \PTensor{\singledx,0}{1.2}{.6}{\small $P_j$}{1};
        \RedArrow{-0.8-\singledx,0}{0.2}{1}
        \RedArrow{1-\singledx,0}{0.2}{1}
        \RedArrow{-0.8+\singledx,0}{0.2}{1}
        \RedArrow{1.0+\singledx,0}{0.2}{1}
        \RedArrow{0,1.0+\singledx}{0.2}{0}
        \RedArrow{0,-0.8+\singledx}{0.2}{0}
        \RedArrow{0,1.0-\singledx}{0.2}{0}
        \RedArrow{0,-0.8-\singledx}{0.2}{0}
        \end{scope}
        \draw[very thick, draw=red] (\singledx-1.2,0) -- (\singledx-1.2,0.8) -- (2*\singledx+1.2,0.8) -- (2*\singledx+1.2,0);
        \begin{scope}[xscale=1,yscale=1,shift={(0,0)}]
            \draw[very thick, draw=red] (0,-\singledx+1.2) -- (0.8,-\singledx+1.2) --  (0.8,-\singledx-0.8)  -- (0.4,-\singledx-0.8) -- (0.4,-\singledx-1.2);
        \end{scope}
        \begin{scope}[xscale=-1,yscale=-1,shift={(0,0)}]
            \draw[very thick, draw=red] (0,-\singledx+1.2) -- (0.8,-\singledx+1.2) --  (0.8,-\singledx-0.8)  -- (0.4,-\singledx-0.8) -- (0.4,-\singledx-1.2);
        \end{scope}
        \begin{scope}[xscale=1,yscale=1,shift={(3*\singledx,0)}]
            \draw[very thick, draw=red] (0,-\singledx+1.2) -- (0.8,-\singledx+1.2) --  (0.8,-\singledx-0.8)  -- (0.4,-\singledx-0.8) -- (0.4,-\singledx-1.2);
        \end{scope}
        \begin{scope}[xscale=-1,yscale=-1,shift={(-3*\singledx,0)}]
            \draw[very thick, draw=red] (0,-\singledx+1.2) -- (0.8,-\singledx+1.2) --  (0.8,-\singledx-0.8)  -- (0.4,-\singledx-0.8) -- (0.4,-\singledx-1.2);
        \end{scope}
        \begin{scope}[xscale=1,yscale=1,shift={(3*\singledx,0)}, rotate around={90:(0,0)}]
            \draw[very thick, draw=red] (0,-\singledx+1.2) -- (0.8,-\singledx+1.2) --  (0.8,-\singledx-0.8)  -- (0.4,-\singledx-0.8) -- (0.4,-\singledx-1.2);
        \end{scope}
        \begin{scope}[xscale=-1,yscale=-1,shift={(0,0)}, rotate around={90:(0,0)}]
            \draw[very thick, draw=red] (0,-\singledx+1.2) -- (0.8,-\singledx+1.2) --  (0.8,-\singledx-0.8)  -- (0.4,-\singledx-0.8) -- (0.4,-\singledx-1.2);
        \end{scope}
\end{tikzpicture}
\end{align}
Here, the loop is non-zero only when \(i=j\) by the orthogonality of the MF basis, so the only nontrivial terms are those with \(i=j\) with coefficients \(e_{P_i}^2=(\sum_j \alpha_{P_j} \alpha_{P_i^\dag P_j}^*)^2\). Similarly, we can keep multiplying \(E\) from the right and only the on-diagonal terms will survive with coefficients \(e_{P_i}^L\). Lastly, taking the periodic boundary condition annihilates the two operators at the left and right.
\end{proof}

With the analytic expression of the transfer matrix, we now include the additional symmetry constraint (\ref{topo_more_sym}) which directly implies degeneracy in the largest eigenvalue, when we choose some appropriate \(L\) to annihilate the phase \(\phi\).
\begin{corollary}
In addition to Eqs.~(\ref{topo_sym}), under Eqs.~(\ref{topo_more_sym}) with a subgroup \(\{M\}\), the transfer matrix \(\mathbb T\) with length \(L\) being an integer multiple of \(m=|\{M\}|\) has at least an \(m\)-fold degeneracy in the largest eigenvalue.
\end{corollary}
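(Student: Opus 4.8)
The plan is to leverage the explicit form of the transfer matrix from the preceding theorem, $\mathbb{T}=\sum_i t_{P_i}\mathcal{O}_{P_i}$ with $t_{P_i}=\bigl(\sum_j \alpha_{P_j}\alpha^*_{P_i^\dag P_j}\bigr)^L=e_{P_i}^L$, and to convert the extra symmetry into an exact degeneracy of the leading eigenvalue. First I would record two structural facts about the $\mathcal{O}_{P_i}$. Because each $\mathcal{O}_{P_i}$ carries $P_i$ on the ket layer and $P_i^\dag$ on the bra layer of every column, the projective phases of the two layers cancel around the loop, so $\mathcal{O}_{P_i}\mathcal{O}_{P_j}=\mathcal{O}_{P_iP_j}$ exactly; the $\{\mathcal{O}_{P_i}\}$ thus form a genuine (non-projective), mutually commuting representation of the abelian MF group $G$ on the transfer space. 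Consequently $\mathbb{T}$ is normal and block-diagonal over the character sectors $\chi\in\hat G$, acting as the scalar $\lambda_\chi=\sum_i t_{P_i}\,\chi(P_i)$ on the $\chi$-isotypic subspace.

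Next I would feed in the additional symmetry. By Theorem~\ref{topo_structural_theorem}, Eqs.~(\ref{topo_more_sym}) force the covariance $\alpha_{P_i}=e^{i\phi}\alpha_{MP_i}$ for every $M$ in the subgroup, with $n\phi\equiv 0 \bmod 2\pi$. Substituting this into $e_{P_i}=\sum_j\alpha_{P_j}\alpha^*_{P_i^\dag P_j}$ and reindexing the autocorrelation yields $e_{P_iM}=e^{-i\phi}e_{P_i}$, hence $t_{P_iM}=e^{-iL\phi}t_{P_i}$ by Eq.~(\ref{transfer_matrix_structure2}). When $L$ is an integer multiple of $m=|\{M\}|$ the phase $e^{-iL\phi}$ trivializes (using that the order $n$ of $M$ divides $m$ and $n\phi\equiv 0$), so the coefficient function $t$ is invariant under translation by the entire order-$m$ subgroup.

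The heart of the argument is to turn this invariance into degeneracy by producing symmetry operators that commute with $\mathbb{T}$ yet move the sector label. For each $M$ in the subgroup I would introduce the sector-shift operator $G_M$ acting on the transfer space by left group-translation by $M$; conjugation by $G_M$ rescales $\mathcal{O}_{P_i}\mapsto \beta(M,P_i)\,\mathcal{O}_{P_i}$, where $\beta(M,P_i)$ is the Weyl commutation phase. Using the $t$-invariance just established (reindexing $P_i\mapsto P_iM$) together with $\beta(M,M)=1$, one checks $G_M\,\mathbb{T}\,G_M^{-1}=\mathbb{T}$, so $G_M$ is a genuine symmetry once $m\mid L$. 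At the same time $G_M$ carries the $\chi$-sector to the $\chi\,\eta_M$-sector, where $\eta_M=\beta(M,\cdot)\in\hat G$; since $M\mapsto\eta_M$ is a bicharacter, $\{\eta_M\}$ is a subgroup of $\hat G$, of order $m$ by non-degeneracy of the Weyl pairing. A commuting invertible operator preserves eigenspaces, so $\lambda_\chi=\lambda_{\chi\eta_M}$ for all $M$; hence every eigenvalue, and in particular the largest, is shared by the $m$ distinct sectors $\{\chi\,\eta_M\}$, giving the claimed $\ge m$-fold degeneracy.

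The main obstacle I anticipate is exactly the commutation $[G_M,\mathbb{T}]=0$: it is not automatic for a generic sector shift and relies on the $\langle M\rangle$-invariance of $t$ surviving the $L$-th power, which is what forces $m\mid L$ in order to cancel the accumulated phase $e^{-iL\phi}$. I would therefore spend the most care verifying that $\eta_M=\beta(M,\cdot)$ is trivial on the support of $t$ (equivalently, lies in the annihilator of that support), since this is precisely what makes $G_M$ a symmetry rather than merely an intertwiner between $\mathbb{T}$ and a phase-twisted copy $\sum_i t_{P_i}\beta(M,P_i)\mathcal{O}_{P_i}$. A secondary point to confirm is that the injectivity of $M\mapsto\eta_M$ yields $m$ genuinely distinct sectors, which follows from the isotropy of the chosen subgroup under the Weyl pairing; I would sanity-check the whole chain against the $\mathbb{Z}_D$ quantum-double example of Eq.~(\ref{toric_code_ex}), where the leading sectors form a single $\langle M\rangle^\perp$-orbit of size $m=D$.
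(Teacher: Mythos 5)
Your proposal rests on a structural misreading of Eq.~(\ref{transfer_matrix_structure1}), and the gap is fatal rather than cosmetic. The objects multiplying the coefficients $t_{P_i}$ there are not the unitaries $(P_i\otimes\overline{P_i})^{\otimes L}$ of a group representation: each column contributes $\sum_{abcd}(P_i)_{ab}(P_i^{\dagger})_{cd}\,\ket{c}\ket{d}\bra{a}\bra{b}$, which is the \emph{rank-one} operator $|P_i^{\dagger}\rangle\!\rangle\langle\!\langle\overline{P_i}|$ on the doubled leg. Hence $\mathbb{T}$ is a sum of $|G|$ rank-one terms built on a biorthogonal family (this is exactly the loop orthogonality $\Tr(P_i^{\dagger}P_j)\propto\delta_{ij}$ invoked in the preceding theorem), so it has rank at most $|G|$ and its nonzero eigenvalues are the numbers $t_{P_i}=e_{P_i}^L$ themselves. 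In particular the product law you need fails --- one gets $\mathcal{O}_{P_i}\mathcal{O}_{P_j}\propto\delta_{ij}\mathcal{O}_{P_i}$ rather than $\mathcal{O}_{P_iP_j}$ --- there is no character-sector block decomposition, and your formula $\lambda_\chi=\sum_i t_{P_i}\chi(P_i)$ is a Fourier transform of the correct spectrum, not the spectrum. The error is not harmless: take the paper's interpolation between the toric code (\ref{toric_code_ex}) and the equal-sum state (\ref{full_sym_ex}), with $e_I=e_X=2+2\alpha^2$, $e_Y=e_Z=4\alpha$ and symmetry subgroup $\{I,X\}$, $m=2$. Your formula gives the values $2(2+2\alpha^2)^L\pm 2(4\alpha)^L$ and $0,0$, whose largest entry is attained in a single sector for $0<\alpha\le 1$, contradicting the very degeneracy the corollary asserts; the true spectrum $\{(2+2\alpha^2)^L,(2+2\alpha^2)^L,(4\alpha)^L,(4\alpha)^L,0,\dots\}$ is two-fold degenerate at the top, as claimed.

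What does survive is your phase bookkeeping: the covariance $\alpha_{P_i}=e^{i\phi}\alpha_{MP_i}$ from Eq.~(\ref{topo_more_sym}) indeed forces $t_{P_iM}$ to equal $t_{P_i}$ up to a power of $e^{i\phi}$, and $m\mid L$ (with the order $n$ of $M$ dividing $m$ and $n\phi\equiv 0$) kills that phase, so $t$ is constant on right cosets of $\{M\}$. Applied to the \emph{correct} spectrum, this is already essentially the paper's proof: the $m$ coefficients $t_M=e_M^L=(\sum_j|\alpha_{P_j}|^2)^L$ for $M\in\{M\}$ coincide, giving $m$ equal eigenvalues, and the Cauchy--Schwarz bound $|e_{P_i}|=|\sum_j\alpha_{P_j}\alpha^*_{P_i^{\dagger}P_j}|\le\sum_j|\alpha_{P_j}|^2$ shows this common positive value dominates every other eigenvalue. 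Note that some form of this last domination step is still required (your sector-shift symmetry argument was implicitly replacing it); but as written, the operators $G_M$, the bicharacter $\eta_M$, and the annihilator discussion all act on a decomposition that the transfer matrix simply does not possess.
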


\begin{proof}
Eq.~(\ref{topo_more_sym}) implies that 
\begin{align}
    e_{M} = \sum_j |\alpha_{P_j}|^2 e^{i \phi}, \, \forall M
\end{align}
Now we rewrite Eq.~(\ref{transfer_matrix_structure1}) in the following way.
\begin{align}
    \mathbb T = \sum_i e_{P_i}^L [(P_i)_{ab}(P_i^\dag)_{cd}\ket{c}\ket{d}\bra{a}\bra{b}]^{\otimes L}
\end{align}
In this way, it becomes appearant that \(\mathbb T\) is already diagonalized, so \(e_{P_i}^L\) are the non-trivial eigenvalues. By choosing \(L\) to be integer multiples of \(m\) (in other word we annihilate the charges associated with the tensor),
\begin{align}
    e_{M}^L = (\sum_j |\alpha_{P_j}|^2)^L, \, \forall M
\end{align}
Since there are \(m\) elements in total, the eigenvalues \(e_{M}\) is \(m\)-fold degenerate, so it remains to show that \(e_{M}\) is the biggest eigenvalue. This is shown via the Cauchy-Schwartz inequality.
\begin{align}
e_{P_i} &= \sum_j \alpha_{P_j} \alpha_{P_i^\dag P_j}^* \\
&\le (\sum_j |\alpha_{P_j}|^2)^{\frac{1}{2}}(\sum_j |\alpha_{P_i^\dag P_j}|^2)^{\frac{1}{2}}\\
&= \sum_j |\alpha_{P_j}|^2 = e_{M}
\end{align}
\end{proof}

As an example, we consider the qubit Pauli group as the MF basis and observe the following state.
\begin{align}
Q=   \begin{tikzpicture}[scale=0.5,baseline={([yshift=-0.65ex] current bounding box.center) }]
    \PTensor{-\singledx,0}{1.5}{.6}{\small $I$}{3};
    \PTensor{\singledx,0}{1.5}{.6}{\small $I$}{1};
    \PTensor{0,-\singledx}{1.5}{.6}{\small $I$}{2};
    \PTensor{0,\singledx}{1.5}{.6}{\small $I$}{0};
    \RedArrow{-0.8-\singledx,0}{0.2}{1}
    \RedArrow{1.0+\singledx,0}{0.2}{1}
    \RedArrow{0,1.0+\singledx}{0.2}{0}
    \RedArrow{0,-0.8-\singledx}{0.2}{0}
    \end{tikzpicture}
    +\begin{tikzpicture}[scale=0.5,baseline={([yshift=-0.65ex] current bounding box.center) }]
    \PTensor{-\singledx,0}{1.5}{.6}{\small $X$}{3};
    \PTensor{\singledx,0}{1.5}{.6}{\small $X$}{1};
    \PTensor{0,-\singledx}{1.5}{.6}{\small $X$}{2};
    \PTensor{0,\singledx}{1.5}{.6}{\small $X$}{0};
    \RedArrow{-0.8-\singledx,0}{0.2}{1}
    \RedArrow{1.0+\singledx,0}{0.2}{1}
    \RedArrow{0,1.0+\singledx}{0.2}{0}
    \RedArrow{0,-0.8-\singledx}{0.2}{0}
    \end{tikzpicture} \nonumber\\
    +\alpha \big(\, \begin{tikzpicture}[scale=0.5,baseline={([yshift=-0.65ex] current bounding box.center) }]
    \PTensor{-\singledx,0}{1.5}{.6}{\small $Y$}{3};
    \PTensor{\singledx,0}{1.5}{.6}{\small $Y$}{1};
    \PTensor{0,-\singledx}{1.5}{.6}{\small $Y$}{2};
    \PTensor{0,\singledx}{1.5}{.6}{\small $Y$}{0};
    \RedArrow{-0.8-\singledx,0}{0.2}{1}
    \RedArrow{1.0+\singledx,0}{0.2}{1}
    \RedArrow{0,1.0+\singledx}{0.2}{0}
    \RedArrow{0,-0.8-\singledx}{0.2}{0}
    \end{tikzpicture}
    +\begin{tikzpicture}[scale=0.5,baseline={([yshift=-0.65ex] current bounding box.center) }]
    \PTensor{-\singledx,0}{1.5}{.6}{\small $Z$}{3};
    \PTensor{\singledx,0}{1.5}{.6}{\small $Z$}{1};
    \PTensor{0,-\singledx}{1.5}{.6}{\small $Z$}{2};
    \PTensor{0,\singledx}{1.5}{.6}{\small $Z$}{0};
    \RedArrow{-0.8-\singledx,0}{0.2}{1}
    \RedArrow{1.0+\singledx,0}{0.2}{1}
    \RedArrow{0,1.0+\singledx}{0.2}{0}
    \RedArrow{0,-0.8-\singledx}{0.2}{0}
    \end{tikzpicture} \,\big)
\end{align}
The above state can be interpreted as an interpolation between the toric code state (\ref{toric_code_ex}) when \(\alpha=0\) and the equal sum state (\ref{full_sym_ex}) when \(\alpha=1\). The entire family of states possesses the symmetry of type (\ref{topo_more_sym}) with \(M=X\), so one would expect a two-fold degeneracy in the largest eigenvalue. Indeed, we have \(e_{I}=e_{X}=2+2\alpha^2\) and \(e_{Y}=e_{Z}=4\alpha\). When \(\alpha=1\), we have an additional symmetry of the type (\ref{topo_more_sym}) with \(M=Z\), and the degeneracy is promoted to four-fold. 
%
%
In summary, this is a simple example of a continuous family of states, with signatures of topological order, preparable with MF in constant depth.

The above discussion characterizes the degeneracy in the spectrum of the transfer matrix when the MF PEPS satisfies the topological-type symmetries Eqs.~(\ref{topo_sym}) and (\ref{topo_more_sym}). Degeneracy in the spectrum does not directly imply topological order, since a symmetry-broken phase can also possess such degeneracy. Nevertheless, our results provide a systematic characterization of MF PEPS with topological-type symmetries and supports the existence of the topological order.

\section{Implementing Unitaries with MF}\label{section_mpu}

Here we extend the idea of MF from state preparations to applying matrix-product operators (MPO) which are unitary or, more generally, isometries. Many, but not all, of the existing constructions are based on Clifford teleportation~\cite{gottesman1999demonstrating,piroli2021quantum,aasen2022adiabatic,tantivasadakarn2023hierarchy,buhrman2023state,lootens2023low,piroli2024approximating,sukeno2024quantum} which has a discrete structure. Here we will show that the considered protocol, although not much stronger than Clifford teleportation, allows for continuous families.


We consider the following protocol to implement a MPO via MF and derive a structural theorem.
A MPO can be represented as the following operator
\begin{align}
    \textcolor{red}{...}
    \begin{tikzpicture}[scale=0.5,baseline={([yshift=-0.65ex] current bounding box.center) }]
        \GTensor{0,0}{1.}{.6}{\small $O_1$}{5};
        \GTensor{\singledx,0}{1.}{.6}{\small $O_2$}{5};
        \GTensor{2*\singledx,0}{1.}{.6}{\small $O_3$}{5};
\end{tikzpicture}
\textcolor{red}{...}
\;,
\end{align}
where the boundary conditions will be specified later. In general, MPO can create correlations at arbitrary distances, so they cannot be implemented with constant-depth unitaries. This justifies the use of MF to implement the MPO. The protocol begins with the application of individual MPO tensors to the (arbitrary) input state.
\begin{align}
    \begin{tikzpicture}[scale=0.5,baseline={([yshift=-0.65ex] current bounding box.center) }]
        \GTensor{0,0}{1.2}{.6}{\small $O_1$}{5};
        \GTensor{2*\singledx,0}{1.2}{.6}{\small $O_2$}{5};
        \GTensor{4*\singledx,0}{1.2}{.6}{\small $O_3$}{5};
        \begin{scope}[shift={(0,0)}]
        \draw[very thick, draw=red] (-1.2,0) -- (-1.2,1.2);
        \draw[very thick, draw=red] (1.2,0) -- (1.2,1.2);
        \end{scope}
        \begin{scope}[shift={(2*\singledx,0)}]
        \draw[very thick, draw=red] (-1.2,0) -- (-1.2,1.2);
        \draw[very thick, draw=red] (1.2,0) -- (1.2,1.2);
        \end{scope}
        \begin{scope}[shift={(4*\singledx,0)}]
        \draw[very thick, draw=red] (-1.2,0) -- (-1.2,1.2);
        \draw[very thick, draw=red] (1.2,0) -- (1.2,1.2);
        \end{scope}
        \draw[thick, fill=white, rounded corners=2pt] (-0.6,-\singledx-0.6) rectangle (4*\singledx+0.6,-\singledx+0.6);
        \draw (2*\singledx,-\singledx) node {\small $\ket{\phi}$};
\end{tikzpicture}
\end{align}
For this to be a physical process without post-selection, we require each MPO tensor \(O_i\) to be proportional to an isometry from bottom to top
\begin{align}\label{mpu_orthogonality}
\begin{tikzpicture}[scale=0.5,baseline={([yshift=-0.65ex] current bounding box.center) }]
        \GTensor{0,0}{1.}{.6}{\small $O$}{5};
        \draw (0,-1.3) node {\small $a$};
        \GTensor{0,\singledx}{1.}{.6}{\small $O^\dag$}{5};
        \draw (0,\singledx+1.3) node {\small $b$};
        \begin{scope}[shift={(0,0)}]
        \draw[very thick, draw=red] (-1.,0) -- (-1.,\singledx);
        \draw[very thick, draw=red] (1.,0) -- (1.,\singledx);
        \end{scope}
\end{tikzpicture}
= D \, \delta_{ab} \;,
\end{align}
where \(D\) is the virtual bond dimension; the reason for this choice will become clear later. Next, we glue the adjacent virtual legs using MF:
\begin{align}
    \begin{tikzpicture}[scale=0.5,baseline={([yshift=-0.65ex] current bounding box.center) }]
        \GTensor{0,0}{1.2}{.6}{\small $O_1$}{5};
        \GTensor{2*\singledx,0}{1.2}{.6}{\small $O_2$}{5};
        \GTensor{4*\singledx,0}{1.2}{.6}{\small $O_3$}{5};
        \DTensor{1*\singledx,0}{1.2}{.6}{\small $P_1$}{0};
        \DTensor{3*\singledx,0}{1.2}{.6}{\small $P_2$}{0};
        \draw[thick, fill=white, rounded corners=2pt] (-0.6,-\singledx-0.6) rectangle (4*\singledx+0.6,-\singledx+0.6);
        \draw (2*\singledx,-\singledx) node {\small $\ket{\phi}$};
        \begin{scope}[shift={(0,0)}]
        \draw[very thick, draw=red] (-1.2,0) -- (-1.2,1.2);
        \end{scope}
        \begin{scope}[shift={(4*\singledx,0)}]
        \draw[very thick, draw=red] (1.2,0) -- (1.2,1.2);
        \end{scope}
\end{tikzpicture}
\end{align}
After that, one would like to push the defect operators through by applying local error corrections \(U_{P_i}\) and annihilate them at the open boundary. Alternatively, one can impose a periodic boundary condition and try to push all the defect operators to the same site. This results in a probabilistic protocol but still with an \(O(1)\) success rate. MF imposes the following symmetry to the MPO tensor.
\begin{align}\label{mpo_post_sel_sym}
\begin{tikzpicture}[scale=0.5,baseline={([yshift=-2.6ex] current bounding box.center) }]
        \DTensor{-\singledx,0}{1.}{.6}{\small $P_i$}{0};
        \GTensor{0,0}{1.}{.6}{\small $O$}{5};
        \UTensor{0,\singledx}{1.}{.6}{\small $U_{P_i}$}{3};
        \draw (0,-1.3) node {\small $a$};
\end{tikzpicture}
=\begin{tikzpicture}[scale=0.5,baseline={([yshift=0.4ex] current bounding box.center) }]
        \DTensor{\singledx,0}{1.}{.6}{\small $P'_i$}{0};
        \GTensor{0,0}{1.}{.6}{\small $O$}{5};
        \draw (0,-1.3) node {\small $a$};
\end{tikzpicture}
\; , \quad  \forall P_i, a.
\end{align}
%
The MPO can be understood as a collection of MPS labelled by \(a\) satisfying the same symmetry, connecting to our previous results on MF MPS.
%
We now show that this results to certain orthogonality conditions.
\begin{lemma}\label{mpo_post_sel_lemma}
Under Eqs.~(\ref{mpu_orthogonality},\ref{mpo_post_sel_sym}), 
%
\begin{align}
    \begin{tikzpicture}[scale=0.5,baseline={([yshift=.65ex] current bounding box.center) }]
        \GTensor{0,0}{1.}{.6}{\small $O$}{5};
        \draw[very thick,black] (0,1) -- (1,1);
        \draw (0,-1.2) node {\small $a$};
\end{tikzpicture}
\; =\;\;
    \begin{tikzpicture}[scale=0.5,baseline={([yshift=-3] current bounding box.center) }]
        \Unitary{0,0}{1}{.6}{\small $V_a$}{1};
\end{tikzpicture}
\; , \quad \forall a
\end{align}
where \(V_a\) are isometries possessing the MF symmetries, i.e.,
\begin{align}\label{mpo_iso_sym}
\begin{tikzpicture}[scale=0.5,baseline={([yshift=-0.65ex] current bounding box.center) }]
        \Unitary{0,0}{1}{.6}{\small $V_a$}{1};
        \DTensor{-\singledx,0}{1.}{.6}{\small $P_i$}{0};
\end{tikzpicture}
=\begin{tikzpicture}[scale=0.5,baseline={([yshift=-0.65ex] current bounding box.center) }]
        \Unitary{0,0}{1}{.6}{\small $V_a$}{1};
        \DTensor{\singledx,0}{1.}{.5}{\small $P'_i$}{0};
        \UTensor{\singledx,1.2}{1.}{.5}{\small $U_{P_i}^\dag$}{2};
\end{tikzpicture}
\; , \quad \forall P_i, V_a \;.
\end{align}
In addition, different $V_a$ satisfy the following orthogonality relation.
\begin{align}\label{mpu_iso_orthogonality}
    \begin{tikzpicture}[scale=0.5,baseline={([yshift=-1.65ex] current bounding box.center) }]
        \Unitary{0,0}{1}{.6}{\small $V_a$}{1};
        \begin{scope}[shift={(\singledx,0)},xscale=-1,yscale=1]
            \Unitary{0,0}{1}{.6}{\small $V_b^\dag$}{1};
        \end{scope}
\end{tikzpicture}
\; = \; \delta_{ab} \;
\begin{tikzpicture}[scale=0.5,baseline={([yshift=0.65ex] current bounding box.center) }]
        \draw[very thick, draw=red](-1.2,0)--(1.2,0);
\end{tikzpicture}
\end{align}
\end{lemma}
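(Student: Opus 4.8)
The plan is to reduce all three claims to a single application of Schur's lemma to the operators $V_b^\dagger V_a$ acting on the $D$-dimensional left-virtual space. First I would fix the bottom physical index $a$ and read the hypothesis as a statement about the slice $V_a:\mathbb{C}^D\to\mathbb{C}^{dD}$ sending the left virtual leg into the top-physical $\otimes$ right-virtual output. The symmetry Eq.~(\ref{mpo_iso_sym}) is then merely Eq.~(\ref{mpo_post_sel_sym}) rearranged: moving the physical correction $U_{P_i}$ to the other side as $U_{P_i}^\dagger$ gives
\begin{equation}
V_a P_i = (U_{P_i}^\dagger \otimes P_i') V_a , \qquad \forall P_i, a .
\end{equation}
Since $U_{P_i}$ and $P_i'$ depend only on $P_i$ and not on $a$, this establishes the second claim immediately and, crucially, supplies the \emph{same} intertwining relation for every slice.

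Next I would record the consequence of the deterministic (no post-selection) condition Eq.~(\ref{mpu_orthogonality}). Contracting $O$ with $O^\dagger$ over the top-physical and both virtual legs is exactly the trace of $V_b^\dagger V_a$ over the output space, so Eq.~(\ref{mpu_orthogonality}) reads $\mathrm{Tr}(V_b^\dagger V_a) = D\,\delta_{ab}$. On its own this only fixes an overall normalization; the symmetry is what upgrades it to the full orthogonality.

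The heart of the argument is to show $V_b^\dagger V_a$ is proportional to the identity. Using the intertwining relation for both $a$ and $b$ together with the unitarity of $U_{P_i}$ and $P_i'$,
\begin{align}
P_i^\dagger (V_b^\dagger V_a) P_i &= (V_b P_i)^\dagger (V_a P_i) \nonumber \\
&= V_b^\dagger (U_{P_i} \otimes (P_i')^\dagger)(U_{P_i}^\dagger \otimes P_i') V_a \nonumber \\
&= V_b^\dagger V_a ,
\end{align}
so $V_b^\dagger V_a$ commutes with every $P_i$. Because $\{P_i\}$ is irreducible (from completeness of the MF basis, as noted in the setting), Schur's lemma forces $V_b^\dagger V_a = c_{ab}\, I_D$. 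Taking the trace and comparing with $\mathrm{Tr}(V_b^\dagger V_a) = D\,\delta_{ab}$ fixes $c_{ab}=\delta_{ab}$, which is precisely Eq.~(\ref{mpu_iso_orthogonality}); the diagonal case $a=b$ gives $V_a^\dagger V_a = I_D$, so each $V_a$ is an isometry, establishing the first claim.

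I do not expect a serious obstacle, since the structure parallels \cref{lemma_cf}. The one point requiring care is the $a$-independence of the corrections $U_{P_i},P_i'$ in Eq.~(\ref{mpo_post_sel_sym}): it is exactly this uniformity across slices that makes the off-diagonal $V_b^\dagger V_a$ (with $a\neq b$) a genuine intertwiner, so that Schur's lemma yields orthogonality rather than merely isometry of each slice separately. I would also flag the minor check that the cancellation $(U_{P_i}\otimes (P_i')^\dagger)(U_{P_i}^\dagger\otimes P_i')=I$ uses unitarity of both $U_{P_i}$ and $P_i'$, which holds by the standing assumptions on the MF basis and the corrections.
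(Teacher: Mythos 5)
Your proof is correct and follows essentially the same route as the paper's: both show that the contraction $V_b^\dagger V_a$ commutes with every $P_i$ (using the $a$-independence of $U_{P_i},P_i'$ and their unitarity), invoke Schur's lemma via irreducibility of the MF basis to conclude $V_b^\dagger V_a = c_{ab}\,I_D$, and fix $c_{ab}=\delta_{ab}$ from the trace condition Eq.~(\ref{mpu_orthogonality}). The only cosmetic difference is that the paper obtains the isometry property by citing Lemma~\ref{lemma_cf} and then fixes the constant, whereas you extract it as the diagonal case $a=b$ of the orthogonality relation — the underlying mechanism is identical.
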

\begin{proof}
    The statement that \(V_a\) are isometries possessing the Clifford-like symmetries follows trivially from the MPS result Lemma \ref{lemma_cf}. To show the orthogonality relation and to fix the constant multiplier, we apply the MF symmetries to the left hand side of Eq.~\eqref{mpu_iso_orthogonality}.
    \begin{align}
    \begin{tikzpicture}[scale=0.5,baseline={([yshift=-0.65ex] current bounding box.center) }]
        \Unitary{0,0}{1}{.6}{\small $V_a$}{1};
        \begin{scope}[shift={(\singledx,0)},xscale=-1,yscale=1]
            \Unitary{0,0}{1}{.6}{\small $V_b^\dag$}{1};
        \end{scope}
\end{tikzpicture}
= \begin{tikzpicture}[scale=0.5,baseline={([yshift=-0.65ex] current bounding box.center) }]
        \DTensor{-\singledx,0}{1.2}{.6}{\small $P_i$}{0};
        \DTensor{2*\singledx,0}{1.2}{.6}{\small $P_i^\dag$}{0};
        \Unitary{0,0}{1}{.6}{\small $V_a$}{1};
        \begin{scope}[shift={(\singledx,0)},xscale=-1,yscale=1]
            \Unitary{0,0}{1}{.6}{\small $V_b^\dag$}{1};
        \end{scope}
\end{tikzpicture}
\; ,\; \forall P_i .
\end{align}
Therefore, by Schur's lemma, \(V_a V_b^\dag = k_{ab} I\) where \(k_{ab}\) is a constant multiplier depending on \(a\) and \(b\). Additionally, the orthogonality requirement (\ref{mpu_orthogonality}) demands that
\begin{align}
    \frac{1}{D}\;
    \begin{tikzpicture}[scale=0.5,baseline={([yshift=-0.65ex] current bounding box.center) }]
        \Unitary{0,0}{1.2}{.6}{\small $V_a$}{1};
        \begin{scope}[shift={(\singledx,0)},xscale=-1,yscale=1]
            \Unitary{0,0}{1.2}{.6}{\small $V_b^\dag$}{1};
        \end{scope}
        \draw[very thick, draw=red](-1.2,0)--(-1.2,-1.2)--(\singledx+1.2,-1.2)--(\singledx+1.2,0);
\end{tikzpicture}
\;=\;\frac{k_{ab}}{D} \; \begin{tikzpicture}[scale=0.5,baseline={([yshift=-0.65ex] current bounding box.center) }]
        \draw[very thick, draw=red](-1.2,0)--(-1.2,-1.2)--(\singledx+1.2,-1.2)--(\singledx+1.2,0)--(-1.2,0);
\end{tikzpicture}
\;=\;\delta_{ab}
\end{align}
The above equation restricts \(k_{ab} = \delta_{ab}\), and Eqs. (\ref{mpu_iso_orthogonality}) naturally follow.
\end{proof}

With the above lemma decomposing the MPO into orthogonal isometries, we are now ready to state our structural theorem of MPO which connects it to the Clifford-like teleportation.
\begin{theorem}
Under Eqs.~(\ref{mpu_orthogonality},\ref{mpo_post_sel_sym}),
\begin{align} \label{mpu_form}
\begin{tikzpicture}[scale=0.5,baseline={([yshift=-0.65ex] current bounding box.center) }]
        \GTensor{0,0}{1.2}{.6}{\small $O$}{5};
\end{tikzpicture}
=\begin{tikzpicture}[scale=0.5,baseline={([yshift=0.3ex] current bounding box.center) }]
        \draw[very thick, draw=red] (-1.6,0) -- (1.6,0);
        \Unitary{0,0}{1.2}{.6}{\small $U$}{0};
        \draw[very thick] (0+1.2,1.2) -- (0+1.2,2.4);
        \draw[very thick] (0-1.2,1.2) -- (0-1.2,-\singledx);
        \GTensor{0-1.2,-\singledx}{1.2}{0.6}{\small $\tilde{U}$}{3};
    \end{tikzpicture}
\end{align}
where \(U\) is a unitary satisfying the MF symmetry constraints
\begin{align}\label{mpo_purification_sym}
\begin{tikzpicture}[scale=0.5,baseline={([yshift=-0.65ex] current bounding box.center) }]
        \Unitary{0,0}{1}{.6}{\small $U$}{0};
        \DTensor{-\singledx,0}{1.}{.6}{\small $P_i$}{0};
\end{tikzpicture}
\; = \; \begin{tikzpicture}[scale=0.5,baseline={([yshift=-0.65ex] current bounding box.center) }]
        \Unitary{0,0}{1}{.6}{\small $U$}{0};
        \DTensor{\singledx,0}{1.}{.5}{\small $P'_i$}{0};
        \UTensor{\singledx,1.2}{1.}{.5}{\small $U_{P_i}^\dag$}{2};
\end{tikzpicture}
\;, \quad \forall P_i
\end{align}
and \(\tilde{U}\) is a unitary determined by \(O\). Moreover, one can choose a fixed \(U\) satisfying Eqs.~(\ref{mpo_purification_sym}) and represent all the possible \(O\) under the same MF symmetry by varying \(\tilde{U}\).
\end{theorem}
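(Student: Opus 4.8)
The plan is to assemble the orthonormal isometries produced by Lemma~\ref{mpo_post_sel_lemma} into a single unitary and then strip off a universal piece with Schur's lemma. Recall that Lemma~\ref{mpo_post_sel_lemma} shows that fixing the bottom physical index $a$ turns $O$ into an isometry $V_a:\mathbb{C}^D\to\mathbb{C}^{dD}$ (from the left virtual leg to the top physical and right virtual legs), that each $V_a$ carries the MF symmetry (\ref{mpo_iso_sym}), and that $V_a^\dagger V_b=\delta_{ab}I_D$ [the content of (\ref{mpu_iso_orthogonality}) together with the normalization (\ref{mpu_orthogonality})]. Since there are $d$ indices $a$, the ranges $\mathrm{Ran}(V_a)$ are $d$ mutually orthogonal $D$-dimensional subspaces of the $dD$-dimensional space $\mathbb{C}^{dD}$, so they exhaust it. Consequently the map defined by $W(\ket{\psi}\otimes\ket{a}):=V_a\ket{\psi}$ is an isometry with full range, i.e.\ a unitary $W:\mathbb{C}^D_{\mathrm{Lv}}\otimes\mathbb{C}^d_{a}\to\mathbb{C}^d_{\mathrm{top}}\otimes\mathbb{C}^D_{\mathrm{Rv}}$. (For an isometric rather than unitary MPO one embeds into a larger top space and pads; the rest of the argument is unchanged.) Transcribing the symmetry of the $V_a$, the unitary $W$ obeys $W(P_i\otimes I)=(U_{P_i}^\dagger\otimes P_i')W$ for all $P_i$, which is exactly the operator form of (\ref{mpo_purification_sym}).

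Next I would fix the reference unitary. Since the hypotheses are assumed consistent (at least one admissible $O$ exists), the construction above already exhibits a unitary obeying (\ref{mpo_purification_sym}); I designate any such one as $U$ and regard it as fixed, depending only on the symmetry data $(P_i,P_i',U_{P_i})$ and not on the particular $O$. When $\{P_i\}$ is the Weyl--Heisenberg group, Lemma~\ref{lemma_clifford_purification} lets one take this fixed $U$ to be Clifford, making the teleportation mechanism manifest. Given the $W$ attached to an arbitrary admissible $O$, set $T:=U^\dagger W$, a unitary $\mathbb{C}^D_{\mathrm{Lv}}\otimes\mathbb{C}^d_{a}\to\mathbb{C}^D_{\mathrm{Lv}}\otimes\mathbb{C}^d_{\mathrm{int}}$. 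Using the adjoint form $U(P_i\otimes I)U^\dagger=U_{P_i}^\dagger\otimes P_i'$ of (\ref{mpo_purification_sym}), $T$ intertwines $P_i\otimes I$ on its domain with $P_i\otimes I$ on its codomain:
\begin{align}
T(P_i\otimes I)=U^\dagger(U_{P_i}^\dagger\otimes P_i')W=(P_i\otimes I)U^\dagger W=(P_i\otimes I)T .
\end{align}
Because $\{P_i\}$ is irreducible (a fact following from completeness of the MF basis and used already in Lemma~\ref{lemma_cf}), the commutant of $P_i\otimes I$ on $\mathbb{C}^D\otimes\mathbb{C}^d$ is $I_D\otimes\mathrm{End}(\mathbb{C}^d)$, so Schur's lemma forces $T=I_D\otimes\tilde U$ for a unitary $\tilde U:\mathbb{C}^d_{a}\to\mathbb{C}^d_{\mathrm{int}}$. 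Hence $W=U\,(I_D\otimes\tilde U)$, which is precisely the factorization (\ref{mpu_form}): apply $\tilde U$ to the bottom physical leg and feed the result, together with the left virtual leg, into the fixed $U$.

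The closing assertion then follows by reading the last step in both directions with $U$ held fixed. Every admissible $O$ yields a unique $\tilde U$, namely the Schur block of $U^\dagger W$. Conversely, for any unitary $\tilde U$ the operator $W:=U(I_D\otimes\tilde U)$ is unitary and satisfies $W(P_i\otimes I)=(U_{P_i}^\dagger\otimes P_i')W$ --- since $I_D\otimes\tilde U$ commutes with $P_i\otimes I$ and $U$ already carries the symmetry --- so it defines a legitimate MPO tensor with the prescribed MF symmetry. This establishes a bijection between unitaries $\tilde U$ and MF-symmetric tensors $O$ at fixed $U$, i.e.\ a single $U$ represents all such $O$, with the continuous (and possibly magic) freedom residing entirely in $\tilde U$.

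I expect the main obstacle to be the bookkeeping that converts the diagrammatic identities of Lemma~\ref{mpo_post_sel_lemma} into the clean operator statements ``$V_a^\dagger V_b=\delta_{ab}I_D$'' and ``$W$ is unitary with the stated intertwining relation,'' i.e.\ correctly matching the leg orientations (which virtual and physical legs are inputs versus outputs) so that the relevant commutant is $I_D\otimes\mathrm{End}(\mathbb{C}^d)$ and not an algebra acting on the wrong tensor factor. The Schur step itself is routine once the representation on each factor is pinned down; the genuine content is the identification of spaces and the dimension count that upgrades the family $\{V_a\}$ from isometries to a single unitary $W$.
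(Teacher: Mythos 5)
Your proposal is correct and takes essentially the same approach as the paper: both proofs assemble the orthonormal isometries $V_a$ of Lemma~\ref{mpo_post_sel_lemma} into a single unitary satisfying Eq.~\eqref{mpo_purification_sym}, and then apply Schur's lemma (via irreducibility of $\{P_i\}$, acting only on the virtual factor) to the product of two such unitaries to conclude it equals $I_D\otimes\tilde{U}$. The differences are presentational only --- the paper compares unitaries $U$, $U'$ built from two tensors $O$, $O'$ whereas you fix one reference $U$ and compare the assembled $W$ against it, and your explicit commutant statement and converse check (every unitary $\tilde{U}$ yields an admissible $O$) are harmless strengthenings of what the paper leaves implicit.
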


\begin{proof}
    We will construct \(U\) and \(\tilde{U}\) explicitly. We take any \(O\) satisfying the MF symmetry Eq. (\ref{mpo_post_sel_sym}). We then define \(U\) as
\begin{align}
\begin{tikzpicture}[scale=0.5,baseline={([yshift=-0.65ex] current bounding box.center) }]
        \Unitary{0,0}{1}{.6}{\small $U$}{0};
        \draw (-1.5,1.2) node {\small $a$};
\end{tikzpicture}
=\begin{tikzpicture}[scale=0.5,baseline={([yshift=-0.65ex] current bounding box.center) }]
        \Unitary{0,0}{1}{.6}{\small $V_a$}{1};
\end{tikzpicture}
=\begin{tikzpicture}[scale=0.5,baseline={([yshift=0ex] current bounding box.center) }]
        \GTensor{0,0}{1.}{.6}{\small $O$}{5};
        \draw[very thick,black] (0,1) -- (1,1);
        \draw (0,-1.2) node {\small $a$};
\end{tikzpicture}
\end{align}
Where we introduce a basis labeled by \(a\) in the top-left leg, and \(V_a\) are the isometries as defined in Lemma~\ref{mpo_post_sel_lemma}. Naturally \(U\) satisfies Eqs.~(\ref{mpo_purification_sym}) because all \(V_a\) satisfy the same symmetry. Hence, it remains to show that \(U\) is a unitary, which is true since
\begin{align}
\begin{tikzpicture}[scale=0.5,baseline={([yshift=-0.65ex] current bounding box.center) }]
        \Unitary{0,0}{1.2}{.6}{\small $U$}{0};
        \Unitary{\singledx,0}{1.2}{.6}{\small $U^\dag$}{0};
        \draw (-1.5,1.2) node {\small $a$};
        \draw (\singledx+1.5,1.2) node {\small $b$};
\end{tikzpicture}
    =\begin{tikzpicture}[scale=0.5,baseline={([yshift=-0.65ex] current bounding box.center) }]
        \Unitary{0,0}{1}{.6}{\small $V_a$}{1};
        \begin{scope}[shift={(\singledx,0)},xscale=-1,yscale=1]
            \Unitary{0,0}{1}{.6}{\small $V_b^\dag$}{1};
        \end{scope}
\end{tikzpicture}
= \delta_{ab} \; \begin{tikzpicture}[scale=0.5,baseline={([yshift=-0.65ex] current bounding box.center) }]
        \draw[very thick, draw=red](-1.2,0)--(1.2,0);
\end{tikzpicture}
.
\end{align}
Now we show that any other MPO \(O'\) satisfying the same MF symmetry can be written as a combination of \(U\) and \(\tilde{U}\). Given any other \(O'\) satisfying the same MF symmetry, we carry out the same procedure and construct another unitary \(U'\). In other words,
\begin{align}
\begin{tikzpicture}[scale=0.5,baseline={([yshift=-0.65ex] current bounding box.center) }]
        \Unitary{0,0}{1}{.6}{\small $U'$}{0};
        \draw (-1.5,1.2) node {\small $a$};
\end{tikzpicture}
=\begin{tikzpicture}[scale=0.5,baseline={([yshift=0.2ex] current bounding box.center) }]
        \GTensor{0,0}{1.}{.6}{\small $O'$}{5};
        \draw[very thick,black] (0,1) -- (1,1);
        \draw (0,-1.2) node {\small $a$};
\end{tikzpicture}
.
\end{align}
Because \(O'\) satisfies the same MF symmetry, \(U'\) satisfies Eqs. (\ref{mpo_purification_sym}) as well. Now we contract \(U'U^\dag\) and apply the MF symmetry.
\begin{align}
    \begin{tikzpicture}[scale=0.5,baseline={([yshift=-0.65ex] current bounding box.center) }]
        \Unitary{0,0}{1.2}{.6}{\small $U'$}{0};
        \Unitary{\singledx,0}{1.2}{.6}{\small $U^\dag$}{0};
\end{tikzpicture}
= \begin{tikzpicture}[scale=0.5,baseline={([yshift=-0.65ex] current bounding box.center) }]
        \DTensor{-\singledx,0}{1.2}{.6}{\small $P_i$}{0};
        \DTensor{2*\singledx,0}{1.2}{.6}{\small $P_i^\dag$}{0};
        \Unitary{0,0}{1.2}{.6}{\small $U'$}{0};
        \Unitary{\singledx,0}{1.2}{.6}{\small $U^\dag$}{0};
\end{tikzpicture}
,\; \forall P_i.
\end{align}
Therefore, by Schur's lemma,
\begin{align}
    \begin{tikzpicture}[scale=0.5,baseline={([yshift=-0.65ex] current bounding box.center) }]
        \Unitary{0,0}{1.2}{.6}{\small $U'$}{0};
        \Unitary{\singledx,0}{1.2}{.6}{\small $U^\dag$}{0};
\end{tikzpicture}
= \begin{tikzpicture}[scale=0.5,baseline={([yshift=-0.65ex] current bounding box.center) }]
        \GTensor{0,0}{1.2}{.6}{\small $\tilde{U}$}{4};
        \draw[very thick, draw=red](-1.2,-1.2)--(1.2,-1.2);
\end{tikzpicture}
\end{align}
Where the constant multiplier has to be 1 to agree with the unitarity of \(U' U^\dag\). As a result, \(O'\) takes the following form~\eqref{mpu_form}.
\end{proof}

The above theorem implies that the MF symmetry determines a class of \(U\) possessing the Clifford-like symmetry~(\ref{mpo_purification_sym}). Moreover, while there exists a continuous family of MF MPOs, they are related by a local unitary \(\tilde{U}\). Hence, the MPO can be written as one layer of arbitrary local rotation and Clifford-like teleportation.
\begin{align}
&\begin{tikzpicture}[scale=0.5,baseline={([yshift=-0.65ex] current bounding box.center) }]
        \GTensor{0,0}{1.}{.6}{\small $O_1$}{5};
        \GTensor{\singledx,0}{1.}{.6}{\small $O_2$}{5};
        \GTensor{2*\singledx,0}{1.}{.6}{\small $O_3$}{5};
\end{tikzpicture}\\
    &=\begin{tikzpicture}[scale=0.5,baseline={([yshift=-0.65ex] current bounding box.center) }]
        \Unitary{0,0}{1.2}{.6}{\small $U_{1}$}{0};
        \Unitary{2*\singledx,0}{1.2}{.6}{\small $U_{2}$}{0};
        \Unitary{4*\singledx,0}{1.2}{.6}{\small $U_{3}$}{0};
        \draw[very thick, draw=red] (0+1,0) -- (2*\singledx-1,0);
        \draw[very thick, draw=red] (2*\singledx+1,0) -- (4*\singledx-1,0);
        \draw[very thick] (0+1.2,1.2) -- (0+1.2,2.4);
        \draw[very thick] (2*\singledx+1.2,1.2) -- (2*\singledx+1.2,2.4);
        \draw[very thick] (4*\singledx+1.2,1.2) -- (4*\singledx+1.2,2.4);
        \draw[very thick, draw=red] (4*\singledx+1,0) -- (4*\singledx+2.,0);
        \draw[very thick, draw=red] (-1,0) -- (-2.,0);
        \draw[very thick] (0-1.2,1.2) -- (0-1.2,-\singledx);
        \draw[very thick] (2*\singledx-1.2,1.2) -- (2*\singledx-1.2,-\singledx);
        \draw[very thick] (4*\singledx-1.2,1.2) -- (4*\singledx-1.2,-\singledx);
        \GTensor{0-1.2,-\singledx}{1.2}{0.6}{\small $\tilde{U}_1$}{3};
        \GTensor{2*\singledx-1.2,-\singledx}{1.2}{0.6}{\small $\tilde{U}_2$}{3};
        \GTensor{4*\singledx-1.2,-\singledx}{1.2}{0.6}{\small $\tilde{U}_3$}{3};
    \end{tikzpicture}
\end{align}
The above structure also implies that the MPO has to be an isometry under the open-boundary condition. If the leftmost virtual legs at the boundary is considered as input and the rightmost as output, the resulting MPO is a unitary staircase. In the case of periodic boundary conditions, one have to perform post-selection, and hence the MPO need not be unitary. 

Lastly, we note that the above discussion can be extended to 2D as well. Rather than having an MPO, in 2D we can implement projected entangled pair operators (PEPOs) using MF. By the same analogy, the PEPOs are equivalent to one layer of local unitaries and then a layer of Clifford-like teleportation on a 2D grid.

\section{Discussion}
In this work, we systematically investigate the properties of states preparable using MF. In the language of tensor networks, preparability under MF translates to symmetry constraints of the MPS tensor in one dimension and PEPS tensor in two dimensions. We characterize the universal structures of such tensors and present the underlying Clifford-like structure and the source of magic. In addition, we provide analytic solutions to states with physically-motivated MF symmetries and discuss their connection to SPT order in one dimension and topological order in two dimensions. Lastly, we extend the formalism to implementing operators using MF and provide a similar structural theorem to analyze the Clifford-like structure and the magic.

This work opens up new windows to investigate the power of MF.
First, we have provided analytic solutions to some MF states with physically-motivated symmetries which form continuous families, so it would be interesting to investigate these families further. Some questions include what phases can the family belong to, and if one can find a phase transition in the MF family.
Second, it is known that performing multiple rounds of MF can prepare additional states such as the nonabelian topological order characterized by solvable groups \cite{tantivasadakarn2023hierarchy,li2023symmetry}. One would thus wonder if there are other type of states preparable using multiple rounds of MF, and if so what are their properties. From the perspective of symmetries of the tensor network, preparing a state using multiple rounds of MF amounts to including additional symmetry constraints, so we expect similar technique to be applicable even in the case of performing multiple rounds of MF. 
%

\emph{Note added}.---During the completion of this manuscript, four related works appeared \cite{smith2024constant,sahay2024classifying,sahay2024finite,stephen2024preparing} which discuss state preparations using MF from various perspectives. Our results agree with them where they overlap.

\begin{acknowledgments}
We would like to thank Tsung-Cheng Peter Lu, Grace Sommers, Marten Folkertsma, Norbert Schuch, and J.~Ignacio Cirac for useful discussions. S.G. and Y.Z. acknowledge support from NSF QuSEC-TAQS OSI 2326767.
\end{acknowledgments}

\appendix

\section{Existence of Clifford Purification}\label{existance_clifford_purification}
In this section we prove Lemma \ref{lemma_clifford_purification}, in other words discuss why any Clifford-like isometry in Theorem \ref{structural_theorem_2} admits a Clifford unitary purification. We borrow the techniques from \cite{hostens2005stabilizer}. Consider a system with \(n\) number of \(d\)-dimensional qudits and define the following notations.
\begin{align}
    a &=(v_1, v_2,...,v_n,w_1,w_2,...,w_n) \in  \mathbb{Z}_d^{2n}\\
    XZ(a)&=X_1^{v_1}Z_1^{w_1} \otimes X_2^{v_2}Z_2^{w_2} \otimes ... \otimes X_n^{v_n}Z_n^{w_n}
\end{align}
Where \(X_k\) and \(Z_k\) are the Weyl-Heisenberg shift and clock operators of qudit \(a\). In this way, the vector \(a\) uniquely labels an element up to a phase. 

A multiqudit Clifford unitary is specified by its action on the generator of the multiqudit Weyl-Heisenberg group. 
\begin{align}
    U_C X_k U_C^\dag &= e^{i h_k \pi /2d}XZ(C_k)\\
    U_C Z_l U_C^\dag &= e^{i h_l \pi /2d}XZ(C_l)
\end{align}
Not that the phase is the \(2d\)-th root of unity in general. We let \(C_k, C_l\) be the column vectors of a matrix \(C \in \mathbb{Z}_d^{2n \times 2n}\) and let \(h_k, h_l\) be the elements of a vector \(h \in \mathbb{Z}_{2d}^{2n}\). In \(C\) and \(h\) completely specifies the action of a linear map, but it does not necessarily have to be a unitary. Crucially, Ref. \cite{hostens2005stabilizer} proves that \(C\) and \(h\) unique specifies a Clifford unitary if they satisfy certain conditions.
\begin{theorem}
(Reproduced from \cite{hostens2005stabilizer} section II, III) Let the linear map defined by \(C\) and \(h\) to be \(\mathcal{E}\). If it satisfies the following properties, then there exist a Clifford unitary \(U_C\) such that \(U_C^\dag \mathcal{E}(XZ(a)) U_C = XZ(a), \forall a\).
\begin{enumerate}
    \item \(\mathcal{E}\) preserves the commutation relation between Weyl-Heisenberg group elements. Specifically,
    \begin{align}
    \mathcal{E}(Z_a)\mathcal{E}(Z_b) &= \mathcal{E}(Z_b)\mathcal{E}(Z_a), \forall a, b \\
    \mathcal{E}(X_a)\mathcal{E}(X_b) &= \mathcal{E}(X_b)\mathcal{E}(X_a), \forall a, b \\
    \mathcal{E}(Z_a)\mathcal{E}(X_b) &= \mathcal{E}(Z_b)\mathcal{E}(X_a), \forall a \neq b \\
    \mathcal{E}(Z_a)\mathcal{E}(X_a) &= e^{i 2\pi /d}\mathcal{E}(Z_a)\mathcal{E}(Z_a), \forall a
    \end{align}
    \item \(\mathcal{E}\) preserves the order of elements. Specifically, given that \(Z_a^d=I\), \(X_a^d=I\), it suffices to have
    \begin{align}
        \mathcal{E}(Z_a)^d&=I, \forall a\\
        \mathcal{E}(X_a)^d&=I, \forall a
    \end{align}
\end{enumerate}
\end{theorem}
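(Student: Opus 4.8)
The plan is to realize the linear map $\mathcal E$ explicitly as conjugation by a product of elementary Clifford gates, and then to pin down the phase by a Pauli correction. First I would recast the two hypotheses geometrically. Writing the commutation pairing as $\langle a,b\rangle$, so that $XZ(a)\,XZ(b)=\omega^{\langle a,b\rangle}\,XZ(b)\,XZ(a)$ with $\omega=e^{i2\pi/d}$, condition~1 says precisely that the matrix $C$ preserves this symplectic form, i.e.\ $C\in\mathrm{Sp}(2n,\mathbb Z_d)$. Condition~2, preservation of the \emph{order} of each generator, is the consistency constraint on the phase vector $h$ that guarantees $\mathcal E$ actually descends to a map on the Weyl--Heisenberg group (a projective representation valued in $d$-th roots of unity) rather than only on its $2d$-fold cover; this is the subtle point, since the phases $h$ a priori live in $\mathbb Z_{2d}$.

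Next I would exploit the standard fact that $\mathrm{Sp}(2n,\mathbb Z_d)$ is generated by the elementary symplectic transformations associated with a finite set of single- and two-qudit Clifford gates: the Fourier gate $F$ (sending $X\mapsto Z$, $Z\mapsto X^{-1}$), the quadratic phase gate, the invertible multipliers $M_\lambda:\ket{x}\mapsto\ket{\lambda x}$, and the two-qudit SUM gate. Factoring $C$ into these generators and taking the corresponding product of Clifford unitaries produces a candidate $U_0$ satisfying $U_0\,XZ(a)\,U_0^\dagger=\mu_a\,XZ(Ca)$, which reproduces the linear part of $\mathcal E$ up to residual phases $\mu_a$.

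Finally I would absorb the mismatch between $\mu_a$ and the phases prescribed by $h$ into a Pauli correction. Because both the target and $U_0$ respect the symplectic form (condition~1), this mismatch is an additive character $a\mapsto\omega^{\langle t,a\rangle}$ of the phase space, and such a character is exactly the signature of conjugation by the fixed Weyl--Heisenberg element $XZ(t)$. Setting $U_C=XZ(t)\,U_0$ then gives $U_C\,XZ(a)\,U_C^\dagger=\mathcal E(XZ(a))$ for all $a$, equivalently $U_C^\dagger\,\mathcal E(XZ(a))\,U_C=XZ(a)$. The hard part is the phase bookkeeping in this last step for even $d$: there one must check that condition~2 is exactly strong enough to trivialize the $\mathbb Z_{2d}/\mathbb Z_d$ obstruction, so that the residual phase is genuinely a $d$-th root of unity realizable by a Pauli, rather than a half-integer phase that would force a non-Clifford (or even non-unitary) correction. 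Verifying this compatibility is the technical core, and is where I would lean on the explicit normal-form computation of Ref.~\cite{hostens2005stabilizer}.
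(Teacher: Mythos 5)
Your proposal is correct and takes essentially the same route as the paper, which simply reproduces the constructive argument of Ref.~\cite{hostens2005stabilizer}: realize the symplectic part $C$ by a product of elementary one- and two-qudit Clifford gates (your appeal to generation of $\mathrm{Sp}(2n,\mathbb{Z}_d)$ is the abstract counterpart of the paper's row-reduction algorithm that diagonalizes $C$), and then fix the phases by a final Weyl--Heisenberg layer, which is exactly your Pauli correction $XZ(t)$ setting $h=0$. Your observation that multiplicativity of the residual phase mismatch forces it to be a $d$-th root of unity (so a Pauli suffices) is a clean packaging of the phase bookkeeping that both you and the paper ultimately delegate to Ref.~\cite{hostens2005stabilizer}.
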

The proof is based on an explicit algorithm that sequentially applies single-qubit and two-qubit gates which act as row operations on \(C\) to diagonalize \(C\). Then, a final layer of Weyl-Heisenberg unitaries is applied to set \(h=0\). Reversing the unitaries generated by the algorithm give rises to \(U_C\). 

Now, the isometry with Clifford-like symmetries in Theorem~\ref{structural_theorem_2} defines a map from \(P_i\) to \((P_i^\dag \otimes P_i'^T \otimes P_i'^\dag)\). This map 
can be considered as two columns of \(C\) and two elements of \(h\) which specifies the action of \(\mathcal{E}(Z_1)\) and \(\mathcal{E}(X_1)\). 
Importantly, the above map satisfies the two constraints. To verify the first constraint, take \(P_a\) and \(P_b\) to be \(X\) or \(Z\). We apply the symmetry of \(P_a\) first and then apply the \(P_b\):
\begin{align}
\begin{split}
    \begin{tikzpicture}[scale=0.5,baseline={([yshift=-0.65ex] current bounding box.center) }]
        \FUnitary{0,0}{1.2}{.6}{\small $V_Q$}{1};
\end{tikzpicture}
=\begin{tikzpicture}[scale=0.5,baseline={([yshift=-0.65ex] current bounding box.center) }]
        \FUnitary{0,0}{1.2}{.6}{\small $V_Q$}{1};
        \DTensor{-\singledx,0}{1.2}{.6}{\small $P_a$}{0};
        \DTensor{\singledx,0}{1.2}{.5}{\small $P_a'^\dag$}{0};
        \DTensor{\singledx,1.2}{1.2}{.5}{\small $P_a'^T$}{0};
        \DTensor{\singledx,2.4}{1.2}{.5}{\small $P_a^\dag$}{0};
\end{tikzpicture}\\
=\begin{tikzpicture}[scale=0.5,baseline={([yshift=-0.65ex] current bounding box.center) }]
        \FUnitary{0,0}{1.2}{.6}{\small $V_Q$}{1};
        \DTensor{-2*\singledx,0}{1.2}{.6}{\small $P_a$}{0};
        \DTensor{-\singledx,0}{1.2}{.6}{\small $P_b$}{0};
        \DTensor{2*\singledx,0}{1.2}{.5}{\small $P_a'^\dag$}{0};
        \DTensor{2*\singledx,1.2}{1.2}{.5}{\small $P_a'^T$}{0};
        \DTensor{2*\singledx,2.4}{1.2}{.5}{\small $P_a^\dag$}{0};
        \DTensor{\singledx,0}{1.2}{.5}{\small $P_b'^\dag$}{0};
        \DTensor{\singledx,1.2}{1.2}{.5}{\small $P_b'^T$}{0};
        \DTensor{\singledx,2.4}{1.2}{.5}{\small $P_b^\dag$}{0};
\end{tikzpicture}
\end{split}
\end{align}
we can also apply the symmetry of \(P_b\) first and then apply the \(P_a\) and the results should be equal. Also
\begin{align}
\begin{split}
    \begin{tikzpicture}[scale=0.5,baseline={([yshift=-0.65ex] current bounding box.center) }]
        \FUnitary{0,0}{1.2}{.6}{\small $V_Q$}{1};
\end{tikzpicture}
=\begin{tikzpicture}[scale=0.5,baseline={([yshift=-0.65ex] current bounding box.center) }]
        \FUnitary{0,0}{1.2}{.6}{\small $V_Q$}{1};
        \DTensor{-\singledx,0}{1.2}{.6}{\small $P_b$}{0};
        \DTensor{\singledx,0}{1.2}{.5}{\small $P_b'^\dag$}{0};
        \DTensor{\singledx,1.2}{1.2}{.5}{\small $P_b'^T$}{0};
        \DTensor{\singledx,2.4}{1.2}{.5}{\small $P_b^\dag$}{0};
\end{tikzpicture}\\
=\begin{tikzpicture}[scale=0.5,baseline={([yshift=-0.65ex] current bounding box.center) }]
        \FUnitary{0,0}{1.2}{.6}{\small $V_Q$}{1};
        \DTensor{-2*\singledx,0}{1.2}{.6}{\small $P_b$}{0};
        \DTensor{-\singledx,0}{1.2}{.6}{\small $P_a$}{0};
        \DTensor{2*\singledx,0}{1.2}{.5}{\small $P_b'^\dag$}{0};
        \DTensor{2*\singledx,1.2}{1.2}{.5}{\small $P_b'^T$}{0};
        \DTensor{2*\singledx,2.4}{1.2}{.5}{\small $P_b^\dag$}{0};
        \DTensor{\singledx,0}{1.2}{.5}{\small $P_a'^\dag$}{0};
        \DTensor{\singledx,1.2}{1.2}{.5}{\small $P_a'^T$}{0};
        \DTensor{\singledx,2.4}{1.2}{.5}{\small $P_a^\dag$}{0};
\end{tikzpicture}
\end{split}
\end{align}
From there, one can immediately conclude that if \(P_a P_b = \omega P_b P_a\), and \((P_b^\dag \otimes P_b'^T \otimes P_b'^\dag)(P_a^\dag \otimes P_a'^T \otimes P_a'^\dag) = \omega' (P_a^\dag \otimes P_a'^T \otimes P_a'^\dag)(P_b^\dag \otimes P_b'^T \otimes P_b'^\dag)\), then \(V = \omega \omega' V\), so \(\omega' = \omega^*\) otherwise \(V\) is trivial. To verify the second constraint, we take \(P_a\) to be \(X\) or \(Z\) and apply the symmetry \(d\) times.
\begin{align}
\begin{split}
    \begin{tikzpicture}[scale=0.65,baseline={([yshift=-0.65ex] current bounding box.center) }]
        \FUnitary{0,0}{1.2}{.6}{\small $V_Q$}{1};=
\end{tikzpicture}
=\begin{tikzpicture}[scale=0.65,baseline={([yshift=-0.65ex] current bounding box.center) }]
        \FUnitary{0,0}{1.2}{.6}{\small $V_Q$}{1};
        \DTensor{-\singledx,0}{1.2}{.6}{\small $P_a^d$}{0};
        \DTensor{\singledx,0}{1.2}{.5}{\small $P_a'^{\shortminus d}$}{0};
        \DTensor{\singledx,1.2}{1.2}{.5}{\small $P_a'^{Td}$}{0};
        \DTensor{\singledx,2.4}{1.2}{.5}{\small $P_a^{\shortminus d}$}{0};
\end{tikzpicture}
\end{split}
\end{align}
Since we take \(P_a\) to be \(X\) or \(Z\), \(P_a^d = I\). Because any Pauli group elements has at most an order \(d\), \((P_a^\dag \otimes P_a'^T \otimes P_a'^\dag) = \omega I\) where \(\omega\) is some phase. From the above equation it follows that \(V=\omega V\), so \(\omega=1\) or \(V\) is zero. Hence, the map from \(P_i\) to \((P_i^\dag \otimes P_i'^T \otimes P_i'^\dag)\) satisfies the two constraints.


Returning to the tableau formalism, we only have to diagonalize two columns of \(C\) and set two elements of \(h\) to zero. This is strictly easier than the original problem so the algorithm can always succeed. Hence, one can always find a \(U_C\) that undoes the action of the isometry. In fact \(U_C\) is non-unique because we only have to diagonalize two columns of \(C\) and set two elements of \(h\) to zero, and the rest of \(C\) and \(h\) can be arbitrarily specified.

Lastly, if the above theorem can be generalized to arbitrary MF group, then one can prove the existence of unitaries that do not generate operator entanglement in the MF operator basis, analogous to the Clifford unitary. We leave this to the future work.

\section{Examples of MF bases}\label{mf_basis_ex}

In this section, we discuss some examples of the MF basis. We begin with discussing the case where the MF basis forms a group. We have seen the case of Weyl-Heisenberg group in \(d\) dimensions. When the dimension \(d\) is composite, for example \(d=d_1d_2\), then the MF basis can be chosen as the tensor product of two Weyl-Heisenberg groups of dimension \(d_1\) and \(d_2\), respectively. Specifically, denote the generator of the two Weyl-Heisenberg groups to be \(X_{d_1}\), \(Z_{d_1}\), \(X_{d_2}\), \(Z_{d_2}\). Then the MF group is generated by
\begin{align}
    X_{d_1} \otimes I,  Z_{d_1} \otimes I,  I \otimes X_{d_2}, I \otimes Z_{d_2}
\end{align}
The above example can be considered as a tensor product of a \(d_1\)- and \(d_2\)-qudit, and ``gluing" in the MF basis is equivalent to ``gluing" pairs of \(d_1\)- and \(d_2\)-qudits individually. Nonetheless, this is not the only option. Another option is the following set of the MF group generator.
\begin{align}
    X_{d_1} \otimes I, I \otimes X_{d_2}, I \otimes Z_{d}
\end{align}
where the \(X\) type generators are inherited from the \(d_1\)- and \(d_2\)-qudit, but the \(Z\) type generator is the clock operation \(Z_d\) of the \(d\) dimensional Weyl-Heisenberg group. 

Lastly, we point out that the MF basis need not form a group. Ref. \cite{werner2001all} gives an explicit family of MF basis, inspired by the clock and the shift operator. The MF operator is written as \(U_{ij}\) where \(i\) and \(j\) label the operator takes values from 0 to \(d-1\). \(U_{ij}\) is defined as
\begin{align}
    U_{ij} \ket{k} = H^j_{ik} \ket{\lambda(j,k)} \;.
\end{align}
Here, \(H^j\) are Hadamard matrices which means \(|H^j_{ik}|=1, \forall i,j,k\) and \(H^jH^{j \dag} = d I, \forall j\). \(\lambda(j,k)\) defines a Latin square, namely the maps \(k \mapsto \lambda(j,k)\) and \(k \mapsto \lambda(k,j)\) are injective for every \(k\). When the MF basis forms a group, the Latin square is exactly the Cayley table of the shift-like operators, but in general the operations defined in the Latin square are not closed under multiplication. That means there exist MF basis that does not form a group. We leave the investigation preparing MPS and PEPS using non-group MF basis to the future work.

\section{Insufficiency of MF symmetry to generate topological order}\label{topo_insufficiency}
In this section, we discuss why Eqs. (\ref{topo_sym}) alone does not guarantee the topological order. As a simple counterexample, consider the following state:
\begin{align}
    Q
    =\begin{tikzpicture}[scale=0.5,baseline={([yshift=-0.65ex] current bounding box.center) }]
            \PTensor{-\singledx,0}{1.5}{.6}{\small $I$}{3};
            \PTensor{\singledx,0}{1.5}{.6}{\small $I$}{1};
            \PTensor{0,-\singledx}{1.5}{.6}{\small $I$}{2};
            \PTensor{0,\singledx}{1.5}{.6}{\small $I$}{0};
            \RedArrow{-0.8-\singledx,0}{0.2}{1}
            \RedArrow{1.0+\singledx,0}{0.2}{1}
            \RedArrow{0,1.0+\singledx}{0.2}{0}
            \RedArrow{0,-0.8-\singledx}{0.2}{0}
    \end{tikzpicture}
\end{align}
This tensor results in products of Bell pairs when contracted into a two-dimensional array, which obviously does not exhibit topological order. In fact, it is known that an injective PEPS cannot support any topological order \cite{schuch2010peps,cirac2021matrix}. Without Eqs. (\ref{topo_more_sym}), the family of solution in Theorem \ref{topo_structural_theorem} is mostly injective except for a measure-zero subset. Hence, one would need to impose additional constraints to ensure non-injectivity. Indeed, imposing Eqs. (\ref{topo_more_sym}) is sufficient to ensure non-injectivity.
\begin{observation}
Under Eqs. (\ref{topo_more_sym}), with a non-trivial subgroup \(\{M\}\), \(A\) cannot be injective.
\end{observation}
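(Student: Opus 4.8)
The plan is to read the symmetry Eq.~(\ref{topo_more_sym}) as a right action on the virtual space and exhibit a kernel. Viewing the PEPS tensor $A$ as a linear map $A:(\mathbb{C}^D)^{\otimes 4}\to\mathbb{C}^d$ from the four virtual legs to the physical leg, Eq.~(\ref{topo_more_sym}) says precisely that $A = A\,U$, where $U = e^{i\phi}\,\mathcal{M}$ and $\mathcal{M}$ is the tensor product of $M$ and $M^\dagger$ (with the transposes/conjugations dictated by the leg orientations) acting on the four virtual legs only. Crucially there is no operator on the physical leg, so $U$ acts entirely within the domain of $A$; and since each $M$ is unitary, $\mathcal{M}$ and hence $U$ are unitary on $(\mathbb{C}^D)^{\otimes 4}$.

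First I would extract a kernel from the relation $A=AU$. For any eigenvector $\ket{v}$ of $U$ with eigenvalue $\lambda$ one has $A\ket{v}=AU\ket{v}=\lambda\,A\ket{v}$, so $(1-\lambda)A\ket{v}=0$; whenever $\lambda\neq 1$ this forces $\ket{v}\in\ker A$. Thus it suffices to prove that $U$ possesses an eigenvalue different from $1$, equivalently (as $U$ is unitary, hence diagonalizable with unit-modulus spectrum) that $U\neq I$.

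Next I would rule out $U=I$ using the non-triviality of $\{M\}$. If $U=I$ then $\mathcal{M}=e^{-i\phi}I$; but a product $M^{(1)}\otimes M^{(2)}\otimes M^{(3)}\otimes M^{(4)}$ of nonzero operators is proportional to the identity only if each factor is, so this would require $M\propto I$. Since $\{M\}$ is a non-trivial subgroup of the (irreducible) MF group, it contains some $M$ that is not proportional to the identity, and the operations $\dagger,T,*$ preserve this property; hence $\mathcal{M}\not\propto I$ and in particular $U\neq I$. Combined with the previous step, $\ker A\neq\{0\}$ and $A$ is not injective.

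Finally, to make the statement robust at the PEPS level I would note that the same argument survives blocking: applying Eq.~(\ref{topo_more_sym}) to every tensor of a finite region $R$, the operators on each internal bond appear as $M^\dagger M=I$ and cancel, leaving $\Psi_R=e^{iN\phi}\,\mathcal{M}_{\partial R}\,\Psi_R$ with $\mathcal{M}_{\partial R}$ a tensor product of non-scalar $M$'s over the boundary legs only. The identical reasoning then produces a kernel for the blocked map, so no amount of blocking renders $A$ injective. The only mildly delicate points are the bookkeeping of transposes/conjugations on the legs (which does not affect the conclusion, since $\dagger,T,*$ preserve non-proportionality to $I$) and the cancellation $M^\dagger M=I$ on internal bonds; neither is a real obstacle, so the argument is short once the symmetry is phrased as the right action $A=AU$.
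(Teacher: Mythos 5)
Your proof is correct and takes essentially the same route as the paper's: both read Eq.~(\ref{topo_more_sym}) as $A = e^{i\phi}A\,\mathcal{M}$ with the unitary $\mathcal{M}=M\otimes M\otimes M^{*}\otimes M^{*}$ acting only on the virtual legs, and both reduce the claim to the fact that a tensor product of unitaries is proportional to the identity only if every factor is scalar, which contradicts the non-triviality of $\{M\}$. The only differences are cosmetic --- the paper argues by contradiction, applying the left inverse $A^{-1}$ guaranteed by injectivity to obtain $I = e^{i\phi}\mathcal{M}$, while you directly exhibit a kernel vector as an eigenvector of $e^{i\phi}\mathcal{M}$ with eigenvalue $\neq 1$; your closing remark that the argument survives blocking is a correct additional strengthening, not a different method.
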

\begin{proof}
We prove by contradiction. Suppose \(A\) is injective, then there exists an inverse \(A^{-1}\) (when read from the four virtual legs to the physical leg). Eqs. (\ref{topo_more_sym}) can be written as
\begin{align}
    A = e^{i \phi} A (M \otimes M \otimes M^* \otimes M^*)
\end{align}
We now apply \(A^{-1}\) to the left to obtain
\begin{align}
    I = e^{i \phi} (M \otimes M \otimes M^* \otimes M^*)
\end{align}
That leaves the only choice to be \(M=I\) and \(\phi=0\), violating our assumption that \(\{M\}\) is non-trivial.
\end{proof}



\bibliography{apssamp}

\end{document}